
\documentclass[acmsmall, nonacm]{acmart}

\AtBeginDocument{%
  }

\setcopyright{acmlicensed}
\copyrightyear{2018}
\acmYear{2018}
\acmDOI{XXXXXXX.XXXXXXX}

\acmConference[Conference acronym 'XX]{Make sure to enter the correct
  conference title from your rights confirmation emai}{June 03--05,
  2018}{Woodstock, NY}
%
%
\acmISBN{978-1-4503-XXXX-X/18/06}


\usepackage[ruled, linesnumbered]{algorithm2e}

\usepackage{array} 

\usepackage{graphicx}
\usepackage{listings}

\usepackage{siunitx} 

\lstset{basicstyle=\ttfamily, breaklines=true}
\newcommand{\code}[1]{\lstinline!#1!}

\usepackage{tikz}
\usetikzlibrary{automata, positioning}

\usepackage{listings}

\theoremstyle{definition}

\newtheorem{theorem}{Theorem}[section]
\newtheorem{definition}{Definition}[section]

\newtheorem{lemma}{Lemma}[section]

\newtheorem{proposition}{Proposition}[section]

\newcommand{\callsym}[1]{\text{\guilsinglleft{\(#1\)}}}
\newcommand{\retsym}[1]{ \text{{\(#1\)}\guilsinglright}}

\newcommand{\oracle}{\mathcal{O}}
\newcommand{\true}{\text{True}}
\newcommand{\false}{\text{False}}

\usepackage{enumitem}





\newcommand{\ac}{\callsym {a}}
\newcommand{\bc}{\retsym{b}}


\newcommand{\ta}{{h_a}}
\newcommand{\tb}{{h_b}}

\usepackage{graphicx}

\newcommand{\stack}{T}

\newcommand{\ab}[1]{{\ac {#1} \bc}}

\newcommand{\plainsyms}{\Sigma_{\text{plain}}}
\newcommand{\callsyms}{\Sigma_{\text{call}}}
\newcommand{\retsyms}{\Sigma_{\text{ret}}}

\newcommand{\lang}{\mathcal{L}}

\newcommand{\automaton}{\mathcal{H}}
\newcommand{\vpa}{\mathcal{H}}
\newcommand{\fsa}{\mathcal{H}}

\newcommand{\glade}{\textsc{Glade}\xspace}
\newcommand{\arvada}{\textsc{Arvada}\xspace}
\newcommand{\lstar}{{\( L^* \)}\xspace}
\newcommand{\vstar}{{V-Star}\xspace}
\newcommand{\ttt}{{TTT}\xspace}
\newcommand{\reinam}{{REINAM}\xspace}

\newcommand{\ksevpa}{\( k \)-SEVPA\xspace}

\newcommand{\eqlstar}{\simeq}

\newcommand{\qc}{\mathcal{Q}}

\newcommand{\memquery}{\chi}
\newcommand{\eqquery}{\mathcal{E}}

\newcommand{\tk}{{\tau}}
\newcommand{\convert}{{\mathbf{conv}}}

\newcommand{\arta}{\triangleleft}
\newcommand{\artb}{\triangleright}

\newcommand{\edited}[1]{{\color{blue}{#1}}}
\newcommand{\secondEdited}[1]{{\color{blue}{#1}}}
\renewcommand{\edited}[1]{{#1}}
\renewcommand{\secondEdited}[1]{{#1}}




\citestyle{acmauthoryear}


\usepackage{math-cmds}

\begin{document}

\title{\vstar: Learning Visibly Pushdown Grammars from Program Inputs (Extended Version)}


\author{Xiaodong Jia}
\email{xxj34@psu.edu}
\orcid{0000-0003-2493-9111}
\author{Gang Tan}
\email{gtan@psu.edu}
\orcid{0000-0001-6109-6091}
\affiliation{%
  \institution{The Pennsylvania State University}
  \streetaddress{201 Old Main}
  \city{State College}
  \state{Pennsylvania}
  \country{USA}
  \postcode{16802}
}

\renewcommand{\shortauthors}{Jia and Tan, et al.}

\begin{abstract}
    Accurate description of program inputs remains a critical challenge in the field of programming languages. Active learning, as a well-established field, achieves exact learning for regular languages. We offer an innovative grammar inference tool, V-Star, based on the active learning of visibly pushdown automata. V-Star deduces nesting structures of program input languages from sample inputs, employing a novel inference mechanism based on nested patterns. This mechanism identifies token boundaries and converts languages such as XML documents into VPLs. We then adapted Angluin's L-Star, an exact learning algorithm, for VPA learning, which improves the precision of our tool. Our evaluation demonstrates that V-Star effectively and efficiently learns a variety of practical grammars, including S-Expressions, JSON, and XML, and outperforms other state-of-the-art tools.
\end{abstract}


\begin{CCSXML}
  <ccs2012>  
  <concept>  
   <concept_id>10003752.10010124.10010138.10010143</concept_id>  
   <concept_desc>Theory of computation~Program analysis</concept_desc>  
   <concept_significance>500</concept_significance>  
   </concept>  
  <concept>  
   <concept_id>10003752.10003766.10003771</concept_id>  
   <concept_desc>Theory of computation~Grammars and context-free languages</concept_desc>  
   <concept_significance>500</concept_significance>  
   </concept>  
 <concept>  
  <concept_id>10011007.10011074.10011092.10011782</concept_id>  
  <concept_desc>Software and its engineering~Automatic programming</concept_desc>  
  <concept_significance>500</concept_significance>  
 </concept>  
 </ccs2012>
\end{CCSXML}

\ccsdesc[500]{Theory of computation~Program analysis}
\ccsdesc[500]{Theory of computation~Grammars and context-free languages}
\ccsdesc[500]{Software and its engineering~Automatic programming}

\keywords{grammar inference; visibly pushdown grammar}

\maketitle

\section{Introduction}

In recent years, there has been a growing interest in learning
grammars from a set of sample strings. This
interest stems from a wide range of applications in fuzzing, program
validation, and other areas~\cite{Arefin24,10.1145/3140587.3062349,10.1109/ASE51524.2021.9678879,10.1145/3338906.3338958,10.1145/3519939.3523716}. 
Despite significant progress,
the challenge of learning the input grammars for black-box programs
remains, particularly when considering grammars with inherent
complexities. This challenge is part of a broader problem that has
been extensively studied for regular languages but is significantly
more difficult when dealing with broader classes of grammars.

Recently, GLADE~\cite{10.1145/3140587.3062349} (followed by a replication
study~\cite{10.1145/3519939.3523716}) and ARVADA~\cite{10.1109/ASE51524.2021.9678879} have been proposed to learn context-free grammars
(CFGs) under active learning. Both approaches require positive seed
inputs and utilize enumeration and heuristics to reduce the search
space.
However, methods in CFG learning
such as \glade and \arvada do not fully utilize the concept of \emph{nesting
structures}, which could potentially improve a grammar-learning process's accuracy.

Nesting structures are widely observed in practical languages, where recursions are 
explicitly delimited in their sentences. 
For example, an XML document's open and close tags
delimit a component of the document and can be nested within other
open and close tags. These nesting structures often carry valuable
insights into the grammars' structure and could potentially be a
powerful tool for learning grammars. 


To achieve accurate learning, we model the 
nesting structures together with the target language as
Visibly Pushdown Grammars (VPGs)~\cite{10.1145/1516512.1516518}, a subclass of
CFGs. VPGs formally specify nesting structures, and despite being
slightly weaker than CFGs they can specify many practical format
languages such as XML and JSON; they also enjoy all desirable closure
properties; e.g., the set of visibly pushdown languages is closed
under intersection, concatenation, and complement~\cite{10.1145/1516512.1516518}.
We posit that these properties position VPGs as an ideal mechanism for
learning practical grammars, which is the focus of this paper.

Learning VPGs is a problem that fits nicely into the well-studied 
\emph{active learning} field. In this field,
Angluin~\cite{10.1016/0890-5401(87)90052-6} first demonstrated that it is possible to
efficiently learn regular languages from a minimally adequate
teacher (MAT), which answers (1) whether a string is in the language,
which is called a membership query, and (2) whether a finite state
automaton accepts exactly the language held by the teacher, which
is called an equivalence query; if not, the teacher provides a
counterexample accepted by either the automaton or the language, but
not both. 

\edited{Despite advancements in VPG learning facilitated by a MAT, current techniques, as highlighted in previous work~\cite{pmlr-v153-barbot21a,michaliszyn_et_al:LIPIcs.MFCS.2022.74,Isberner2015FoundationsOA}, are constrained by assumptions that
do not match practical settings. Specifically, they
assume the availability of known nesting patterns, or more technically, a predefined \emph{tagging function}---a cornerstone of the VPG formalism detailed in Section~\ref{sec:Background}. This tagging function defines a set of \emph{call} and \emph{return} symbols and is assumed to operate on individual characters. In contrast, in many practical settings the tagging must be inferred and it operates on sequences of characters (i.e., tokens) rather than on individual characters; a more detailed comparison with prior work and the limitations of these assumptions are discussed in Section~\ref{sec:RelatedWork}.}


\edited{To address these limitations,
  we introduce \vstar, a novel grammar-inference framework designed to
  learn VPGs from a black-box program using a collection of sample
  seed strings. \vstar's algorithm is inspired by
  \lstar~\cite{10.1016/0890-5401(87)90052-6}, which learns a finite-state automaton using a MAT
  and seed strings. We develop \vstar in several pivotal
  steps. First, we develop an \lstar{}-like algorithm for learning
  VPGs when tags are known and are on single characters; this
  algorithm utilizes \ksevpa~\cite{Alur2005Congruences} to define
  a set of congruence relations, critical to the algorithm. Second, we
  develop a tag-inference algorithm, which utilizes a novel notion of \emph{nesting patterns}
  to infer call and return symbols, assuming they are of single
  characters.  In the third step, we lift the restriction that tags
  are on single characters and develop an algorithm for inferring
  token-based tags. Finally, we remove the requirement of equivalence
  queries by simulating them using membership queries via sampling.
  These steps all together result in a practical framework for learning VPGs
  from seed strings.  The main contributions of \vstar are summarized as follows:
}


\begin{enumerate}
    \item \textbf{Innovative Tool and Methodology:}
      \vstar is a novel tool for VPG inference. Its algorithm adapts
      Angluin's \lstar algorithm and integrates
      a set of novel techniques such as nesting patterns to \edited{infer}
      call and return tokens.  To our best knowledge, this is the
      first VPG-learning algorithm without knowing what call/return
      tokens are a priori.

    \item \textbf{Theoretical Reasoning:} We provide a theoretical
      analysis elucidating the conditions under which \vstar
      achieves accurate learning. 
      \edited{We first prove that for any character-based visibly pushdown language with uniquely paired call/return symbols, there exists a finite set of seed strings from which 
      \vstar can learn a tagging function that achieves exact learning.
      We further show that under some realistic assumptions \vstar can infer tagged tokens for token-based visibly pushdown languages.}

    \item \textbf{{Accuracy}:} Our evaluation of \vstar demonstrates
      its better accuracy in learning practical grammars,
      in comparison with state-of-the-art grammar learning tools. The
      accuracy result highlights the benefit of utilizing the concept
      of nesting structures \edited{and \vstar's ability to simulate the equivalence queries by sampling test strings from the seed strings.}
\end{enumerate}

\section{Related Work}\label{sec:RelatedWork}
In the realm of automata learning, learning finite state automata is a
well-studied field. The \lstar Algorithm by Angluin~\cite{10.1016/0890-5401(87)90052-6},
which learns a finite state automaton held by a MAT in polynomial time, is a seminal work in this
area. Following \lstar, various adaptations focusing on the active
learning problem have been proposed, such as \cite{rivest1989inference,10.5555/114836.114848, irfan2010angluin, howar2012active, Isberner2015FoundationsOA}, to list a few.

\glade~\cite{10.1145/3140587.3062349}, targeting learning a CFG from an oracle,
employs a two-step algorithmic approach. Initially, it enumerates
all substrings of seed strings and attempts to generalize
these substrings into regular expressions. Subsequently, nonterminals
are created and merged based on learned regular expressions.
\arvada~\cite{10.1109/ASE51524.2021.9678879} also aims to learn a CFG by using a
technique that exchanges two substrings from seed
strings. If two substrings are interchangeable, they are assigned
the same nonterminal, and the process gradually constructs parse
trees. \arvada employs heuristics to exchange substrings with similar
contexts. In its evaluation, the context is considered to be the
surrounding substrings of length four. 
While context strings bear some similarity to call and return
symbols in VPGs, 
call and return symbols are more flexible than context strings in \arvada: 
they can wrap contexts of an arbitrary length.
Moreover, call and return symbols have a stronger implication
on the recursive structure of the oracle language, a feature that
\vstar capitalizes on for better grammar-inference accuracy, which is evidenced by
\vstar's experimental comparison with \glade and \arvada discussed
in Section~\ref{sec:Evaluation}.
As an extension of GLADE, \reinam~\cite{10.1145/3338906.3338958} refines the grammar
learned by GLADE using reinforcement learning. This process allows for
the potential replacement of GLADE by other learning tools such as
\arvada~\cite{10.1109/ASE51524.2021.9678879} or our tool \vstar.

Learning VPGs, a subset of CFGs, has been the
focus of several systems. For example, assuming that the set of call and
return symbols is known, VPL*~\cite{pmlr-v153-barbot21a} learns VPGs
with membership and equivalence queries. The approach taken by VPL* is
indirect, first using TL* \cite{Drewes2007-tcs,drewes_mat_2011} to learn
a tree automaton, which is then converted to a VPG. Another work by
\citet{michaliszyn_et_al:LIPIcs.MFCS.2022.74} also assumes that the set of call and return
symbols is known and attempts to learn a visibly pushdown automaton
(VPA). Moreover, it requires a stronger teacher who, in addition to
providing membership and equivalence queries, can also report the
stack content during the VPA's
execution. TTT~\cite{Isberner2015FoundationsOA} is another VPA-inference tool
under the active-learning setting, based on
discrimination trees. \vstar differentiates itself from these systems
by learning call and return symbols from the oracle and seed strings.

In practice, a MAT is often instantiated
by a black-box program and the oracle language comprises input strings
that do not trigger program errors (assuming the program always
terminates). Thus, membership queries require only program
execution. However, equivalence queries are much harder to answer.
Simulating equivalence queries has a long history, often under the
name of \emph{conformance testing}
\cite{10.1145/3605360}. Chow \cite{10.1109/TSE.1978.231496} first
proposed the so-called \emph{W-method} for Mealy machines; we note
that the FSA version of the W-method is essentially a brute force
approach of enumerating suffix strings that distinguish the
representative prefix strings in the Nerode relation. The space of
suffix strings is restricted from \( \Sigma^* \) to \( \Sigma^k \), under the
assumption that the difference between the size of the oracle FSA and
the size of the learned FSA is \( k \). The W-method has many variants
\cite{vasilevskii_failure_1973,10.1145/1081180.1081189,87284}. 
For more information, we
refer to \citet{10.1145/3605360}.






\section{Background}\label{sec:Background}

\subsection{Grammar Inference}\label{sec:ProblemStatement}
In grammar inference, we assume an \emph{oracle} for the grammar being inferred, denoted as \(\oracle\), which maps strings to booleans---true for valid strings and false for invalid ones. The set of valid strings according to the oracle forms the \emph{oracle language}, denoted as \(\lang_{\mathcal{O}}\). When the oracle language is defined by a grammar, we refer to this grammar as the \emph{oracle grammar}, denoted as \(G_{\mathcal{O}}\). Similarly, when the oracle language can be recognized by a deterministic finite automaton (DFA), we refer to this DFA as the \emph{oracle automaton}, denoted as \(\automaton_{\mathcal{O}}\).

We define the active learning problem as follows:

\textit{Inputs:} The problem takes two inputs: 
\begin{enumerate}
    \item A set \(\Sigma\) of terminals, and
    \item A minimally adequate teacher (MAT), which can answer both
      membership and equivalence queries.  For an equivalence query
      with a hypothesis grammar, the teacher returns true when the
      hypothesis grammar is equivalent to the oracle grammar, meaning
      they generate the same language; if they are not equivalent, it
      provides a counterexample, which is a string accepted by either
      the hypothesis grammar or the oracle grammar, but not by both.
\end{enumerate}


\textit{Output:}
The goal of the active learning problem is to construct a grammar, denoted as \(\mathcal{G}\), such that the language it generates, denoted as \(\lang_{\mathcal{G}}\), is identical to the oracle language \(\lang_{\mathcal{O}}\). 

\subsection{Visibly Pushdown Grammars}
The expressive power of VPGs~\cite{10.1145/1516512.1516518} is between regular
grammars and context-free grammars, and VPGs are sufficient for
describing the syntax of many practical languages, such as JSON, XML,
and HTML.
Application wise, VPGs have been used in program analysis, XML
processing, and other fields \cite{10.1007/978-3-642-31424-7_41,heizmann2010nested,10.5555/1770532.1770559,nguyen2006vpa,kumar2007visibly,alur2007marrying,thomo2008rewriting,GAUWIN200813,10.14778/1920841.1920865,mozafari2012high}. {Besides, since they can be efficiently parsed, VPGs are also found valuable to specify practical languages \cite{JiaKT21deriv,JiaKT23TOPLAS}.}

A language is called a \emph{visibly pushdown language} (VPL) if it
can be generated by a VPG. VPLs enjoy the same appealing theoretical
closure properties as regular languages; e.g., the set of VPLs is
closed under intersection, concatenation, and
complement~\cite{10.1145/1516512.1516518}.  Further, since VPLs are a subset of
deterministic context-free languages, it is always possible to build a
deterministic pushdown automaton from a VPL.

A VPG \cite{10.1145/1516512.1516518} is formally defined as a tuple \((V, \Sigma, P,
L_0)\), where \(V\) is a set of nonterminals, \(\Sigma\) a set of
terminals, \(P\) a set of production rules, and \(L_0 \in V\) the start
nonterminal.
The set of terminals \(\Sigma\) is partitioned into three kinds:
\(\plainsyms\), \(\callsyms\), \(\retsyms\), which contain \emph{plain},
\emph{call}, and \emph{return} symbols, respectively.  The stack
action associated with an input symbol is fully determined by the kind
of the symbol: an action of pushing to the stack is always performed
for a \emph{call symbol}, an action of popping from the stack is
always performed for a \emph{return symbol}, and no stack action is
performed for a \emph{plain symbol}.  Notation-wise, a terminal in
\(\callsyms\) is tagged with \guilsinglleft\ on the left, and a terminal
in \(\retsyms\) is tagged with \guilsinglright\ on the right. For
example, \(\ac\) is a call symbol in \(\callsyms\), and \(\bc\) is a return
symbol in \(\retsyms\).

\emph{Well-matched VPGs} produce strings where each call symbol is
always paired with a return symbol. They are formally defined as
follows:
\begin{definition}[Well-matched VPGs]
A grammar \(G = (V, \Sigma, P, L_0)\) is a well-matched VPG if every production rule in \(P\) adheres to one of the following forms:
\begin{enumerate}
\item \(L\to\epsilon\), where \(\epsilon\) stands for the empty string.
\item \(L\to c L_1\), where \(c\in\plainsyms\).
\item \(L\to \ab{L_1}L_2\), where \(\ac\in\callsyms\) and \(\bc\in\retsyms\).
\end{enumerate}
\end{definition}
Note that in \(L\to c L_1\) terminal \(c\) must be a plain symbol, and in
\(L\to \ab{L_1}L_2\) a call symbol must be matched with a return symbol;
these requirements ensure that any derived string must be
well-matched. This is useful in languages like XML, where tags open and close
in a nested, well-matched manner. For instance, the grammar rule
``$\text{element} \to \text{OpenTag content CloseTag Empty} \mid
\text{SingleTag Empty}$'' represents an XML element that either contains
content within matched open and close tags or is an empty single tag.



In this paper, we consider only well-matched VPGs, and use the term
VPGs for well-matched VPGs.  We {also} call rules in the form
of \(L\to \ab{L_1} L_2\) \emph{matching rules}, and rules of the form
\(L\to c L_1\) \emph{linear rules}.


\subsection{Visibly Pushdown Automata}

\newcommand{\deltacall}{\delta_{\text{call}}}
\newcommand{\deltaret}{\delta_{\text{ret}}}
\newcommand{\deltapln}{\delta_{\text{pln}}}

\edited{
A Visibly Pushdown Automaton (VPA) \cite{10.1145/1007352.1007390} on finite strings over symbols \(\callsyms\), \(\retsyms\), and \(\plainsyms\) is a tuple \(\vpa = (Q, q_0, \Gamma, \delta, Q_F)\) where \(Q\) is a finite set of states, \(q_0 \in Q\) is the initial state, \(\Gamma\) is a finite stack alphabet that contains a special bottom-of-stack symbol \(\bot\), \(\delta = \deltacall \cup \deltaret \cup \deltapln\) is the transition function, where \(\deltacall : Q \times \callsyms \rightarrow Q \times (\Gamma \setminus \{\bot\})\), \(\deltaret : Q \times \retsyms \times \Gamma \rightarrow Q\), and \(\deltapln : Q \times \plainsyms \rightarrow Q\), and \(Q_F \subseteq Q\) is a set of final states.


The function \(\deltacall(q, \ac) = (q', \gamma)\) means that upon reading \(\ac\), state \(q\) is changed to state \(q'\) and \(\gamma\) is pushed onto the stack.
Similarly, \(\deltaret(q, \bc, \gamma) = q'\) means that upon reading \(\bc\) and the stack top is \(\gamma\), \(q\) is changed to \(q'\) and \(\gamma\) is removed from the top of the stack (if \(\gamma\) is \(\bot\), the empty stack remains unaltered).
Finally, \(\deltapln(q, c) = q'\) means state \(q\) is changed to \(q'\) upon reading symbol \(c\).

A \emph{stack} is a non-empty finite sequence over \(\Gamma\) ending in the bottom-of-stack symbol \(\bot\). The set of all stacks is denoted as \(St = (\Gamma \setminus \{\bot\})^* \cdot \{\bot\}\). A \emph{configuration} is a pair \((q, \stack)\) of state \(q\) and stack \(\stack \in St\). We define the \emph{single-step transition} of configurations \(\delta((q, \stack), i)\) as tuple \((q', \stack')\), based on the type of symbol \(i\) and the transition functions:
\begin{enumerate}
  \item If \(i \in \callsyms\), then \((q', \gamma) = \deltacall(q, i) \) and \(\stack' = \gamma \cdot \stack\) for certain \(\gamma \in \Gamma\);
  \item If \(i \in \retsyms\), then \(q' = \deltaret(q, i, \gamma) \) for certain \(\gamma\) and either \(\stack = \gamma \cdot \stack'\), or \(\gamma = \bot\) and \(\stack = \stack' = \bot\);
  \item If \(i \in \plainsyms\), then \(q' = \deltapln(q, i) \) and \(\stack' = \stack\).
\end{enumerate}
We extend the single-step transition for string \(si\) as \(\delta((q, \stack), si)=\delta(\delta((q, \stack), s), i)\).
A string \(s \in \Sigma^*\) is \emph{accepted} by VPA \(\vpa\) if \(\delta((q_0,\bot), s) \in Q_F\). The language of \(\vpa\) is the set of strings accepted by \(\vpa\).
}

\subsection{Angluin's L-Star Algorithm}
We next briefly discuss the \lstar algorithm~\cite{10.1016/0890-5401(87)90052-6}, which learns a finite state automaton from a MAT in polynomial time. We first introduce a notion of equivalence.
We define two strings equivalent w.r.t. to language \(\lang\) 
if extending them with any suffix \(w\) has the same membership result in \(\lang\):
\[
s_1 \eqlstar s_2 \definesym \forall w, s_1w \in \lang \lequiv s_2w \in \lang.
\]

We also introduce a notion of \emph{approximate equivalence} relative to suffixes in a test-string set \(T\):
\[
s_1 \eqlstar_T s_2 \definesym \forall w \in T, s_1w \in \lang \lequiv s_2w \in \lang.
\]

\newcommand{\tbl}{H}

The \lstar algorithm operates in iterations and maintains two sets of
strings: \(Q\) and \(T\), both of which start with \(\set{\epsilon}\).  The
set \(T\) contains a set of test strings.  The set \(Q\) contains a set of
strings that are \emph{separable} by \(T\), which means any two different strings
in \(Q\) are \emph{not} \(T\) equivalent: \(\forall s_1\ s_2 \in Q, s_1 \neq s_2 \limply s_1
\not \simeq_T s_2\). In addition, \((Q,T)\) is \emph{closed} in the sense
that for any \(s \in Q\) and any symbol \(c\) there exists \(s' \in Q\) such
that \(sc \simeq_T s'\).

Given separable and closed \((Q,T)\), we can construct a hypothesis DFA:
each string in \(Q\) becomes a state and we add a transition from \(s \in
Q\) with input symbol \(c\) to the unique state \(s' \in Q\) such that \(sc
\simeq_T s'\); the initial state is the empty string \(\epsilon\) and
acceptance states are those \(Q\) strings that are in \(\lang\).  With the
hypothesis DFA, we can ask the MAT to
check if the DFA is equivalent to the oracle language. If it is, a DFA
for the oracle language has been learned and the algorithm
terminates. If not, the MAT gives a counterexample. With the
counterexample, the algorithm can extend \(Q\) and \(T\) and then use
membership queries provided by the MAT to make \(Q\) and \(T\) separable
and closed again.  Details can be found in the \lstar paper~\cite{10.1016/0890-5401(87)90052-6}. 

\newcommand{\learn}[3]{\mathbf{learn}(#1,#2,#3)}
\newcommand{\consVPA}[1]{\mathbf{constructVPA}(#1)}
\newcommand{\closeVPA}[2]{\mathbf{close}(#1,#2)}
\newcommand{\updateVPA}[3]{\mathbf{update}(#1,#2,#3)}
\newcommand{\tagInfer}[2]{\mathbf{tagInfer}(#1,#2)}
\newcommand{\tagSearchKW}{\mathbf{search}}
\newcommand{\candidateNPKW}{\mathbf{candidateNesting}}

\newcommand{\tagSearch}[1]{\tagSearchKW(#1)}
\newcommand{\candidateNP}[2]{\candidateNPKW(#1,#2)}

\newcommand{\taggedOracleLang}[1]{\hat \lang_{#1}}
\newcommand{\oracleVPL}{\taggedOracleLang{\oracle}}

\newcommand{\ndone}{N_{\mathrm{done}}}

\section{\vstar for a Character-Based VPL} \label{sec:vstarActive}

\vstar learns a Visibly Pushdown Automaton (VPA) using a MAT, which provides both membership and equivalence
queries. For ease of exposition, we divide our discussion into two
steps: in this section, we consider grammar inference for a
character-based VPL, in which the tagging of call/return symbols is on
individual characters; in the next section, we consider grammar
inference for a token-based VPL, a more realistic setting in which the
tagging is performed on tokens (sequences of characters).

For both steps, we develop an algorithm that not only infers the
tagging but also constructs a VPA.  We further prove that this
constructed VPA achieves exact learning, meaning it recognizes the
oracle language.  We also provide an analysis of the algorithm's time
complexity.

This section proceeds as follows: we start with a precise problem
statement in Section~\ref{subsec:charVPL_problem_statement}; we then
{introduce a new algorithm for learning a VPA in
  Section~\ref{subsec:learn_vpa} assuming a given tagging
  function}. Then, in Section~\ref{subsec:learn_tag}, we {study how to
  infer a tagging function that makes the VPA-learning algorithm
  terminate and achieve exact learning}.

\subsection{Problem Statement}\label{subsec:charVPL_problem_statement}
\vstar seeks to infer a Visibly Pushdown Grammar (VPG) from a
black-box oracle that knows a VPL. We next define the precise
knowledge of the oracle and what queries it allows.

We assume \( \Sigma \) is the alphabet set from which valid strings can
draw characters.
A VPL tags each character \( i \) in \( \Sigma \) as a call symbol \(
\callsym{i} \), a return symbol \( \retsym{i} \), or a plain symbol.
This is modeled by a \emph{tagging function} \( t:\Sigma\to\hat\Sigma
\), which maps a character \( i \) to either \( \callsym{i} \), \(
\retsym{i} \), or \( i \) itself. This function extends to strings: 
\( t(s)=t(s[1])\ldots t(s[n]) \), where \( n \) is the length
of \( s \) and \( s[j] \) is its \( j \)-th character.
Given a tagging function \( t \), we define the terminal
set \( \hat\Sigma_t \) (also denoted as \( \hat\Sigma \) when \( t \) is
clear from the context) as the set of tagged characters: \(
\hat\Sigma_t = \callsyms \cup \plainsyms \cup \retsyms \), where \(
\callsyms \), \( \retsyms \), and \( \plainsyms \) include call,
return, and plain symbols defined by \( t \), respectively.


An oracle \( \oracle \) knows {a language \( \lang\subseteq\Sigma^* \) and} a tagging function \( t_\oracle \) such that {the tagged language \( \oracleVPL = \{ t_\oracle(s) \mid s\in \lang \} \) is a}
VPL  over terminal set \( \hat\Sigma_{t_\oracle} \). The oracle \( \oracle \) can
answer membership and equivalence queries in active learning.
The oracle's ability to
answer these queries is modeled as two functions.
The \emph{membership query function}, \( \memquery_\lang: \Sigma^* \to
\{\text{True}, \text{False}\} \), is defined
as follows:
\[
\memquery_\lang(s) =
\begin{cases} 
\text{True} & \text{if } s\in \lang, \\
\text{False} & \text{otherwise}.
\end{cases}
\]
That is, it returns true \emph{iff} the input string \( s \) belongs to \( \lang \). 
Note that input strings to membership queries do not carry tags, which reflects 
the fact that existing oracles are typically recognizers/parsers that take untagged strings.
An example oracle used in our experiments is an off-the-shelf JSON parser, which 
takes untagged JSON strings; the goal of \vstar is to learn the JSON grammar from
this oracle. Also note that we sometimes abuse the notation and pretend that \( \memquery_\lang \) can also take
tagged strings, in which case it performs membership testing using the string
after tagging is removed; i.e., for a tagging function \( t \), \( \memquery_\lang(t(s)) \) is defined as \( \memquery_\lang(s) \).

We next define the \emph{equivalence query function}, which checks the
equivalence between the oracle language \( \lang \) and the language
defined by a hypothesis VPA \( \vpa \) proposed by some learning
algorithm. One complication is that the tagging function produced by
the learner might be different from the oracle tagging
function, even if the underlying untagged language is the same as the
oracle one. This is due to the inherent flexibility of VPL tagging.
As an example, suppose the oracle language is $\{ (\ac
\callsym{g})^k (\retsym{h} \bc)^k \mid k \geq 0 \} $, then its
underlying untagged language is the same as the untagged language of
\( \{ (\ac {g})^k ({h} \bc)^k \mid k \geq 0 \} \), which tags only \( a \) and
\( b \), or of \( \{ (a \callsym{g})^k (\retsym{h} b)^k \mid k \geq 0 \} \),
which tags only \( g \) and \( h \).

Since what is relevant is the underlying untagged language, we should
allow a learner to learn a different tagging function. Therefore we
assume that the learner produces a hypothesis Visibly Pushdown
Automaton (VPA) \( \vpa \), as well as a hypothesis tagging function \(
t_{\vpa} \). The learner should achieve \emph{exact learning},
formally stated as $\forall s \in \Sigma^*,\; \memquery_\lang(s) =
\memquery_{(\vpa,t_{\vpa})}(s)$, where
\[
\memquery_{(\vpa,t_{\vpa})}(s) =
\begin{cases} 
\text{True} & \text{if } t_{\vpa}(s) \text{ is accepted by } \vpa, \\
\text{False} & \text{otherwise}.
\end{cases}
\]
Now the {equivalence query function} \( \mathcal{E} \) is defined
as follows: \( \mathcal{E}(\vpa,t_{\vpa}) \) returns none when the oracle
language is equivalent to the untagged language recognized by \( \vpa \)
and otherwise returns some \( s \) such that 
\( \memquery_\lang(s) \neq \memquery_{(\vpa,t_{\vpa})}(s) \).

\vstar's active learning goal is, with an oracle that provides
membership and equivalence queries, to learn a tagging function \( t \)
and a VPA \( \vpa \) so that exact learning is achieved.

\edited{\paragraph{The Unique Pairing assumption for oracle languages} To simplify the tagging inference algorithm that will be discussed in Section~\ref{subsec:learn_tag}, 
we assume that 
in the oracle VPL \( \hat\lang_\oracle= \{ t_\oracle(s) \mid s\in\lang \} \), a call symbol is uniquely paired with a return symbol; i.e., if \( \ac \) is matched with \( \bc \) in one sentence, then \( \ac \) can be matched with only \( \bc \) in every sentence of the language.
This assumption simplifies our algorithm design, and 
is satisfied by languages we experimented with (e.g., XML and JSON).
We now represent pairs \( (a,b) \) in a tagging function \( t \) as a \emph{tagging} \( T\subseteq 2^{\Sigma\times\Sigma} \), where \( t(a)=\ac \), \( t(b)=\bc \). 
While Algorithm~\ref{alg:backtracking_tagging} technically can be adjusted to operate without the above assumption, efficiency would be significantly decreased.
}

\subsection{Learning VPA with Known Tagging} \label{subsec:learn_vpa}
This subsection outlines an algorithm for learning a Visibly Pushdown
Automaton (VPA) using a MAT, assuming a tagging function \( t \) as input.  
\( \hat \Sigma_t \) is the tagged alphabet according to \( t \) and \( \taggedOracleLang{t} = \{ t(s) \mid s\in\lang \} \) is the 
oracle language \( \lang \) tagged with \( t \). We assume that \( t \) must make
\( \hat \lang \) contain a set of well-matched strings.
To avoid clutter, we will write \( \hat \Sigma \) for \( \hat \Sigma_t \) and \( \hat \lang \) for
\( \taggedOracleLang{t} \) in this subsection.
While there were prior VPA-learning algorithms proposed under this setting, 
some required more information from the oracle, \edited{such as
the stack content during VPA execution}~\cite{michaliszyn_et_al:LIPIcs.MFCS.2022.74}.
\edited{\citet{Isberner2015FoundationsOA}  used advanced discrimination tree structures to minimize the number of membership queries;
however, both \citet{Isberner2015FoundationsOA} and \citet{howar2012active} discussed that discrimination tree-based algorithms could significantly raise the number of equivalence queries. 
Since in our implementation we simulate equivalence queries using membership queries (see Section~\ref{sec:Evaluation}),
increasing the number of equivalence queries would escalate the simulation effort.
}

\edited{In this section, we introduce a VPA learning algorithm based on \(k\)-SEVPA (\cite{Alur2005Congruences,kumar2007visibly}) and demonstrate its polynomial-time efficiency in Theorem~\ref{thm:learn_vpa}. Although the concept of polynomial-time VPA learning has been previously explored, as in \citet{Isberner2015FoundationsOA}'s TTT-VPA, our approach differs by adopting a table-based methodology, inspired by the clarity and directness of the \lstar algorithm~\cite{10.1016/0890-5401(87)90052-6}. 
This shift not only simplifies the presentation but also makes it easy to interface with tag-inference algorithms that we will discuss later in this paper.}

Our algorithm is outlined in Algorithm~\ref{alg:learn_vpa}.
At every iteration, it maintains a set of \emph{separable} and \emph{closed} equivalence states 
and test strings in \( \qc \). The current states are used to produce a hypothesis VPA through \( \consVPA{\qc} \).
It then queries the oracle using an equivalence query. 
If the query does not produce a counterexample, then the iterative process terminates
with the hypothesis VPA as the result; otherwise, the returned counterexample is tagged through
the assumed tagging function \( t \) and employed to refine
the current set of equivalence states and test strings, through \( \updateVPA{-}{-}{-} \) and \( \closeVPA{-}{-} \). Next we describe
\( \consVPA{-} \), \( \closeVPA{-}{-} \), and \( \updateVPA{-}{-}{-} \), starting with some background information.

\begin{algorithm}[t]
  \SetAlgoLined
  \caption{The \(\learn{\oracle}{t}{\hat\Sigma}\) function that learns a VPA from a MAT.}
  \label{alg:learn_vpa}
  
  \KwIn{Oracle \(\oracle\) with membership queries and equivalence queries \( \eqquery \), tagging function \( t \), terminals \( \hat\Sigma=\callsyms\cup\retsyms\cup\plainsyms \).}
  \KwOut{Learned VPA \( \vpa_\qc \).}

  Initialize \( Q_{i,i=0..|\callsyms|} \) as \( \{ \epsilon \} \), \( C_0 \) as \( \{ \epsilon \} \), and \( C_{j,j=1..|\callsyms|} \) as \( \left\{ \left( \ac_{j,j=1..|\callsyms|}, \bc \right) \mid \bc \in \retsyms \right\} \)\;

  \( \qc \leftarrow \closeVPA{\oracle}{\qc} \)\;
  
  \While{\( \eqquery\left(\consVPA{\qc}\right) \) produces a counterexample \( s \)}{
    \( \qc' \gets \updateVPA{\oracle}{\qc}{t(s)} \)\;
    \( \qc \leftarrow \closeVPA{\oracle}{\qc'} \)\;
  }

  \Return{\( \consVPA{\qc} \)};
\end{algorithm}

\subsubsection{Background: \ksevpa and Congruence Relations}
Unlike regular languages, a VPL may not have a \emph{unique}
minimum-state deterministic pushdown recognizer. Nonetheless,
partitioning the call symbols into \( k \) distinct groups and
mandating the following ensure the existence of a unique minimal
VPA: (1) the states are partitioned to a set of \( k+1 \) modules (each is
a set of states), with the \( 0 \)-th module as the base module with
the initial state and the \( i \)-th module for the \( i \)-th group of call symbols
with \( i \in [1..k] \); (2) the machine stays in the same module when encountering a plain
symbol; the machine transitions to a unique entry state in the \( i \)-th
module when encountering a call symbol from the \( i \)-th group; the
machine transitions back to the caller module when encountering a
return symbol. Such a VPA is known as a \( k
\)-Single Entry VPA (\ksevpa~\cite{Alur2005Congruences}) and is similar to
a control-flow graph with the \( 0 \)-th module for the main function and the \( i \)-th
module for the \( i \)-th function in a program.

The minimal \ksevpa can be defined with a set of congruence relations.
\begin{definition}
\label{def:ksevpaCongr}
  [Congruence Relations for the Minimal \ksevpa~\cite{Alur2005Congruences}]
  Given a VPL \(\hat\lang\) over \( \hat\Sigma \), let \( \callsyms^i \), \( i \in [1..k] \), represent the \( i \)-th group of call symbols. Given \emph{well-matched} strings \( s_1 \) and \( s_2 \), we introduce \(k+1\) congruence relations:
  \begin{eqnarray}
    s_1 \sim_0 s_2 \text{ iff }&
      \forall w \in \hat\Sigma^*,\ s_1w \in \hat\lang \iff s_2w \in \hat\lang; \\
    s_1 \sim_i s_2 \text{ iff }&
      \forall w, w' \in \hat\Sigma^*,\ \forall \ac \in \callsyms^i,\ w\ac s_1w' \in \hat\lang \iff w \ac s_2 w'\in \hat\lang, \text{ for } i \in [1..k].
  \end{eqnarray}
\end{definition}
Note that \(\sim_0\) is the Myhill-Nerode right congruence
and can be used to construct the minimal DFA for a regular language.
For \(\sim_i\) when \( i \in [1..k] \), the context
strings assume specialized forms: the left context string ends with a
call symbol from the \(i\)-th group, and the string \( w\ac
w' \) is a well-matched string, since both \( w \ac s_1 w'\in \hat\lang \) and
\(s_1\) are well matched.
From the above congruence relations, we can construct the minimal \ksevpa: 
the equivalence classes of \( \sim_i \) become the states of
the \( i \)-th module, with \( [\epsilon]_{\sim_i} \) being the unique entry
  state of the \( i \)-th module; transition edges can also be
added (see \citet{Alur2005Congruences}): e.g., \( [s]_{\sim_i} \) transitions to \( [si]_{\sim_i} \)
for plain symbol \( i \), and to \( [\epsilon]_{\sim_j} \) for call symbol \( \ac \in \callsyms^j \).

In our algorithm, we set \(k\) to be the number of call symbols decided by the input tagging function 
so that each call symbol is in its own group. We write \( \ac_i \) for the \( i \)-th call symbol. This partitioning is practical because call symbols often fulfill diverse roles and hence find themselves in separate contexts. Further, Proposition 2 in \citet{Alur2005Congruences} tells that enlarging \(k\) can lead to a more compact VPA.

\subsubsection{Access Words and Test Words}
At each step, \vstar maintains \( \qc \), which contains
\begin{enumerate}
  \item a set of \( k+1 \) modules \( Q_0 \) to \( Q_k \), each containing empty string \( \epsilon \) and
  a set of well-matched \emph{access words} in \( \hat\Sigma^* \),
  \item and a set of \emph{test words} \( C_0 \) to \( C_k \), with \( C_0 \) containing strings in \( \hat\Sigma^* \) for testing \( Q_0 \) and
  \( C_i \) containing strings in the form of \( (w\ac_i,w') \) for testing \( Q_i \), where \( w \) and \( w' \) are in \( \hat\Sigma^* \), \( \ac_i \) is the \( i \)-th call symbol, and \( w \ac_i w' \) is well matched.
\end{enumerate}

Given test words \( C_{i\in[0..k]} \), two well-matched strings \( q_1 \) and \( q_2 \) are \emph{\( C_i- \)equivalent}, denoted as \( q_1 \sim_{C_i} q_2 \), if (1) when \( i = 0 \), \( \forall w \in C_0 \), \( q_1w\in \hat\lang \) iff \( q_2w\in \hat\lang \); and (2) when \( i \in [1..k] \), \( \forall (w\ac, w') \in C_i \), \( w \ac q_1w'\in \hat\lang \) iff \( w \ac q_2w'\in \hat\lang \). These are essentially the same equivalence relations as those in Definition~\ref{def:ksevpaCongr}, relative to test words in \( C_i \). 

We define the following two properties of \( \qc= \predset{(Q_i,C_i)}{i \in [0..k]} \):
\begin{enumerate}
  \item \textbf{Separability}: no two distinct strings in \( Q_i \) are \( C_i \)-equivalent, meaning \( \forall q\ q' \in Q_i, q \neq q' \limply q \not\sim_{C_i} q' \).
  \item \textbf{Closedness}: for every \( q\in Q_i \) and \( m\in\Sigma_M \) (defined below), there is some \( q' \in Q_i \) such that \( qm \sim_{C_i} q' \).
\end{enumerate}

\begin{definition}[Nested Words and \( \Sigma_M \)]
  Given \( \hat\Sigma = \callsyms \cup \plainsyms \cup \retsyms \), along with \( (Q_i, C_i)_{i\in[1..k]} \), we define the \emph{nested words} for \( (Q_i, C_i) \), denoted as \( M_i \), as
  \[ M_i = \{ \ac_i q \bc \mid q \in Q_i, \bc \in \retsyms \}, \]
  where \( \ac_i \) is the \( i \)-th call symbol.
  We define \( \Sigma_M = \cup_{i} M_i \cup \hat\Sigma \).
\end{definition}

Our learning algorithm is then based on the following set of propositions.

\begin{definition}[\( \consVPA{\qc} \) function] \label{def:consVPA}
For separable and closed \( \qc= \predset{(Q_i,C_i)}{i \in [0..k]} \), we can construct a hypothesis \ksevpa, denoted as \( \vpa \) as follows.
{The set of states of \( \automaton \) is \( \bigcup_{i\in[0..k]} Q_i \). We write \( q\in Q_i \) as \( [q]_{i} \). The initial state is \( [\epsilon]_0 \). Define the set of acceptance states, \( Q_F \), to be \( \{ [q]_0 \mid q \in Q_0 \cap \hat\lang \} \), which can be constructed via membership queries. The transition function \( \delta \) from the current state \( [q]_i \), \( i \in [0..k] \),  and the next input symbol is defined as follows:
    \begin{enumerate}
    \item For plain symbol \( c \), the transition is \( [q]_{i} \xrightarrow{c} [q']_{i} \), where \( q'\sim_{C_i}qc \).
    \item For call symbol \( \ac_{j} \), the transition is \( [q]_{i} \xrightarrow{\ac_{j}, \text{ push } ([q]_{i},\ac_{j})} [\epsilon]_{j} \), the unique entry state for module \( j \).
    \item For return symbol \( \bc \), the transition is \( [q]_{i} \xrightarrow{\bc, \text{ pop } ([q']_{j},\ac_{i})} [q'']_{j} \), where \( q'' \sim_{C_j}q'\ac_{i} q \bc \).
  \end{enumerate}
  The target state in each transition exists by closedness and is unique by separability.
  To run the VPA on a string \( s \), start with the initial state \( [\epsilon]_0 \) and an empty stack and use \( \delta \) for transitions. The automaton accepts a string if it terminates in a configuration
with a state within \( Q_F \) and an empty stack.
}
\end{definition}

\begin{proposition}\label{prop:bound_states}
  If \( \qc= \predset{(Q_i,C_i)}{i \in [0..k]} \) is separable {and language \( \hat\lang=\{ t(s) \mid s\in\lang \} \) is a VPL}, then the number of states in \( \consVPA{\qc} \) 
is bounded above by the number of states in the minimal \( k \)-SEVPA for VPL \( \hat\lang \).
\end{proposition}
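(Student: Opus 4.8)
The plan is to run the classical \lstar{} counting argument module by module. By Definition~\ref{def:consVPA}, the state set of \(\consVPA{\qc}\) is the disjoint union \(\bigcup_{i\in[0..k]} Q_i\) --- states living in different modules are kept distinct, and within a module \(Q_i\) is a set --- so \(\consVPA{\qc}\) has exactly \(\sum_{i=0}^{k} |Q_i|\) states. On the other side, since \(\hat\lang\) is a VPL, each congruence \(\sim_i\) of Definition~\ref{def:ksevpaCongr} has finite index, and, as recalled in the background, the \(i\)-th module of the minimal \ksevpa{} for \(\hat\lang\) has exactly one state per \(\sim_i\)-equivalence class of well-matched strings; hence the minimal \ksevpa{} has \(\sum_{i=0}^{k} \mathrm{index}(\sim_i)\) states. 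It therefore suffices to prove \(|Q_i| \le \mathrm{index}(\sim_i)\) for every \(i \in [0..k]\).

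For a fixed \(i\), I would consider the map \(f_i\) from \(Q_i\) into the set of \(\sim_i\)-classes that sends each access word \(q\) to its class \([q]_{\sim_i}\); this is well defined because every element of \(Q_i\) is well matched. The claim is that \(f_i\) is injective. The crux is that \(\sim_i\) refines \(\sim_{C_i}\): for \(i \ge 1\), \(s_1 \sim_i s_2\) quantifies the test ``\(w \ac s w' \in \hat\lang\)'' over all \(w, w' \in \hat\Sigma^*\) and all \(\ac \in \callsyms^i\), whereas \(s_1 \sim_{C_i} s_2\) only quantifies over the finitely many pairs \((w\ac, w') \in C_i\), which are among those admissible contexts (and similarly for \(i = 0\), where \(\sim_0\) ranges over all \(w \in \hat\Sigma^*\) and \(\sim_{C_0}\) only over \(w \in C_0\)). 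Consequently \(s_1 \sim_i s_2\) implies \(s_1 \sim_{C_i} s_2\), so contrapositively \(s_1 \not\sim_{C_i} s_2\) implies \(s_1 \not\sim_i s_2\). Now take \(q \ne q'\) in \(Q_i\): separability of \(\qc\) gives \(q \not\sim_{C_i} q'\), hence \(q \not\sim_i q'\), i.e.\ \(f_i(q) \ne f_i(q')\). Thus \(f_i\) is injective and \(|Q_i| \le \mathrm{index}(\sim_i)\). Summing this inequality over \(i \in [0..k]\) yields \(\sum_i |Q_i| \le \sum_i \mathrm{index}(\sim_i)\), which is precisely the statement that \(\consVPA{\qc}\) has no more states than the minimal \ksevpa{} for \(\hat\lang\).

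I do not anticipate a deep obstacle; the argument is essentially bookkeeping on top of the refinement observation. The one place that needs care is the bridge step ``number of states of the minimal \ksevpa{} equals \(\sum_i \mathrm{index}(\sim_i)\)'', which relies on the canonical-form result of \citet{Alur2005Congruences} (each module \(i\) has exactly the \(\sim_i\)-classes as states) together with the hypothesis that \(\hat\lang\) is a VPL, ensuring each index is finite so that the right-hand side is meaningful; relatedly, one should make sure the image of \(f_i\) lands among the states the minimal automaton actually keeps (no trimming of classes), which again follows from that canonical characterization. A minor additional check is that the degenerate \(i = 0\) case --- contexts with no enclosing call symbol --- is subsumed by the same refinement reasoning, which it is.
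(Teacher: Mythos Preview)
Your proposal is correct and follows essentially the same argument as the paper: both use that \(\sim_i\) refines \(\sim_{C_i}\), combine this with separability to conclude that distinct elements of \(Q_i\) lie in distinct \(\sim_i\)-classes, and then sum the resulting inequalities \(|Q_i|\le \mathrm{index}(\sim_i)\) over the modules. Your version is slightly more explicit (framing it as an injective map \(f_i\) and noting well-matchedness and the \(i=0\) case), but the substance is identical.
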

\begin{proof}
  For two strings \( s_1 \) and \( s_2 \), if \( s_1 \sim_i s_2 \) (Definition~\ref{def:ksevpaCongr}), then \( s_1 \sim_{C_i}s_2 \). Hence, the number of equivalence classes of \( \sim_{C_i} \) is less than that of \( \sim_i \), which corresponds to the number of states of the \( i \)-th module in the minimal \( k \)-SEVPA. Further, since \( Q_i \) is separable, each element of \( Q_i \) corresponds to a unique equivalence class of \( \sim_{C_i} \). Therefore, \( |Q_i| \) is bounded above by the number of equivalence class of \( \sim_{C_i} \), which is bounded
above by the number of states of the \( i \)-th module in the minimal \ksevpa.
Since the number of states in \( \consVPA{\qc} \)  is \( \sum_{i\in [0..k]}|Q_i| \), it
is upper bounded by the number of states in the minimal \ksevpa for \( \hat\lang \).
\end{proof}

\begin{proposition}\label{prop:close_vpa}
  If \( \qc= \predset{(Q_i,C_i)}{i \in [0..k]} \) is separable but not closed, then using membership queries one can find \( i \) and \( q \in \hat\Sigma^* \setminus Q_i \) such that \( (Q_i \cup \{q_i\}, C_i) \) and the rest \( (Q_j,C_j)_{j\neq i} \) remain separable.
\end{proposition}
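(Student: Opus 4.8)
The plan is to read a concrete witness off the failure of closedness and promote it verbatim to a new access word. Since \( \qc=(Q_i,C_i)_{i\in[0..k]} \) is not closed, unfolding the definition of closedness yields a module index \( i\in[0..k] \), an access word \( \bar q\in Q_i \), and an extension \( m\in\Sigma_M \) with \( \bar q m \) well-matched such that \( \bar q m\not\sim_{C_i}q' \) for every \( q'\in Q_i \). I would set \( q := \bar q m \); this is a well-matched string in \( \hat\Sigma^* \) (as \( \bar q\in Q_i\subseteq\hat\Sigma^* \) and \( m\in\Sigma_M\subseteq\hat\Sigma^* \)), hence a legitimate candidate access word for module \( i \), and it is the new word \( q \) referred to in the statement. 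It then remains to verify three things: (a) \( q\notin Q_i \); (b) \( (Q_i\cup\{q\},C_i) \) together with the untouched pairs \( (Q_j,C_j)_{j\neq i} \) is still separable; and (c) the witness can be located using only finitely many membership queries.

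Part (a) is immediate: \( \sim_{C_i} \) is an equivalence relation, so if \( q \) coincided with some element of \( Q_i \) it would be \( \sim_{C_i} \)-equivalent to it, contradicting the choice of \( m \). For part (b), separability of the enlarged pair \( (Q_i\cup\{q\},C_i) \) decomposes into two obligations: that any two distinct members of \( Q_i \) are not \( \sim_{C_i} \)-equivalent, which is inherited from the separability of \( \qc \); and that \( q \) is not \( \sim_{C_i} \)-equivalent to any member of \( Q_i \), which is exactly the witness just extracted. Since the pairs \( (Q_j,C_j) \) for \( j\neq i \) are literally unchanged, their separability carries over verbatim, so the whole family stays separable.

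For part (c), observe that testing closedness of module \( i \) amounts to checking, for each of the finitely many \( \bar q\in Q_i \) and each of the finitely many \( m\in\Sigma_M \), whether \( \bar q m\sim_{C_i}q' \) holds for some \( q'\in Q_i \); the first pair \( (\bar q,m) \) for which this fails is the witness. By the definition of \( \sim_{C_i} \) each such check is decided by comparing membership of finitely many tagged strings against each other --- \( (\bar q m)\,w \) versus \( q'\,w \) for \( w\in C_0 \) when \( i=0 \), and \( w\ac_i\,(\bar q m)\,w' \) versus \( w\ac_i\,q'\,w' \) for \( (w\ac_i,w')\in C_i \) when \( i\in[1..k] \) --- and each comparison is settled by the membership-query function \( \memquery_\lang \). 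As \( Q_i \), \( C_i \), and \( \Sigma_M \) are all finite, only finitely many membership queries are used.

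The proposition is essentially bookkeeping rather than a deep fact, so there is no real obstacle; the one point that must be stated with care is the scope of the extension \( m \): closedness is, and must be, tested only against those \( m\in\Sigma_M \) for which \( \bar q m \) remains well-matched --- single plain symbols and nested words \( \ac_j q'' \bc\in M_j \) --- because \( \sim_{C_i} \) is defined only on well-matched strings. This is precisely the class of extensions on which the \( \consVPA{\qc} \) construction of Definition~\ref{def:consVPA} needs closedness (its plain-symbol transitions and its return transitions), so nothing is lost by this restriction. With this understood, parts (a)--(c) follow purely from \( \sim_{C_i} \) being an equivalence relation and from finiteness of the maintained sets.
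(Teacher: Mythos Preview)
Your proposal is correct and follows essentially the same approach as the paper: unfold the failure of closedness to obtain a witness \( qm \) not \( C_i \)-equivalent to any member of \( Q_i \), add it, and observe separability is preserved by construction. The paper's own proof is a terse two-sentence version of this; your write-up is more detailed (spelling out reflexivity for part (a), the two separability obligations for part (b), and the finiteness count for part (c)), and your closing remark about restricting \( m \) to extensions keeping \( \bar q m \) well-matched is a fair caveat that the paper glosses over.
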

\begin{proof}
  Since \( (Q_i, C_i)_{i\in[0..k]} \) are not closed, there exists \( q \in Q_i \) for certain \( i \) and \( m \in \Sigma_M \) such that \( qm \) is not \( C_i \)-equivalent to any state in \( Q_i \). Using membership queries (by enumerating all test strings in \( C_i \))
we can find such a \( q \) and \( m \), and then add \( qm \) to \( Q_i \), which remains separable by construction.
\end{proof}
Algorithm~\ref{alg:close_vpa} outlines the \( \closeVPA{\oracle}{\qc} \) function,
which keeps applying Proposition~\ref{prop:close_vpa} until \( \qc \)
becomes separable and closed. 

\begin{algorithm}[t]
  \SetAlgoLined
  \caption{The \( \closeVPA{\oracle}{\qc} \) function.}
  \label{alg:close_vpa}
  
  \KwIn{Oracle \( \oracle \) and separable \( \qc= \predset{(Q_i,C_i)}{i \in [0..k]} \).}
  \KwOut{Separable and closed \( \qc' \).}

  Initialize \(\Sigma_M \) as \( \bigcup_{i=1..k} \{ \ac_i q \bc \mid q\in Q_i, \bc\in\retsyms \} \cup \hat\Sigma\)\;
  Initialize the work list \(W\) as \(\{ (q,i,m) \mid q\in Q_i,i \in 0..k, m \in \Sigma_M\}\)\;

  \While{\(W\) is not empty}{
    Take \((q,i,m)\) from \(W\)\;
    \If{ \(\forall q'\in Q_i\), \(qm\not\sim_{C_i}q'\) }{
      \( Q_i \gets Q_i \cup \set{qm} \)\;
      \( W \gets W \cup \predset{(qm, i, m')}{m' \in \Sigma_M} \)\;
      \If{\(i>0\)}{
        \( \Sigma_M\gets \Sigma_M\cup \{ \ac_iqm\bc \mid \bc\in \retsyms \} \)\;
        \( W \gets W \cup \predset{(q'', j, \ac_iqm\bc)}{q'' \in Q_j, j \in [0..k],\bc \in \retsyms} \)\;
      }
    }
  }
  \Return{\( \predset{(Q_i,C_i)}{i \in [0..k]} \)}\;
\end{algorithm}

\begin{proposition}\label{prop:find_counterexample}
  Suppose that \( \qc= \predset{(Q_i,C_i)}{i \in [0..k]} \) is separable and closed, and let \( \vpa \) be the hypothesis automaton (Definition~\ref{def:consVPA}). Suppose the oracle returns a counterexample \( s \) for 
an equivalence query with \( \vpa \). Using \( \log |s| \) membership queries, one can find \( i \in [1..k] \) and \( q \in \hat\Sigma^* \setminus Q_i \) and \( (w\ac_i,w') \in \hat\Sigma^*\callsyms\times\hat\Sigma^* \) such that \( (Q_i \cup \{q\}, C_i \cup \{(w\ac_i,w')\}) \) is separable, or  find \( w\in\hat\Sigma^* \) when \( i=0 \) such that  \( (Q_0 \cup \{q\}, C_0\cup \{w\}) \) is separable. 
\end{proposition}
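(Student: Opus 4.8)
The plan is to mimic Angluin's counterexample-analysis step, adapted to the $k$-SEVPA setting. Let $\vpa$ be the hypothesis automaton and let $s$ be the returned counterexample, so $\memquery_\lang(s) \neq \memquery_{(\vpa,t)}(s)$; after applying the tagging $t$ we may regard $s$ as a well-matched tagged string $\hat s \in \hat\Sigma^*$ that is classified differently by $\hat\lang$ and by $\vpa$. The key device is a notion of \emph{access sequence factorization}: running $\vpa$ on a prefix of $\hat s$ reaches some state $[q]_i$, and the corresponding access word in $Q_i$ is $q$ itself (by construction of $\consVPA{\cdot}$). I would define, for each position $j$ along $\hat s$, the string obtained by replacing the already-consumed prefix of $\hat s$ (and the already-closed nested subwords) by the access words of the states $\vpa$ passes through, so position $0$ gives $\hat s$ itself (unchanged) and position $|\hat s|$ gives a string built entirely from access words that, by definition of the transitions of $\vpa$, is accepted by $\vpa$ exactly when $\vpa$ accepts $\hat s$. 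Because $\vpa$ misclassifies $\hat s$, the first ``hybrid'' string (position $0$) and the last one (position $|\hat s|$) have different membership status in $\hat\lang$, so somewhere along the way two consecutive hybrid strings straddle the boundary of $\hat\lang$.

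The main step is then a binary search over $j$ — this is exactly where the $\log|s|$ membership queries come from — to locate an index where consecutive hybrid strings $u_j$ and $u_{j+1}$ satisfy $u_j \in \hat\lang \iff u_{j+1}\notin\hat\lang$. At that breakpoint, the two strings differ only in that one segment of $u_j$ is some ``real'' substring $r$ of $\hat s$ while in $u_{j+1}$ that segment is the access word $q' \in Q_i$ that $\vpa$ assigns to the state reached after reading that segment; equivalently, $r \sim_{C_i} q'$ according to the current test words, but $r \not\sim_i q'$ with respect to the true congruence, since plugging them into the surrounding context (which is a context of the appropriate shape — a plain suffix when $i=0$, or a $w\ac_i \cdots w'$ wrapper when $i\in[1..k]$, because the surrounding material is well-matched) yields different membership. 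That surrounding context is precisely a new test word: either some $w\in\hat\Sigma^*$ for $C_0$, or some $(w\ac_i,w')$ for $C_i$. Adding it to $C_i$ makes $q'$ and $r$ inequivalent; I then take $q := r$ (which lies in $\hat\Sigma^*\setminus Q_i$, since the access words are separated by the current $C_i$ and $q'$ was the unique one $C_i$-equivalent to $r$) and add it to $Q_i$. Separability of $(Q_i\cup\{q\}, C_i\cup\{\text{new test word}\})$ holds because the new test word distinguishes $q$ from the old representative $q'$, and all old pairs in $Q_i$ remain distinguished by the old $C_i$; the other modules $(Q_j,C_j)_{j\neq i}$ are untouched.

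I expect the main obstacle to be making the ``hybrid string'' construction precise in the presence of nesting: when the scanning position is in the middle of a nested subword $\ac_i \cdots \bc$, one must be careful about what gets replaced by an access word and what is left intact, and one must verify that each hybrid string is still well-matched and that the transition rules of $\consVPA{\cdot}$ (Definition~\ref{def:consVPA}, especially the return-transition clause with its $q'' \sim_{C_j} q'\ac_i q\bc$ condition) guarantee that replacing a consumed well-matched block by its access word does not change acceptance by $\vpa$. A clean way to handle this is to induct on the structure of the run, processing completed nested blocks as units — so the ``positions'' are really positions in the sequence of top-level symbols and completed nested blocks — which keeps every hybrid string well-matched and reduces the bookkeeping to the two context shapes already present in Definition~\ref{def:ksevpaCongr}. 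The rest (binary search correctness, the $\log|s|$ bound, and separability of the enlarged $\qc$) is then routine.
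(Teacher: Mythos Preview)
Your approach is essentially the paper's: a Rivest--Schapire-style binary search over ``hybrid'' strings obtained from the run of $\vpa$ on the counterexample, locating an index where the replacement of a one-step-extended access word by its $Q_i$-representative flips membership, and reading off the new access word and test context from the surrounding material. The paper makes exactly this construction concrete by letting $q_j$ be the state reached after prefix $s[1]\ldots s[j]$, letting the stack at that point be $T_j=(q_{j_{n'}},\ac_{j_{n'}})\cdots(q_{j_1},\ac_{j_1})\cdot\bot$, and defining the $j$-th hybrid as $w_jq_jw_j'$ with $w_j=q_{j_1}\ac_{j_1}\cdots q_{j_{n'}}\ac_{j_{n'}}$ and $w_j'=s[j+1]\ldots s[n]$; it then shows $s[j+1]$ cannot be a call symbol at the breakpoint and handles the plain and return cases separately.

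Two small corrections are worth making. First, at the breakpoint the ``segment'' $r$ you add to $Q_i$ is not a raw substring of $\hat s$: in the plain case it is $q_j\,s[j+1]$, and in the return case it is $q'\ac_i q_j\bc$ (with $(q',\ac_i)$ the popped stack symbol), both built from access words already in $\qc$. Second, your suggestion to process completed nested blocks as atomic units, taken literally, only ever lands you in module $0$ (a well-matched prefix leaves the stack empty), so you would never discover a new state or test word for $Q_i$, $C_i$ with $i>0$; you would need to recurse into the offending block, at which point you are back to the paper's character-by-character walk with an explicit stack. The paper's stack-based definition of $(w_j,w_j')$ is exactly what makes each hybrid well-matched and simultaneously produces a context of the right shape for whichever module $\vpa$ currently sits in, so I would adopt that formulation directly rather than the block-level one.
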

\begin{proof}  
  Let \( n \) be the length of \( s \). Let \( q_0 = [\epsilon]_0 \) be the initial state of \( \vpa \), and \( \delta \) be the transition function of \( \vpa \). For \( i = 1, \ldots, n \), define \( q_i = \delta(q_0, s[1] \ldots s[i]) \) to be the state in \( Q_j \) reached by \( \vpa \) after reading the prefix \( s[1] \ldots s[i] \) of \( s \), and define \( T_i \) as the corresponding stack. For convenience, we write \( [q_i]_j \) for the state \( q_i \) in module \( j \).

  We define the context of \( q_i \) as follows. When \( T_i \) is empty, we define the context to be \( (\epsilon, s[i+1] \ldots s[n]) \). Otherwise, let \( T_i=(q_{j_{n'}},\ac_{j_{n'}})\cdots (q_{j_1}, \ac_{j_1})\cdot\bot \).
  We define the context as 
  \[ (q_{j_1}\ac_{j_1}\dots q_{j_{n'}}\ac_{j_{n'}}, s[i+1] \ldots s[n]). \]
  We denote the context of \( q_i \) as \( (w_i,w_i') \).
  We say that state \( q_i \) is correct if \( \memquery_\lang(w_iq_iw_i') = \memquery_\lang(s) \). 

  State \( q_0 = [\epsilon]_0 \) is obviously correct since its context is \( (\epsilon, s) \). However, state \( q_n \) must be incorrect because of the following.
  First, state \( q_n \) must be in module \( 0 \): if the counterexample \( s \) is accepted by \( \vpa \), then \( s \) is well-matched under \( t \); otherwise, the counterexample \( s \) is in \( \hat \lang \), which is a well-matched language. Therefore, we write \( [q_n]_0 \).
  Next,   since \( s \) is a counterexample, we have \( \memquery_{(\vpa,t)}(s) \neq \memquery_\lang(s) \). By the construction of \( \vpa \), we have \( \memquery_{(\vpa,t)}(s)=\memquery(q_n) \). Therefore, we have \( \memquery(q_n)\neq \memquery_\lang(s) \), which means state \( [q_n]_0 \) is incorrect.

  Now we can then use binary search (using \( \log |s| \) membership queries)  to find \( i \) such that \( [q_i]_j \) is correct, 
  while \( [q_{i+1}]_{j'} \) is incorrect.
  We first show that \( s[i+1] \) cannot be a call symbol. Otherwise, we must have \( q_{i+1}=\epsilon \)
  and \( T_{i+1}=(q_{i},s[i+1])\cdots (q_{j_1}, \ac_{j_1})\cdot\bot \).
  The context of \( q_{i+1} \) is \( (q_{j_1}\ac_{j_1}\dots q_{i}s[i+1], s[i+2] \ldots s[n]) \).
  We have 
  \[ \memquery_\lang(w_{i+1}q_{i+1}w_{i+1}')=\memquery_\lang(q_{j_1}\ac_{j_1}\dots q_{i}s[i+1] s[i+2] \ldots s[n])=\memquery_\lang(w_{i}q_{i}w_{i}')=\memquery_\lang(s), \]
  but \( \memquery_\lang(w_{i+1}q_{i+1}w_{i+1}')\neq\memquery_\lang(s) \), a contradiction.

  Assume \( s[i+1] \) is a plain symbol. Since \( q_i \) is a state in module \( j \), we let \( Q_j' = Q_j \cup \{q_{i}s[i+1]\} \) 
  and \( C_j' = C_j \cup \{(w_{i+1},w_{i+1}')\} \).
  By definition of the transition function of \( \vpa \), \( q_{i+1} \) is the unique element of \( Q_j \) that is \( C_j \)-equivalent to \( q_{i}s[i+1] \). On the other hand, the test \( (w_{i+1},w_{i+1}') \) distinguishes \( q_{i+1} \) from \( q_{i}s[i+1] \), 
since we can get 
\( \memquery_\lang(w_{i}q_iw'_{i}) = \memquery_\lang(w_{i+1}q_is[i+1]s[i+2]...s[n]) \neq \memquery_\lang(w_{i+1}q_{i+1}s[i+2]...s[n]) \).

  Otherwise, \( s[i+1] \) is a return symbol \( \bc \). 
  Let \( T_{i}=(q_{i'},\ac_j)\cdots (q_{i_1}, \ac_{j_1})\cdot\bot \).
  Recall that at state \( q_i \), \( \vpa \) reads \( \bc \) and transfers to \( [q_{i+1}]_{j'} \) such that \( q_{i+1} \sim_{C_{j'}} q_{i'}\ac_jq_i \bc \).
  Notice that \( \memquery_\lang (w_{i+1}q_{i'}\ac_jq_i \bc w_{i+1}')=\memquery_\lang (w_{i} q_i  w_{i}')=\memquery_\lang(s) \). Therefore, \( q_{i'}\ac_jq_i \bc \) is correct.
  We let \( Q_{j'}'=Q_{j'}\cup \{ q_{i'}\ac_jq_i \bc \} \) 
  and \( C_{j'}' = C_{j'} \cup \{(w_{i+1},w_{i+1}')\} \).
  By definition of the transition function of \( \vpa \), \( q_{i+1} \) is the unique element of \( Q_{j'} \) that is \( C_{j'} \)-equivalent to \( q_{i'}\ac_jq_i \bc \). On the other hand, the test \( (w_{i+1},w_{i+1}') \) distinguishes \( q_{i+1} \) from \( q_{i'}\ac_jq_i \bc \). We conclude that \( q_{i'}\ac_jq_i \bc \notin Q_{j'} \), and that \( (Q_{j'}', C_{j'}') \) is separable. 
\end{proof}

We call the procedure in Proposition~\ref{prop:find_counterexample}
\( \updateVPA{\oracle}{\qc}{t(s)} \), which takes a separable and closed \( \qc \) and a
counterexample \( s \) and returns a separable \( \qc' \).

With these lemmas, we can prove the following theorem; its proof is provided in Appendix~\ref{app:thm:learn_vpa}.
\begin{theorem}\label{thm:learn_vpa}
  {Given a tagging function \( t \) such that language \( \hat\lang=\{ t(s) \mid s\in\lang \} \) is a VPL,} the minimal
  \( k \)-SEVPA of language \( \hat\lang \) can be learned in polynomial numbers of equivalence and membership queries.
\end{theorem}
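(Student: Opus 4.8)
The plan is to run Algorithm~\ref{alg:learn_vpa} and argue that (i) every iteration makes measurable progress toward the minimal \ksevpa, and (ii) each iteration costs only polynomially many queries; together these bound the total number of queries. The progress measure will be $\sum_{i\in[0..k]}|Q_i|$, the total number of access words held across all modules. By Proposition~\ref{prop:bound_states}, separability guarantees this quantity never exceeds $N$, the number of states of the minimal \ksevpa for $\hat\lang$. The loop invariant I would maintain is that, at the top of each \textbf{while} iteration, $\qc$ is separable and closed (established initially by the call to $\closeVPA{\oracle}{\qc}$ and re-established at the end of each iteration). Under this invariant $\consVPA{\qc}$ is well-defined (Definition~\ref{def:consVPA}).

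The key step is to show each iteration strictly increases $\sum_i |Q_i|$. When the equivalence query on $\consVPA{\qc}$ returns a counterexample $s$, Proposition~\ref{prop:find_counterexample} (invoked as $\updateVPA{\oracle}{\qc}{t(s)}$) produces, using $\log|s|$ membership queries, some module index $i$, a new access word $q \notin Q_i$, and a new test word, such that adding both keeps $\qc'$ separable; in particular $q$ is genuinely new, so $|Q_i|$ grows by at least one. Then $\closeVPA{\oracle}{\qc'}$ only ever adds strings to the $Q_i$'s (Proposition~\ref{prop:close_vpa} and Algorithm~\ref{alg:close_vpa}), so it preserves separability and cannot decrease the count. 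Hence after at most $N$ iterations the progress measure hits its ceiling and the equivalence query must return no counterexample; at that point $\consVPA{\qc}$ has exactly $N$ states, is separable and closed, and I claim it is isomorphic to the minimal \ksevpa. For the last claim: separability with the full quotient-size $N$ forces each $\sim_{C_i}$ to have exactly as many classes as $\sim_i$, so $\sim_{C_i}$ refines to $\sim_i$; the transition structure of $\consVPA{\qc}$ then matches the congruence-induced transitions described after Definition~\ref{def:ksevpaCongr}, and since the equivalence query succeeded the recognized language is $\hat\lang$, pinning down acceptance states as well.

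For the per-iteration query count: $\closeVPA{\oracle}{\qc}$ processes a work list whose entries are triples $(q,i,m)$ with $q$ ranging over the (at most $N$) access words, $i$ over $k+1$ modules, and $m$ over $\Sigma_M$, which has size $O(|\hat\Sigma| + k\cdot N \cdot |\retsyms|)$; each work-list item triggers at most $|C_i|$ membership queries, and $|C_i|$ is bounded by the number of test words accumulated so far, which is $O(N)$. So one closure call is polynomial in $N$, $k$, and $|\hat\Sigma|$, and the same bound (plus the $\log|s|$ binary-search queries, where $|s|$ is the counterexample length returned by the teacher) covers $\updateVPA{-}{-}{-}$. Multiplying the polynomial per-iteration cost by the at-most-$N$ iterations, and adding the $N$ equivalence queries (one per iteration plus the final successful one), yields a total that is polynomial in the size of the minimal \ksevpa and the alphabet; this is exactly what Theorem~\ref{thm:learn_vpa} asserts.

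The main obstacle I anticipate is the final identification step — arguing that a separable, closed $\qc$ whose size equals $N$ yields \emph{the} minimal \ksevpa rather than merely \emph{a} correct VPA. The subtlety is that $\sim_{C_i}$ is only an approximation of $\sim_i$ from finitely many test words, so one must rule out the degenerate possibility that $\consVPA{\qc}$ accidentally collapses or splits states in a way that still recognizes $\hat\lang$ but is not the canonical machine; the counting argument via Proposition~\ref{prop:bound_states} (each $Q_i$ class injects into a $\sim_i$ class, and equality of totals forces bijectivity) is what closes this gap, but it needs to be spelled out carefully, including why the well-matched restriction in Definition~\ref{def:ksevpaCongr} is compatible with the access/test words actually generated by the algorithm. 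A secondary technical point is confirming that the counterexample $s$ returned by the teacher, once tagged via $t(s)$, is well-matched so that Proposition~\ref{prop:find_counterexample} applies — this follows because $\hat\lang$ consists of well-matched strings and $\consVPA{\qc}$ only accepts well-matched strings, so whichever side $s$ witnesses, $t(s)$ is well-matched.
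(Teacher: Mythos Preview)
Your proposal is correct and follows essentially the same approach as the paper: bound $\sum_i|Q_i|$ by the minimal \ksevpa size $m$ via Proposition~\ref{prop:bound_states}, observe that each counterexample strictly grows this sum so at most $m$ equivalence queries occur, and tally the membership queries from $\closeVPA{-}{-}$ and $\updateVPA{-}{-}{-}$ to get a polynomial bound (the paper makes this explicit as $O(m^3|\Sigma|^2+m\log n)$ membership and $O(m)$ equivalence queries, where $n$ is the maximum counterexample length). Your extra care on the isomorphism-to-the-minimal-\ksevpa claim and on well-matchedness of the tagged counterexample goes slightly beyond what the paper's terse proof spells out, but is entirely in the same spirit.
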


Therefore, if language \( \hat \lang = \{ t(s) \mid s\in\lang \} \) is a VPL, then we can use Algorithm~\ref{alg:learn_vpa} to learn the minimal \( k \)-SEVPA of language \( \hat\lang \). However, in general, a tagging function \( t \) does not necessarily introduce a VPL \( \hat \lang \), even if each sentence in \( \hat\lang \) is well-matched. For example, consider the language \( \hat\lang_\oracle=\{ a^kb^k\mid k>0 \} \) and the tagging \( t \) that maps \( a \) and \( b \) to plain symbols. The resulting language is trivially well matched (as it does not have call/return symbols), but it is not a VPL.
In the next section, we discuss the procedure to find a right tagging function that makes \( \lang \) a VPL.


\subsection{Tagging Inference}\label{subsec:learn_tag}

\begin{figure}
    \centering
    \begin{align*}
        L &\to \ab{\ A\ }\ L \mid c\ B \mid \epsilon\\
        A &\to \callsym{g}\ L\ \retsym{h}\ E\\
        B &\to d \ L \\
        E &\to\epsilon\\[0.5ex]
        \mbox{Seed strings } & S=\set{agcdcdhbcd}
    \end{align*}
    \caption{An oracle VPG and a set of seed strings.}
    \label{fig:example_vpg}
    \Description{An oracle VPG and a seed string.}
\end{figure}

We next propose an algorithm that infers a tagging function \( t \) so that
the tagged language \( \hat\lang_t=\{ t(s) \mid s\in\lang \} \) is a VPL.
Then with Theorem~\ref{thm:learn_vpa}, we can use the tagging function
in Algorithm~\ref{alg:learn_vpa} to learn the VPL efficiently.
Our algorithm takes a set of seed strings \( S \) for inference.
In practice, the seed strings can be collected via existing corpora of data (e.g., a corpus of XML strings) or
via valid inputs to black-box implementations of oracles (e.g., an XML parser).

We will use the oracle VPG 
in Figure~\ref{fig:example_vpg} as a running example, which includes a single seed
string. Note that seed strings are untagged: it is the
task of our algorithm to infer the tagging.
As discussed earlier, the inferred and the oracle tagging
functions may differ. For the example VPG in
Figure~\ref{fig:example_vpg}, we can remove the tags on either
\( (\callsym{a}, \retsym{b}) \) or \( (\callsym{g}, \retsym{h}) \) and the
resulting grammar is still a VPG and generates the same untagged
language. As we will explain, in such a case the algorithm opts for the
outermost tags in its inferred VPG (i.e., \( \callsym{a} \) and
\( \retsym{b} \)), while treating \( {g} \) and \( {h} \) as plain symbols.

This section unfolds as follows. We first introduce a VPL pumping
lemma, which enables a nesting test to filter out invalid taggings. We
then present a tagging inference algorithm based on the nesting test
and state theorems discussing properties of the algorithm. We leave
most of the proofs of these theorems to
Appendix~\ref{app:proofs_of_learn_tag}.

One straightforward observation is that a tagging \( T \) is invalid if some seed string after tagging through \( T \) is not well matched.
However, this test alone would not eliminate too many possibilities. We next introduce a nesting test 
to filter out more invalid taggings. For this, we first propose a pumping lemma for VPLs. This lemma diverges from traditional pumping lemmas for regular languages and context-free languages by focusing on the unique requirements of call and return symbols in VPLs.

\begin{lemma}
  [Pumping Lemma for VPLs]\label{lemma:pumping}
  For any VPL \( \hat\lang \), there exists a positive number \( l \) such that, for any string \( s \) in \( \hat\lang \) with length greater than \( l \), it is possible to express \( s \) according to one of the following conditions:
  \begin{enumerate}
    \item (Regular Pumping) We can partition \( s \) into \( s = uxv \) for strings \( u, x, \) and \( v \), with \( x \) being non-empty, such that \( ux^kv \) remains in \( \hat\lang \) for all \( k \geq 0 \). 
    \item (Nesting Pumping) We can partition \( s \) into \( s = uxzyv \) for strings \( u, x, z, y, \) and \( v \), with \( x \) and \( y \) being non-empty, \( x \) containing a call symbol, and \( y \) containing a return symbol, such that \( ux^kzy^kv \) is valid for all \( k \geq 1 \).
  \end{enumerate}
\end{lemma}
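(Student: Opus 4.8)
The plan is to adapt the standard pumping-lemma argument for context-free languages to the visibly pushdown setting, using a derivation tree of the target string in a VPG for $\hat\lang$. First I would fix a VPG $G = (V, \Sigma, P, L_0)$ generating $\hat\lang$ and set $l$ to be large enough (exponential in $|V|$, as in the CFG pumping lemma) that any string $s \in \hat\lang$ with $|s| > l$ has a derivation tree containing a root-to-leaf path along which some nonterminal $L$ repeats. Pick such a repeated nonterminal $L$ with two occurrences, an outer one deriving a subword $w_{\text{out}}$ and an inner one deriving $w_{\text{in}}$, so that $s = u\, w_{\text{out}}\, v$ and $w_{\text{out}} = x\, w_{\text{in}}\, y$ for some $u,v,x,y$, with $xy$ nonempty. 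Because $L \Rightarrow^* x L y \Rightarrow^* x w_{\text{in}} y$, the usual substitution argument gives $L \Rightarrow^* x^k w_{\text{in}} y^k$ for all $k \ge 0$, hence $u x^k w_{\text{in}} y^k v \in \hat\lang$.

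The case split in the statement comes from the shape of the VPG production rules. Since $G$ is a well-matched VPG, every rule is $L \to \epsilon$, $L \to c L_1$ (linear), or $L \to \ab{L_1} L_2$ (matching). The key structural fact I would establish is: if along the repeated-nonterminal path between the two occurrences of $L$ only linear rules are ``spent'' on the side that contributes to $x$ or $y$, then one of $x,y$ is empty and we are in the regular-pumping case $s = u x' v$ with $x'$ the nonempty one of $x,y$ (a single nonempty pumpable block, giving condition~(1) after renaming). If, on the other hand, at least one matching rule $L' \to \ab{L_1'} L_2'$ is applied strictly between the two occurrences of $L$ in a way that puts the inner occurrence inside the $L_1'$ subtree, then the call symbol $\ac$ lands in $x$ and its matched return symbol $\bc$ lands in $y$; moreover both $x$ and $y$ are then nonempty (each contains at least that one symbol). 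Writing $z = w_{\text{in}}$, this is exactly condition~(2): $s = u x z y v$ with $x$ containing a call symbol, $y$ containing a return symbol, and $u x^k z y^k v \in \hat\lang$ for all $k \ge 1$ (and in fact $k \ge 0$, but $k \ge 1$ suffices).

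To make the case analysis clean I would choose the derivation tree and the repeated nonterminal carefully: take a tree of minimal size for $s$, and among repetitions on a longest root-to-leaf path pick the two occurrences of a nonterminal closest to the leaf, so that the subtree between them is ``small'' and its yield-on-one-side is controlled. The main obstacle I anticipate is ensuring the dichotomy is genuinely exhaustive and that in the ``nesting'' case the pumped call in $x$ is really matched by the pumped return in $y$ — i.e., that pumping $x^k z y^k$ preserves well-matchedness. This follows because in a matching rule $L' \to \ab{L_1'} L_2'$ the bracket $\ac \cdots \bc$ is syntactically balanced, so iterating the sub-derivation $L \Rightarrow^* xLy$ nests balanced brackets inside balanced brackets; I would spell this out by induction on $k$, using that the matched call/return pair introduced at that rule application always surrounds the inner copy. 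A minor point to handle is degenerate subcases (e.g. $w_{\text{in}} = \epsilon$, or the repeated nonterminal producing the empty string on one side), which collapse into condition~(1); I would note that these still satisfy the stated form since only nonemptiness of $x$ (resp. of $x$ and $y$) is required.
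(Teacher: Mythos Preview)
Your proposal is correct and follows essentially the same approach as the paper: fix a VPG for $\hat\lang$, force a repeated nonterminal $L$ on a root-to-leaf path, write $L \Rightarrow^* s_1 L s_2$, and split into cases. The one difference worth noting is how the case split is phrased. The paper splits directly on whether $s_2$ is empty: if $s_2 = \epsilon$ then $s_1 \neq \epsilon$ (since $L \to L$ is not a rule) and regular pumping applies; if $s_2 \neq \epsilon$ then, because every linear rule $L' \to cL_1'$ and every ``go to $L_2'$'' step in a matching rule keeps the active nonterminal rightmost, some step on the path must send the inner $L$ into the $L_1'$ child of a matching rule, yielding the nesting decomposition. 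This makes the exhaustiveness you flag as the ``main obstacle'' immediate, whereas your formulation (``only linear rules spent'' versus ``a matching rule puts the inner occurrence in $L_1'$'') leaves a third case---matching rules used but always descending into $L_2'$---that you would still need to fold into the regular-pumping branch. That folding is easy (such steps contribute only to the left, so $y = \epsilon$), but the paper's ``is $s_2$ empty?'' split sidesteps the issue entirely.
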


For example, consider the VPG in Figure~\ref{fig:example_vpg}.
Any string in \( \hat\lang \) with length greater than 6 can be decomposed based on the above two ways; e.g., for the tagged seed string \( s=\ac \callsym{g} cdcd \retsym{h} \bc cd \), we have \( (\ac \callsym{g})^k cdcd (\retsym{h} \bc)^k cd \) in the language, for \( k \geq 1 \); it happens it can also be decomposed through regular pumping: we have \( \ac \callsym{g} cd(cd)^j \retsym{h} \bc cd  \) in the language, for \( j \geq 0 \). 
We now extend the concept of nesting pumping to untagged strings, calling them nesting patterns.

\begin{definition}
  [Nesting Patterns]
  For an untagged string \( s \) in the oracle language \( \lang \),
  a nesting pattern is a partitioning of
  \( s = uxzyv \), where (1) \( x \) and \( y \) are non-empty, (2) \( ux^kzy^kv \in \lang \) for all \( k \geq 1 \), (3) but for \( k \neq j \) (both \( \geq 0 \)), \( ux^kzy^jv \not\in \lang \). 
The third condition precludes the possibility that \( uxzyv \)
represents a regular pumping, which allows \(ux^jzyv\) and \(uxzy^jv\)
for all \( j \geq 0 \).  When \( u \), \( z \), and \( v \)
are not the focus, we may succinctly write a nesting pattern in a string as
a pair \((x,y)\).
\end{definition}

\begin{definition}[Compatible Tagging]
  We say that tagging \( t \) is {compatible with a nesting pattern \(s=uxzyv\),} if there exists a pair \((\ac,\bc)\) in \( t \), such that (1) \( x \) includes \( a \) and \( y \) includes \( b \), and (2) {\( t(x) \) includes an unmatched \( \ac \) and \( t(y) \) includes an unmatched \( \bc \)} in \( t(s) \).

  We say that tagging \( t \) is compatible with a set of seed strings \( S \), if (1) strings in \( S \) are well-matched under \( t \), and (2) \( t \) is compatible with all nesting patterns of \( S \). \edited{We say that tagging \( t \) is compatible with language \( \lang \) if it is compatible with each string in the language.}
\end{definition}

\begin{theorem}\label{thm:compatible_tagging}
  Given an oracle language \( \lang \),
  for any tagging \( t \) compatible with \( \lang \), language \( \hat\lang_t=\{ t(s) \mid s\in\lang \} \) is a VPL.
\end{theorem}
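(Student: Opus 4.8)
The plan is to show that the tagged language $\hat\lang_t$ is accepted by some visibly pushdown automaton over $\hat\Sigma_t$, which by definition makes it a VPL. The natural route is to avoid building an automaton from scratch and instead argue that $\hat\lang_t$ is closed under the standard ``Myhill--Nerode for VPLs'' congruence of bounded index --- equivalently, that the congruence relations $\sim_i$ of Definition~\ref{def:ksevpaCongr} (defined relative to $\hat\lang_t$) have finite index. But that characterization presupposes we already know $\hat\lang_t$ is a VPL, so instead I would take the more self-contained contrapositive approach suggested by the pumping lemma (Lemma~\ref{lemma:pumping}). The key observation is: a well-matched language $\hat\lang$ over $\hat\Sigma$ fails to be a VPL essentially because it has ``unbounded nesting depth that is not recoverable from the visible stack discipline'' --- concretely, one can pump a factor that pushes (a call) without being able to independently pump the matching pop (a return). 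Compatibility is precisely designed to forbid this.

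First I would set up the finite-index argument cleanly. Since $\Sigma$ is finite and each character is tagged in exactly one of three ways, $\hat\Sigma_t$ is finite. I would then consider the set of ``reachable configurations'' of a hypothetical reading of $\hat\lang_t$: a state summarizing membership behavior together with a stack of pending call symbols. The crucial claim is that two well-matched prefixes reaching the ``same state modulo $\sim_0$'' and two call-contexts that are $\sim_i$-equivalent behave identically; this is exactly the content of Definition~\ref{def:ksevpaCongr}, and a VPA exists iff all $k+1$ relations have finite index (this is Alur--Madhusudan's congruence characterization, which the paper already invokes around Definition~\ref{def:ksevpaCongr} and Proposition~\ref{prop:bound_states}). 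So it suffices to prove each $\sim_i$ has finitely many classes. Suppose for contradiction one of them --- WLOG $\sim_0$, the right congruence on well-matched strings --- had infinite index. Then there is an infinite sequence of well-matched strings pairwise inequivalent, hence arbitrarily long well-matched strings that are ``not $\sim_0$-equivalent to any shorter string,'' and I would extract from a long enough member of $\hat\lang_t$ (or of a witnessing context) a string exceeding the pumping length $l$ of some genuine VPL containing $\hat\lang_t$ --- but here is the subtlety: $\hat\lang_t$ is the thing we are trying to prove is a VPL, so I cannot yet invoke Lemma~\ref{lemma:pumping} for it.

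The right move, then, is to work at the level of the \emph{untagged} language $\lang$ and transfer structure through $t$. Here the compatibility hypothesis does the work. Take any long string $s\in\lang$. If it witnesses non-regularity through a pumping decomposition, then either it is a regular pumping (harmless --- contributes only finite-index blow-up of the $\sim_0$-type, since the repeated factor has no net stack effect) or it is a nesting pattern $(x,y)$; by compatibility, $t(x)$ contains an unmatched $\ac$ and $t(y)$ a matching unmatched $\bc$, so after tagging, the pumped string $ux^kzy^kv$ reads in $\hat\lang_t$ as ``push $k$ times, do $z$, pop $k$ times,'' which is exactly what a VPA handles with its stack --- the two unbounded parts are \emph{coupled through the visible stack}, not independent. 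Formally I would argue: define the congruence classes of $\sim_i$ and show each pumping phenomenon in $\lang$ that would create a new class is absorbed by the stack discipline of the tagged alphabet. Equivalently (and perhaps cleaner to write), I would directly exhibit a VPA: its stack alphabet records pending call symbols of $\hat\Sigma_t$; its finitely many states are the $\sim_0$-classes of well-matched strings together with the $\sim_i$-classes of call-contexts; compatibility guarantees that whenever the untagged language forces unbounded memory, that memory is carried by matched call/return pairs, which the stack tracks faithfully, so only finitely many states are needed. The correctness of this VPA --- that it accepts exactly $t(s)$ for $s\in\lang$ --- follows because $t$ is a letter-to-letter (length-preserving, deterministic) map, so $s\in\lang \iff t(s)\in\hat\lang_t$ by definition of $\hat\lang_t$, and well-matchedness of every tagged string (part (1) of compatibility with $\lang$) ensures every accepting run has empty stack at the end.

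I expect the main obstacle to be making rigorous the claim that ``every source of unbounded Myhill--Nerode index in $\lang$ is, after tagging, of nesting type and hence stack-trackable.'' The clean way to discharge it is: assume $\hat\lang_t$ is not a VPL; then by the Alur--Madhusudan congruence characterization some $\sim_i$ has infinite index; unwind this to get, inside $\lang$, an infinite family of strings whose pairwise-distinguishing behavior cannot be explained by any bounded-stack reading; invoke an iteration/pigeonhole argument on these strings (analogous to the proof of Lemma~\ref{lemma:pumping} itself) to extract a genuine nesting pattern $(x,y)$ in $\lang$ for which \emph{no} $(\ac,\bc)$ pair of $t$ puts an unmatched $\ac$ in $t(x)$ and unmatched $\bc$ in $t(y)$ --- i.e. the pumping is ``invisible'' to the tagging; this contradicts compatibility of $t$ with that nesting pattern. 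The one point requiring care is handling regular pumpings and mixed cases (a string may admit both), and ensuring the extracted pattern genuinely satisfies the third condition of Definition (Nesting Patterns) --- the $k\neq j$ exclusion --- so that compatibility is actually applicable; I would handle this by choosing the decomposition minimally and appealing to the structure of the pumping lemma's case analysis.
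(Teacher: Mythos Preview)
Your proposal has a genuine gap: you never invoke the hypothesis that the \emph{oracle} tagging $t_\oracle$ makes $\lang$ a VPL. In the paper's setting (Section~\ref{subsec:charVPL_problem_statement}), the oracle knows a tagging $t_\oracle$ with $\hat\lang_\oracle=\{t_\oracle(s)\mid s\in\lang\}$ already a VPL, and this is essential. Your contrapositive---assume $\hat\lang_t$ has infinite-index congruence, then ``invoke an iteration/pigeonhole argument analogous to the proof of Lemma~\ref{lemma:pumping}'' to extract a nesting pattern invisible to $t$---never grounds out: Lemma~\ref{lemma:pumping} is proved via parse trees of a VPG and therefore presupposes the language is a VPL, exactly what you are trying to establish for $\hat\lang_t$. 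Without $t_\oracle$ you have no finite combinatorial structure on $\lang$ to pigeonhole against, so the extraction step is unsupported.

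The paper's route is different and makes the role of $t_\oracle$ explicit. It first proves an intermediate lemma: for any compatible $t$, every $t$-well-matched substring $s$ (occurring in some valid context) contains at most a fixed number $N_t$ of \emph{unmatched oracle} call/return symbols. The proof of that lemma is where compatibility is actually used---if $s$ had $K$ unmatched oracle calls, one rewrites the oracle derivation to expose $K$ disjoint nesting patterns, uses compatibility to locate unmatched $t$-calls inside each, and then a careful case analysis on the counts of unmatched $t$-symbols shows this would force $K$ distinct $\sim_{t_\oracle}$-classes, contradicting finite index of the oracle congruence. With $N_t$ in hand, any $t$-well-matched string $p$ decomposes (under $t_\oracle$) as $p_1 a_1 p_2 a_2\cdots a_K p_{K+1}$ with $K\le N_t$ and each $p_i$ oracle-well-matched; replacing each $p_i$ by its bounded-length $\sim_{t_\oracle}$-representative yields a bounded-length $\sim_t$-representative for $p$, so $\sim_t$ has finite index and $\hat\lang_t$ is a VPL. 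In short, the reduction goes \emph{through} the oracle VPL, not around it; your outline needs to bring $t_\oracle$ back into the argument and supply the bound on unmatched oracle symbols, which is the real content of the theorem.
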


For the example in Figure~\ref{fig:example_vpg}, the single seed string's nesting patterns include 
\[ \{(ag, hb),(agcd,hb), \dots, (ag,cdcdhbcd)\}. \] 
One compatible tagging is 
\( \{ (a, b)\} \): firstly, the tagging would make the seed string well matched; secondly, each nesting pattern includes \( (a, b) \). 
By Theorem~\ref{thm:compatible_tagging}, when we tag \( a \) as a call symbol and \( b \) as a return symbol, the oracle language becomes a VPL. 
Other compatible taggings include \( \{ (a, h)\} \), \( \{ (g, h)\} \), \( \{ (g, b)\} \), and \( \{ (a, b), (g, h) \} \). In contrast, the tagging \( \{ (a,h), (g,b) \} \) is incompatible since this tagging would not make the seed string well matched.

Recall that Theorem~\ref{thm:learn_vpa} tells us that, if a tagging
\( t \) makes \( \lang \) a VPL, we can efficiently learn the VPL under active
learning through Algorithm~\ref{alg:learn_vpa}. Now
Theorem~\ref{thm:compatible_tagging} tells us that a compatible
tagging \( t \) makes \( \lang \) a VPL. Therefore, what is remaining is to infer
a compatible tagging. With such a tagging, we can use it
in Algorithm~\ref{alg:learn_vpa} to learn a VPL whose untagged strings
are the same as the oracle language.

\begin{algorithm}[t]
  \SetAlgoLined
  \caption{The \( \tagInfer{\oracle}{S} \) algorithm that infers tagging. 
  }
  \label{alg:backtracking_tagging}

  \KwIn{Oracle \( \oracle \) and seed strings \( S \).}
  \KwOut{Some tagging \( T \) compatible with \( S \), or \texttt{None} if no compatible tagging is found.}

\SetKwProg{Fn}{Function}{:}{}

\SetKwFunction{GetCandidates}{\( \candidateNPKW \)}
\Fn{\GetCandidates{\( S,K \)}}{
    \( N_{S,K} \gets \emptyset \)\;
    \ForEach{{partitioning } \( uxzyv \) {of\ } \( s \in S \)} {
      \lIf{\( \forall k \leq K, \memquery_\lang(ux^kzy^kv)=\mathrm{True} \) 
      {and \( \forall k,j \leq K, k\neq j \limply  \memquery_\lang(ux^kzy^jv)=\mathrm{False} \)}}{\( N_{S,K} \gets N_{S,K} \cup \set{uxzyv} \)}
    }
    \Return{\( N_{S,K} \)}\;
}

\SetKwFunction{TagSearch}{\( \tagSearchKW \)}
\Fn{\TagSearch{\( N, \ndone,T \)}}{
  \lIf{\( N \) is empty}{\Return{\( \texttt{Some}(T) \)}}
  Take a nesting pattern \( uxzyv \) from \( N \)\;
  \If{\( T \) is incompatible with \( uxzyv \)}{
    \ForEach{character \( a \) in \( x \) and \( b \) in \( y \) 
    }{
      \If{all strings in \( S \) are well-matched under \( T \cup \set{(a,b)} \) 
      \\\quad {and \( T \cup \set{(a,b)} \) is compatible with \( \ndone \cup\set{uxzyv} \)}}{
        \( T' \gets \tagSearch{N\setminus \set{uxzyv}, \ndone \cup \set{uxzyv}, T \cup \set{(a,b)}} \)\;
        \lIf{\( T' \) is not \texttt{None}}{\Return{\( T' \)}}
      }
    }
    \Return{\texttt{None}}; \tcp{No compatible tagging found}
  }
  \lElse {\Return{\( \tagSearch{N\setminus \set{uxzyv}, \ndone \cup \set{uxzyv}, T } \)}}
}
Initialize \( K \) as \( 1 \)\;
\Repeat{\( T\neq\texttt{None} \)}{
  \( K \gets K+1 \); \( N_{S,K} \gets  \candidateNP{S}{K} \);
  \( T \gets  \tagSearch{N_{S,K},\emptyset,\emptyset} \);
}
{\Return{\( T \)}\;}
\end{algorithm}

We next describe Algorithm~\ref{alg:backtracking_tagging}, which
performs inference of a compatible tagging using an input set of seed strings. 
Its runtime complexity is exponential in the worst case; however, as will be discussed in our evaluation section, its runtime performance on practical grammars is efficient.
As an overview, starting with a bound \( K=2 \), the algorithm (1) employs
a bounded checking approach in the \( \candidateNPKW \) function to
compute candidate nesting patterns \( N_{S,K} \) for seed strings \( S
\), and (2) for \( N_{S,K} \), the \( \tagSearchKW \) function tries to
find a compatible tagging using a search algorithm (which may
backtrack). If the \( \tagSearchKW \) function fails to find a compatible tagging,
we increase \( K \) by
\( 1 \) and start anew.
In more detail, in \( \candidateNPKW \), for each disjoint substring pair \( (x,y) \) in each seed string, we check if \( ux^kzy^kv \in \lang \) for \( k \leq K \) and check if \( ux^kzy^jv \not\in \lang \) for \( k,j\leq K\) {and} \( k \neq j \). If so, \( (x,y) \) is a candidate nesting pattern. In the \( \tagSearchKW \) function, we begin with an empty tagging \( T \), which tags each character as a plain symbol. We then check if a candidate nesting pattern is already covered by the current tagging; if not, we treat a symbol in \( x \) as a call symbol and a symbol in \( y \) as a return symbol and continue the search process. 

Returning to our example shown in Figure~\ref{fig:example_vpg}, the seed string includes the nesting pattern \( \{(ag,hb)\} \). Our algorithm prioritizes the outermost characters for pairing. Consequently, the pair \( (a,b) \) is selected and it covers all nesting patterns of the seed string, resulting in a compatible tagging \( \{ (a,b) \} \).

The following theorem states that for some bounded \( K \)
Algorithm~\ref{alg:backtracking_tagging} terminates and returns a compatible tagging.

\begin{theorem}[Termination and Correctness of Algorithm~\ref{alg:backtracking_tagging}]
  \label{lemma:backtrack_result}
  Let \( m \) be the number of states of the minimal \( k \)-SEVPA for the oracle VPL.
  There exists a number \( K\leq ((m^2+2m)^2+1)^2 \), such that \( \tagInfer{\oracle}{S} \) returns a tagging that is compatible with a finite set of seed strings \( S \).
\end{theorem}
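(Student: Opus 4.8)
I would split the statement into \emph{partial correctness} (any tagging returned is compatible with \( S \)) and \emph{termination} (some \( K\le((m^2+2m)^2+1)^2 \) makes the search succeed), with the quantitative bound coming out of the termination analysis. A fact used throughout is that a compatible tagging of \( S \) exists at all: viewing the oracle tagging \( t_\oracle \) as a pairing \( T_\oracle\subseteq\Sigma\times\Sigma \) (legitimate by the Unique Pairing assumption), every genuine nesting pattern \( uxzyv \) of a string of \( \lang \) is an instance of nesting pumping in \( \hat\lang_\oracle \) and not of regular pumping (that is exactly what clause (3) of the nesting-pattern definition rules out), so its repeated factors must carry a matched unmatched-call / unmatched-return pair of \( t_\oracle \); hence \( T_\oracle \) is compatible with \( \lang \), and so with \( S \).

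For partial correctness: the recursion \( \tagSearchKW \) maintains the invariant that \( T \) is well-matched on \( S \) and compatible with every pattern already moved into \( \ndone \), and it returns \( \texttt{Some}(T) \) only once the worklist is empty, i.e.\ once every element of \( N_{S,K} \) has passed through \( \ndone \). Since a genuine nesting pattern of \( S \) passes the bounded membership checks of \( \candidateNPKW \) for every \( K \), the genuine patterns are always \( \subseteq N_{S,K} \); hence a returned \( T \) is well-matched on \( S \) and compatible with all genuine patterns of \( S \), which is precisely compatibility with \( S \). Spurious members of \( N_{S,K} \) only pile on extra constraints and cannot make the output incorrect.

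For termination, the key sublemma is that the candidate sets stabilise: because \( S \) is finite, the partitionings \( uxzyv \) of its strings range over a finite universe and \( N_{S,K} \) is nonincreasing in \( K \), so I only need to bound, for each partitioning that is \emph{not} a genuine nesting pattern, the least \( K \) that deletes it. Such a partitioning fails because either \( ux^kzy^kv\notin\lang \) for some \( k \), or \( ux^kzy^jv\in\lang \) for some \( k\ne j \); pushing through \( t_\oracle \) these are membership questions about \( t_\oracle(u)\,t_\oracle(x)^k\,t_\oracle(z)\,t_\oracle(y)^j\,t_\oracle(v) \) in \( \hat\lang_\oracle \). Running the minimal \ksevpa (with \( m \) states) on such strings, the state-and-stack-profile data accumulated across the repeated blocks is eventually periodic, and a product/pumping argument comparing the entry and exit configurations of one pumped block bounds the relevant preperiod and period by \( m^2+2m \); accounting for the diagonal test \( ux^kzy^kv \) costs one more product layer, \( (m^2+2m)^2+1 \), and the off-diagonal test \( ux^kzy^jv \), with two independent pumping indices, one final squaring, yielding a threshold \( K_0\le((m^2+2m)^2+1)^2 \) past which \( N_{S,K} \) equals exactly the genuine-pattern set of \( S \). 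Pinning down these constants—showing that non-genuineness, if present, is already visible inside a window tied to \( m^2+2m \) and its two iterations—is the main obstacle; everything around it is bookkeeping.

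It then remains to show \( \tagSearch{N,\emptyset,\emptyset}\ne\texttt{None} \) when \( N \) is the genuine-pattern set. I would exhibit a successful branch of the exhaustive backtracking search: the branch that builds a sub-tagging of \( T_\oracle \), adding, for each newly popped pattern not already witnessed by the current \( T \), an oracle pair \( (\ac,\bc) \) that witnesses it (such a pair lies in the loop's candidate pool since \( a \) occurs in \( x \) and \( b \) in \( y \)). Adding it keeps the tagging inside \( T_\oracle \), so by the Dyck fact that erasing one bracket type from a balanced word leaves it balanced it remains well-matched on \( S \); and since in a genuine nesting pattern the pumped call-factor \( t_\oracle(x) \) has no syntactically unmatched return, enlarging the partial tagging within \( T_\oracle \) preserves every witness already established—so both side-conditions of the \( \texttt{ForEach} \) loop hold along this branch, which therefore drives \( N \) to empty; verifying this witness-preservation is a secondary delicate point. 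Assembling the pieces: the outer loop of \( \tagInfer{\oracle}{S} \) increments \( K \); at \( K=K_0 \) the candidate set is the genuine-pattern set and \( \tagSearchKW \) returns some \( T \), which by partial correctness is compatible with \( S \)—and the loop may well stop earlier with another compatible tagging. Either way it terminates with \( K\le K_0\le((m^2+2m)^2+1)^2 \).
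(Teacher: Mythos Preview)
Your overall decomposition into partial correctness plus termination, and your plan to exhibit the oracle tagging as a successful branch of the backtracking search, match the paper's strategy. The partial-correctness paragraph is fine.

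The termination argument, however, takes a different route from the paper, and the route you chose carries an obstacle the paper simply does not face. You aim to show that \( N_{S,K_0} \) \emph{equals} the set of genuine nesting patterns of \( S \), i.e.\ that every non-genuine partitioning is eliminated by some witness with indices \( \le K_0 \). The paper never claims this and does not need it. Its key lemma (Lemma~\ref{app:lemma:verify_nesting_pattern_for_string} in the appendix) is a statement about the \emph{survivors}: if a partitioning \( uxzyv \) passes the bounded test at level \( K_0 \), then \( t_\oracle(x) \) already contains an oracle call symbol that is matched, in \( t_\oracle(s) \), with an oracle return symbol inside \( t_\oracle(y) \). This immediately gives that the oracle tagging \( T_\oracle \) is compatible with \emph{every} element of \( N_{S,K_0} \), genuine or not, so the search succeeds at \( K_0 \) (along exactly the branch you describe). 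In other words, the paper bypasses your ``main obstacle'' rather than solving it; proving that \( N_{S,K_0} \) coincides with the genuine set is strictly stronger than what is required, and your sketch does not establish it at the stated bound.

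Relatedly, your derivation of the constant does not match how it actually arises. The number \( ((m^2+2m)^2+1)^2 \) is not an automaton-level periodicity bound on state-and-stack profiles. The appendix lemma works over a VPG \( G=(\hat\Sigma,V,P,L_0) \) for \( \hat\lang_\oracle \) and uses a pigeonhole on the pair of nonterminals \( (L_{k,1},L_{k,2}) \) flanking each block \( x^k \) in its derivation, giving the threshold \( (|V|^2+1)^2 \). The factor \( m^2+2m \) is then the bound on \( |V| \) for the specific VPG obtained from the minimal \( k \)-SEVPA via the Alur--Madhusudan VPA-to-VPG construction (their Theorem~5.3), not a bound on preperiod or period of runs. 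If you want the exact constant in the statement, you should reproduce this grammar-level pigeonhole and cite that conversion, rather than the automata-theoretic argument you sketch.
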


Note that the theorem is with respect to a finite set of seed strings \( S \). It does not say whether the found tagging is 
compatible with all strings in the oracle language \( \lang \). We address this by demonstrating the existence of a finite set of seed strings \( S_0 \), for which a compatible tagging \( T \) with \( S_0 \) ensures compatibility with all strings in \( \lang \).

\begin{theorem}[Finite and Sufficient Seed Strings]\label{thm:construct_seed_strings}
  For any given oracle language \(\lang\), there exists a finite set of seed strings, denoted as \( S_0 \), such that any tagging that is compatible with \( S_0 \) is also compatible with \(\lang\).
\end{theorem}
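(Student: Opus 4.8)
The plan is to construct $S_0$ by exploiting the pumping lemma (Lemma~\ref{lemma:pumping}) and the finiteness of the minimal $k$-SEVPA for the oracle VPL $\hat\lang_{\oracle}$. The key observation is that ``being compatible with $\lang$'' is a constraint on all nesting patterns of all strings of $\lang$, but nesting patterns in long strings are, via pumping, reflected already in short strings. So the strategy is: (1) show that there is a length bound $N$ such that every nesting pattern witnessed anywhere in $\lang$ is, in a suitable sense, ``already witnessed'' by some string of length at most $N$; (2) take $S_0$ to be the (finite) set of all strings in $\lang$ of length at most $N$; (3) argue that a tagging $T$ compatible with $S_0$ must be compatible with every nesting pattern of every string of $\lang$, hence compatible with $\lang$.

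For step (1), I would reason as follows. Fix the pumping length $l$ from Lemma~\ref{lemma:pumping} applied to the VPL $\hat\lang_{t_\oracle}=\hat\lang_\oracle$ (any fixed tagging witnessing that $\lang$ is a VPL will do; e.g. $t_\oracle$). Suppose $s=uxzyv\in\lang$ is a nesting pattern. If $s$ is short (length $\le N$ for the bound to be chosen), it is directly in $S_0$. If $s$ is long, I want to ``shrink'' it while preserving the relevant compatibility obligation. The definition of compatibility with a nesting pattern $(x,y)$ only depends on: the multiset of characters appearing in $x$ and in $y$, and whether the tagging leaves an unmatched call in $t(x)$ and an unmatched return in $t(y)$. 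The crucial point is that for a tagging $T$, whether $T$ is compatible with $(x,y)$ in $s$ is determined by purely local information about $x$ and $y$ (which characters occur, and the matching structure of $t$ restricted to $x$ and to $y$). So if I can replace $s$ by a shorter $s'\in\lang$ that still contains a nesting pattern $(x',y')$ with $x'$ (resp. $y'$) still containing the same ``offending'' character $a$ (resp. $b$) and still forcing an unmatched $\ac$ (resp. $\bc$), then compatibility with $s'$ implies compatibility with $s$ with respect to that particular pair. Using regular pumping inside $u$, $z$, $v$ (the parts outside the nesting pattern) and inside $x$, $y$ themselves, together with a bookkeeping argument bounding how much of each of the five blocks we actually need to keep, I expect to get a bound $N$ that is polynomial (or at worst exponential) in $m$, the number of states of the minimal $k$-SEVPA. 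A cleaner route may be to work directly at the automaton level: run the minimal $k$-SEVPA on $s=uxzyv$, record the (state, stack-top-module) information at the five cut points, and contract any sub-run that revisits a configuration-class, exactly as in a pumping argument; this keeps the nesting structure while bounding the length.

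For step (3): let $T$ be compatible with $S_0$. Take any $s\in\lang$ and any nesting pattern $(x,y)$ of $s$. By step (1) there is $s'\in S_0$ whose nesting pattern $(x',y')$ ``dominates'' $(x,y)$ in the sense that any tagging compatible with $(x',y')$ is compatible with $(x,y)$ — concretely, $x'$ and $y'$ contain the character pair that $T$ must use, and the matched/unmatched structure transfers. Since $T$ is compatible with $S_0$, it is compatible with $(x',y')$, hence with $(x,y)$. As $s$ and $(x,y)$ were arbitrary, $T$ is compatible with every nesting pattern of $\lang$; combined with well-matchedness of all of $\lang$ under $T$ (which also follows from well-matchedness of $S_0$ under $T$ plus a pumping argument, or directly from Theorem~\ref{thm:compatible_tagging}'s hypotheses), $T$ is compatible with $\lang$.

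The main obstacle I anticipate is step (1): making precise the claim that ``a nesting pattern in a long string is dominated by a nesting pattern in a short string'' while ensuring the shrunk string $s'$ is still genuinely a \emph{nesting pattern} (condition (3) in the definition: $ux^kzy^jv\notin\lang$ for $k\ne j$), not merely a regular pumping. Naive contraction could destroy this "strictness'' condition. I expect to handle this by arguing contrapositively via the automaton: if every short candidate were a regular pumping, one could lift a pumping decomposition back up to $s$ and contradict that $s$ is a nesting pattern; the finiteness of the $k$-SEVPA is what makes "short'' well-defined here, which is also why the bound in the statement of Theorem~\ref{thm:construct_seed_strings} is naturally phrased in terms of $m$ and aligns with the $K$-bound in Theorem~\ref{lemma:backtrack_result}. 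The remaining steps are routine once this domination lemma is in place.
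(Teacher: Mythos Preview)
Your proposal takes a substantially harder route than the paper. The paper's proof rests on a single observation you never invoke: because \(\Sigma\) is finite and a tagging under the Unique Pairing assumption is just a subset \(T\subseteq\Sigma\times\Sigma\), there are only finitely many taggings in total. The construction is then almost immediate: for every tagging \(t\) that is \emph{not} compatible with \(\lang\), pick one witness string \(s_t\in\lang\) with which \(t\) is incompatible, and let \(S_0\) collect these witnesses (together with, for each oracle call/return pair, one string exhibiting a nesting pattern that exposes that pair, so that at least the oracle tagging survives). Finiteness of the set of taggings makes \(S_0\) finite, and any tagging compatible with all of \(S_0\) cannot be one of the bad \(t\)'s, hence is compatible with \(\lang\). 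No pumping, no automaton-size bound, no domination lemma.

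Your plan aims at something stronger and more constructive---an explicit length bound \(N\) with \(S_0\) equal to all strings of \(\lang\) of length at most \(N\)---but the central step you flag is a real gap, not just a detail. You need, for every nesting pattern \((x,y)\) anywhere in \(\lang\), a short nesting pattern \((x',y')\) such that \emph{every} tagging compatible with \((x',y')\) is also compatible with \((x,y)\). This must hold uniformly over all taggings, yet shrinking \(x\) or \(y\) can remove characters that some tagging used as its witnessing pair and can alter the matched/unmatched structure inside \(t(x)\) and \(t(y)\); simultaneously you must keep the strictness condition \(ux^kzy^jv\notin\lang\) for \(k\neq j\), which naive contraction can destroy. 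Your appeal to Theorem~\ref{thm:compatible_tagging} for the well-matchedness part of step~(3) is also circular, since that theorem already assumes compatibility with \(\lang\).

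In short, the paper quantifies over the finite set of taggings rather than over the infinite set of nesting patterns; that is the idea you are missing, and it makes the existence statement a two-line argument.
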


The proof of Theorem~\ref{thm:construct_seed_strings} is provided in
Appendix~\ref{app:thm:construct_seed_strings}.  We illustrate how
\( S_0 \) is constructed using the VPG shown in
Figure~\ref{fig:example_vpg}. In general, for each matching rule \(
L\to \ab{A}{B} \) where the nonterminal \( A \) can be recursively
rewritten into \( L \) via a set of derivations, we generate a string
reflecting this recursion and incorporate it into \( S_0 \). The example
VPG includes matching rules \( L \to \ab{A}{L} \) and \( A \to
\callsym{g} L \retsym{h} E \). We start by expanding \( L \) to reveal
\( A \), followed by expanding \( A \) to unveil \( L \), resulting in
the pattern \( \ab{\callsym{g}L\retsym{h}}L \). Then the expansion of
\( L \) to \( cd \) gives us a seed string \( agcdhbcd \). {We also generate
 a seed string witnessing the recursive transition from \( A \)
to \( L \) and back to \( A \), which would lead to strings like 
\( agagcdhbhbcd \). }

In conclusion, it is established that a finite set of seed strings exists, enabling our algorithm to identify a tagging \( T \) that is compatible with the oracle language \( \lang \). This compatibility guarantees that the tagged oracle language is a VPL. With the learned tagging as an input, Algorithm~\ref{alg:learn_vpa} can be employed to achieve exact learning.



\newcommand{\tokenInfer}{\mathbf{tokenInfer}}
\newcommand{\tokenit}{\mathbf{tokenize}}
\newcommand{\validate}{\mathbf{validate}}
\newcommand{\tokenSearch}{\mathbf{tokenSearch}}

\section{\vstar for a Token-Basd VPL}\label{sec:learn_token_tag}

In Section~\ref{sec:vstarActive}, we assumed that tagging of the oracle language is on individual characters; i.e.,
each character is uniquely tagged. This assumption does not always align with real-world scenarios. For instance, in JSON, a curly bracket \texttt{\{} may serve as a call symbol, yet it can also be a key, exemplified by \texttt{\{"\{":true\}}; in XML documents, an opening tag such as \texttt{<p>} functions as a call symbol, but it is composed of multiple characters.
In this section, we enhance \vstar to address these scenarios. 

\subsection{Problem Statement}
The oracle language \( \lang \) is a VPL when sentences in \( \lang \) are converted to sequences of tokens determined by an oracle tokenizer. 
Formally, a tokenizer function, \( \tk: \Sigma^* \rightarrow H_\tk^* \), translates a string \( s \) from \( \lang \) into a sequence of tagged tokens, where \( H_\tk=H_{\text{call}}\cup H_{\text{plain}}\cup H_{\text{ret}} \) represents the set of tagged tokens given by \( \tk \); we write \( H \) when \( \tk \) is clear from the context. 
The language \( \{ \tk(s) \mid s\in\lang \} \) is assumed to be a VPL over tokens in \( H \). Each category of token \( h \) is defined as a regular language, often specified by a regular expression. The notation \( s\in h \) indicates that string \( s \) belongs to token \( h \). We use metasymbols \(\ta \), \(h_c\), or \(\tb \) for call, plain, or return tokens, respectively. 

\begin{figure}[t]
  \centering
    \centering
    \begin{align*}
      L & \to \ {\texttt{OPEN}}\ L\ {\texttt{CLOSE}} \mid \texttt{TEXT} \\
      \texttt{OPEN} & \to \texttt{<p>}\\
      \texttt{CLOSE} & \to \texttt{</p>}\\
      \texttt{TEXT} & \to \text{[a..z]+}
    \end{align*}
    \caption{An example XML grammar and the associated lexical rules.}
    \Description{An example XML grammar and the associated lexical rules.}
    \label{fig:example_xml}
\end{figure}

A toy XML grammar is shown in Figure~\ref{fig:example_xml} and we use the seed string \( s = \texttt{<p><p>p</p></p>} \) as an example. 
The tokens are \texttt{OPEN}, \texttt{TEXT}, and \texttt{CLOSE}. 
The oracle tokenizer converts \( \texttt{<p><p>p</p></p>} \) into the token sequence \texttt{[OPEN,OPEN,TEXT,CLOSE,CLOSE]}, where \texttt{OPEN} is a call symbol, \texttt{TEXT} is a plain symbol, and \texttt{CLOSE} is a return symbol. 

The oracle still provides membership and equivalence queries. The
membership query function \(\memquery_\lang:\Sigma^*\to\{\true,\false\} \) is as before.
However, we change the form of equivalence queries. The reason for the change 
is to convert the oracle language to a character-based VPL so that 
we can reuse Algorithm~\ref{alg:learn_vpa} for learning a hypothesis VPA. 


To model equivalence queries, we first define a converter function.
A tokenizer \( \tk \)
identifies boundaries of call and return tokens for a string. We then use 
\( \convert_\tk:\Sigma^*\to\tilde{\Sigma}_{\tk}^* \) to transform a valid string \( s \in \lang \) into a new string \( \tilde{s}=\convert_\tk(s) \) by inserting artificial call and return symbols to mark token boundaries. This process is formalized next.
Given a tokenizer \( \tk \) with \( H_{\tk}=H_{\text{call}}\cup H_{\text{plain}}\cup H_{\text{ret}} \), we first build an extended character set \( \tilde{\Sigma}_{\tk} \):
for the \( i \)-th pair of call and return tokens \( h_{a_i} \) and \( h_{b_i} \), we generate a pair of call and return symbol \( \ac_i \) and \( \retsym{b_i} \) outside of \( \Sigma \). 
 We define \( \tilde{\Sigma}_{\tk} \) as \( \Sigma\cup\{ \ac_i\mid i\in[1..|H_{\text{call}}|] \} \cup\{\retsym{b_i} \mid i\in[1..|H_{\text{ret}}|]\} \).
Then, the transformation of \( s\in\lang \) into language \(
\tilde{\lang} \) over \( \tilde{\Sigma}_{\tk} \) proceeds as follows.
Let \( \tk(s)=\tk(s_1\dots s_{k})=t_1\dots t_{k} \), where \( s_k\in
t_k \). We construct \( \tilde{s} \) based on tokenization:
for each \( i\in[1..k] \), if \( t_i \) belongs to \( H_{a_j}
\), the call symbol \( \ac_j \) is added before \( s_i \) in \( s
\); if \( t_i \) is in \( H_{\text{ret}} \), the return symbol
\( \retsym{b_j} \) is added after \( s_i \) in \( s \). 
For instance, for the XML grammar in Figure~\ref{fig:example_xml}, with the call-return token pair being (\texttt{OPEN},\texttt{CLOSE}), our extended character set \( \tilde{\Sigma}_{\tk} \) have two additional characters, say \( \arta \) and \( \artb \). The seed string \( s=\texttt{<p><p>p</p></p>} \) is converted
to \( \arta \texttt{<p>} \arta \texttt{<p>} p \texttt{</p>} \artb \texttt{</p>} \artb \). 
Note that the resulting string after conversion is a well-matched string
in a character-based VPL that has the call symbol \( \arta \) and return symbol \( \artb \).
This allows us to reuse the previous algorithm for learning a character-based VPA.

With the converter function defined, we can model an equivalence query.
\( \mathcal{E}(\vpa_\tk,\tk) \) takes a hypothesis VPA \( \vpa_\tk \) and a hypothesis tokenizer \( \tk \)
and returns none when the oracle
language is equivalent to the unconverted language recognized by
\( \vpa_\tk \) and otherwise returns some \( s \) such that
\( \memquery_\lang(s) \neq \memquery_{(\vpa_\tk,\tk)}(s) \), where 
\[
\memquery_{(\vpa_\tk,\tk)}(s) =
\begin{cases} 
\text{True} & \text{if } \convert_\tk(s) \text{ is accepted by } \vpa_\tk, \\
\text{False} & \text{otherwise}.
\end{cases}
\]

A learner achieves \emph{exact learning} if $\forall s \in \Sigma^*,\; \memquery_\lang(s) =
\memquery_{(\vpa_\tk,\tk)}(s)$.

Similar to the method we discussed in Section~\ref{subsec:learn_tag}, we utilize nesting patterns  to identify the boundaries of call and return tokens. Our objective is to discover a \emph{compatible} tokenizer, which ensures that the language \( \tilde{\lang}_{\tk}=\{ \convert_\tk(s) \mid s\in\lang \} \) is a VPL. We will demonstrate the existence of a finite set of substrings from which our algorithm can successfully learn a compatible tokenizer. Then the aforementioned converter function transforms the oracle language into a character-based VPL, which according to Theorem~\ref{thm:learn_vpa} 
can be efficiently learned by Algorithm~\ref{alg:learn_vpa}.

\edited{
\paragraph{\edited{Assumptions for oracle languages and tokenizers}}
We previously defined a tokenizer as a function that maps a string to a list of tokens. However, assuming an arbitrary tokenizer is insufficient, as it has been demonstrated that any CFG can be mapped to a VPG through some tagging~\cite{10.1145/1007352.1007390}. 
Take, for instance, the CFG
\( \{ L \to c L c \mid c \} \). A tokenizer might tag \( c \) differently
based on its position within a string, e.g., maps the string
\( ccc \) to the token list \( [\callsym{c}, c, \retsym{c}] \), where
\( \callsym{c} \) and \( \retsym{c} \) represent call and return
tokens, respectively; the resulting language is a VPG. 
To simplify tag learning to the setting where tagging is context independent,
the oracle tokenizer and the language are assumed to satisfy the following properties:


\textbf{Tokenization Consistency}.
  For a string \( p = p_1 \ldots p_k \), if each substring \( p_i \) belongs to a token \( t_i \), 
  \edited{then \( \tk(p) = t_1 \ldots t_k \)}. For example, string \texttt{<p></p>} can be split into \texttt{[<p>,</p>]}; this assumption requires it to be tokenized as \texttt{[OPEN, CLOSE]}.

\textbf{Separation}.
  Strings for different tokens do not overlap.
  For the previous example of \( \{ L \to c L c \mid c \} \), tagging the first \( c \) as a call token and the last as a return token would violate this property.

\textbf{Exclusivity}.
  A prefix or suffix of a call or return token \(h\) cannot serve as an infix of \(h\).
  Exclusivity is not required for a token that contains only a single character.

\textbf{Unique Pairing}.
  Each call token is uniquely paired with a return token, similar to an assumption
  we made for the setting of character-based VPLs.

\textbf{Token Fixed Prefix and Suffix}. 
  For each call or return token \(h\), if \(h\) contains more than a single character,
  we require that there exists a prefix \(q\) and a suffix \(g\), such that all strings of \(h\) starts with \( q \) and ends with \( g \). Further, there exists a string \(s_h\) of \(h\), such that the combination of the prefixes and suffixes of \(s_h\) constitutes a sufficient set of test strings for exact learning of the token using \lstar, from the membership query function \(\lambda s. \memquery_\lang(wsw')\), where \(w,w'\) are any strings such that \( ws_hw'\in \lang \).

\textbf{\( k \)-Repetition}.
  Given a positive numbr \(k\),
  for each valid string \(s=w_1ww_3\) where \(w\) is a nonempty substring,
  we say that \(w\) is \emph{\(k\)-repeatable} in \(s\),
  if \(w_1w^kw_3\) is also a valid string.
  A language \(\lang\) and its tokenizer \(\tk\) are said to satisfy \emph{\(k\)-Repetition} if, for any valid string \(s\in\lang\) and any substring \(w\) in \(s\), if \(w\) belongs to a call or return token \(h\), but is not tokenized as \(h\) in \(s\), then \(w\) is \(k\)-repeatable in \(s\).

  For example, consider the JSON string \texttt{\{"\{":true\}}. Suppose \(w\) is the second \texttt{\{}; since it is inside quotes, it belongs to part of the token for a JSON key (i.e., \texttt{"\{"}), even though $\{$ itself is a call token. For any \(k\), \(w\) is \(k\)-repeatable since the string \(\texttt{\{"\(w^k\)":true\}}=\texttt{\{"\{}\ldots\texttt{\{":true\}}\) remains a valid JSON string.
  In our implementation, we set \(k\) to \(2\).}

\edited{While we make the aforementioned assumptions, our approach is still quite expressive, since the above properties are typically satisfied by practical grammars, including those in our evaluation.}

\subsection{Tagging Inference for Tokens}\label{subsec:infer_token}

In this section, we define the compatibility of tokenizers and present a theorem about the relation between a compatible tokenizer and the converted language \( \tilde{\lang}_{\tk} \). Then, we discuss an algorithm that infers a compatible tokenizer from given seed strings.

To define compatible tokenizers, we introduce some additional definitions. Given a tokenizer \( \tk \), recall that for a string \( s \) in \( \lang \), \( \convert_\tk(s) \) is built by inserting artificial call and return symbols to \( s \). Now, let \( s=s_1s_2s_3 \). We define \( \convert_{\tk,s}(s_2) \) as the substring in \( \convert_\tk(s) \) that corresponds to \( s_2 \); additionally, if {\( \convert_\tk \) inserted} a call symbol between \( s_1 \) and \( s_2 \), 
then \( \convert_{\tk,s}{(s_2)} \) includes and starts with that call symbol; if {\( \convert_\tk \) inserted} a return symbol between \( s_2 \) and \( s_3 \), then \( \convert_{\tk,s}{(s_2)} \) includes and ends with that return symbol.
For example, for the seed string \( s=\texttt{<p><p>p</p></p>} \), \( \convert_{\tk,s}{( \texttt{<p>})}=\arta\ \texttt{<p>} \), and \( \convert_{\tk,s}{( \texttt{</p>})}=\texttt{</p>}\ \artb\).

\begin{definition}[Compatible Tokenizers]\label{def:compatible_tokenizer}
  We say that a tokenizer \( \tk \) is compatible with a set of nesting patterns \( N \), if for each nesting pattern \( s=uxzyv \) in \( N \), \( \convert_{\tk,s}{(x)} \), and \( \convert_{\tk,s}{(y)} \), there exists a pair of artificial call and return symbols \( (\arta,\artb) \) in \( \tk \), such that (1) \( \convert_{\tk,s}{(x)} \) includes \( \arta \) and \( \convert_{\tk,s}{(y)} \) includes \( \artb \), and (2) \( \arta \) is unmatched in \( \convert_{\tk,s}{(x)} \) 
and \( \artb \) is unmatched in \( \convert_{\tk,s}{(y)} \).

  We say that a tokenizer \( \tk \) is compatible with a set of seed strings \( S \), if (1) for each string \( s \) in \( S \), \( \convert_\tk{(s)} \) is well-matched, and (2) \( \tk \) is compatible with all nesting patterns of \( S \).
\end{definition}

Now we present Theorem~\ref{thm:token_vpl} as the basis for exact learning.

\begin{theorem}\label{thm:token_vpl}
  \edited{Assume the oracle language and the oracle tokenizer satisfy the Tokenization Consistency and Separation properties.}
  Given a tokenizer \( \tk \) that is compatible with the oracle language \( \lang \), language \( \tilde{\lang}_{\tk} \) is a VPL.
\end{theorem}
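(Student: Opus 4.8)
The plan is to reduce to the character-level result, Theorem~\ref{thm:compatible_tagging}. Regard the converted language $\tilde{\lang}_{\tk}=\{\convert_\tk(s)\mid s\in\lang\}$, as a set of strings, as an untagged language $\lang^{\flat}$ over $\Sigma$ enlarged with the artificial symbols $\arta_i,\artb_i$ treated as ordinary characters, and let $t^{\ast}$ be the tagging that declares each $\arta_i$ a call symbol, each $\artb_i$ a return symbol, and every character of $\Sigma$ plain. Then $\{t^{\ast}(w)\mid w\in\lang^{\flat}\}=\tilde{\lang}_{\tk}$, so by Theorem~\ref{thm:compatible_tagging} it suffices to prove that $t^{\ast}$ is compatible with $\lang^{\flat}$. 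The well-matchedness clause of compatibility is exactly the well-matchedness clause of ``$\tk$ is compatible with $\lang$'' (Definition~\ref{def:compatible_tokenizer}), hence already given; so the whole proof reduces to showing that every nesting pattern of $\lang^{\flat}$ is covered by $t^{\ast}$, i.e., admits an unmatched $\arta_i$ in its left block and an unmatched $\artb_i$ in its right block for the same $i$.

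To handle a nesting pattern $\bar s=\bar u\,\bar x\,\bar z\,\bar y\,\bar v$ of $\lang^{\flat}$, I would erase the artificial symbols to obtain $s=uxzyv\in\lang$. Because $\convert_\tk$ inserts $\arta_i$ only immediately before a nonempty call-token occurrence and $\artb_i$ only immediately after a nonempty return-token occurrence, a converted string contains no repeatable run of artificial symbols; hence $\bar x$ and $\bar y$ each contain a genuine $\Sigma$-character, so $x,y$ are nonempty and $ux^kzy^kv\in\lang$ for all $k\ge 1$. Using Tokenization Consistency (the tokenization of a pumped string is the pumped tokenization) and Separation (these tokenizations, hence the positions of the inserted markers, are unambiguous), I would show that $\bar x$ equals $\convert_{\tk,s}(x)$ and $\bar y$ equals $\convert_{\tk,s}(y)$ up to boundary markers, and that $(x,y)$ is in fact a nesting pattern of $\lang$: the ``$ux^kzy^jv\notin\lang$ for $k\ne j$'' clause is inherited because for $k\ne j$ the converted string $\bar u\,\bar x^{k}\,\bar z\,\bar y^{j}\,\bar v$ cannot be well matched (the call/return imbalance must be produced inside $\bar x,\bar y$), hence is not in $\tilde{\lang}_{\tk}$, and pulling this back through Tokenization Consistency gives $ux^kzy^jv\notin\lang$. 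Compatibility of $\tk$ with $\lang$ then supplies a pair $(\arta_i,\artb_i)\in\tk$ unmatched in $\convert_{\tk,s}(x)$ and in $\convert_{\tk,s}(y)$ respectively, which is precisely the witness that $t^{\ast}$ covers $\bar s$; Theorem~\ref{thm:compatible_tagging} then finishes the argument.

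The step I expect to be the main obstacle is exactly this correspondence between nesting patterns of $\lang^{\flat}$ and nesting patterns of $\lang$: a repeated block $\bar x$ of the converted language need not begin or end at a token boundary, so it can carry partial call-/return-token text intermixed with artificial markers, and one must verify both that repeating it corresponds to repeating an honest substring of a valid string of $\lang$ whose image under $\convert_{\tk,s}$ is $\bar x$, and that no spurious nesting pattern of $\lang^{\flat}$ can project to a mere regular pumping of $\lang$. The analysis above indicates the latter is impossible---a regular-pumping projection would force the mismatched converted strings to be well matched, hence back inside $\tilde{\lang}_{\tk}$, contradicting the nesting-pattern condition---but turning this into a watertight argument is where Tokenization Consistency and Separation have to be invoked with care.

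As a sanity check and an alternative viewpoint, observe that $\tilde{\lang}_{\tk}$ is the image of the token-level language $\{\tk(s)\mid s\in\lang\}$ under the visibility-preserving regular substitution that sends a plain token $h_c$ to its regular language, a call token $h_{a_j}$ to $\{\arta_j\}\cdot h_{a_j}$, and a return token $h_{b_j}$ to $h_{b_j}\cdot\{\artb_j\}$; when $\{\tk(s)\mid s\in\lang\}$ is known to be a VPL over tokens, the conclusion is then immediate by grammar-level substitution on a VPG (introducing an auxiliary nonterminal to keep each matching rule in the $L\to\ab{L_1}L_2$ form). Since for an arbitrary compatible $\tk$ the token-level language is not assumed to be a VPL, the compatibility argument of the previous paragraphs is the route I would actually carry out.
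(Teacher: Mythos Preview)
Your reduction to Theorem~\ref{thm:compatible_tagging} has a missing hypothesis that makes the argument circular. That theorem is stated for an \emph{oracle language}, which in Section~\ref{subsec:charVPL_problem_statement} means a language that already admits \emph{some} character-level tagging \(t_\oracle\) under which it is a VPL; the proof of Theorem~\ref{thm:compatible_tagging} uses this \(t_\oracle\) essentially (via Lemma~\ref{app:lemma:finite_oracle_unmatched_symbols}) to bound the number of \(\sim_{t_\oracle}\)-classes. When you apply it to \(\lang^{\flat}\), you therefore need a character-level tagging of the alphabet \(\Sigma\cup\{\arta_i,\artb_i\}\) that makes \(\lang^{\flat}\) a VPL. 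The only natural candidate is your \(t^{\ast}\) itself, which is exactly the conclusion you are after; and there is no other candidate in sight, since the whole point of Section~\ref{sec:learn_token_tag} is that \(\Sigma\) need not admit a character-level VPL tagging, and the oracle tokenizer's markers are not characters of \(\lang^{\flat}\). So the step ``by Theorem~\ref{thm:compatible_tagging} it suffices to prove that \(t^{\ast}\) is compatible with \(\lang^{\flat}\)'' does not go through.

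A secondary issue: your nesting-pattern transfer argument invokes Tokenization Consistency and Separation to control \(\convert_\tk\), but the theorem only assumes these properties for the \emph{oracle} tokenizer \(\tk_\oracle\), not for the hypothesis tokenizer \(\tk\); a compatible \(\tk\) in the sense of Definition~\ref{def:compatible_tokenizer} is not assumed to tokenize consistently. The paper avoids this by never reasoning about how \(\tk\) tokenizes pumped strings. Instead, it reruns the congruence-counting argument of Theorem~\ref{thm:compatible_tagging} directly at the token level: Lemma~\ref{app:lemma:oracle_token_vpl} shows \(\tilde{\lang}_{\tk_\oracle}\) is a VPL (this supplies the missing ``oracle'' VPL structure, but over \(\tilde\Sigma_{\tk_\oracle}\), not over your alphabet), Lemma~\ref{app:lemma:finite_oracle_unmatched_tokens} bounds the number of unmatched \emph{oracle} tokens inside any \(\tk\)-well-matched piece, and then every \(\tk\)-well-matched string is rewritten as a bounded interleaving of oracle call/return tokens with oracle-well-matched blocks, each replaceable by a representative of \(\sim_{\tk_\oracle}\). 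That yields finitely many \(\sim_\tk\)-classes and hence that \(\tilde{\lang}_\tk\) is a VPL. Your ``alternative viewpoint'' via regular substitution is essentially how the paper handles the oracle case in Lemma~\ref{app:lemma:oracle_token_vpl}, but as you note it does not extend to an arbitrary compatible \(\tk\).
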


Now, we propose
Algorithm~\ref{alg:backtracking_token} to infer a hypothesis compatible 
tokenizer.  
\edited{Instead of finding a full-fledged tokenizer,
the algorithm infers a \emph{partial tokenizer}, which recognizes
only call and return tokens in an input string; the syntax of plain tokens is instead learned during the VPA learning process.
As a result of this choice, substrings between call/return tokens recognized by a partial tokenizer
are implicitly treated as plain tokens.
We represent a partial tokenizer as a set \( D=\{ (r_i, r_i') \mid i \in [1..|H_{\text{call}}]| \} \), where \(r_i\) and \(r_i'\) are the regular expressions for the \( i \)-th paired call and return token, respectively.
Function $\convert_D{(s)}$ and compatibility are similarly defined for a partial tokenizer \(D\); we omit them for brevity.}

At a high level, Algorithm~\ref{alg:backtracking_token} identifies call/return tokens by enumerating potential prefixes and suffixes based on the Token Fixed Prefix and Suffix assumption. \edited{Further, under the Exclusivity assumption}, \edited{we can prove that oracle call/return tokens must appear in \( (x^2,y^2) \)} for certain nesting pattern \( (x,y) \); the proof is provided in the appendix as Lemma~\ref{app:lemma:token_in_pattern2}. \edited{Therefore, we restrict our enumeration to substrings within \( (x^2,y^2) \). Our approach begins by searching within \( (x,y) \) and then progressively expands the search space to \( (x^2,y^2) \).} 
Upon identifying a candidate prefix-suffix pair for a token, we \edited{learn the token's lexical rules as a regular expression within the prefix-suffix pair using \lstar at line 6; in this learning, we simulate the equivalence queries using test strings obtained by combining the prefixes and suffixes of \(x\) and \(y\), respectively.} We then incorporate the tokens into the partial tokenizer and proceed to assess the tokenizer's compatibility with the nesting patterns of seed strings (line 7).

One compatibility condition is that the seed strings  after tokenization should be well-matched; for that, 
Algorithm~\ref{alg:tokenize} is used to tokenize a string based on a given partial tokenizer.
\edited{The main challenge of tokenization is that we have only a partial tokenizer and we need to rely on 
$k$-Repetition to deal with the
case when a plain token string contains a call/return token as part of its substring. E.g., in 
\(s=\texttt{\{"\{":true\}}\), the second \texttt{\{} is actually part of the plain token \texttt{"\{"} and should \emph{not} be 
treated as a call token.}
\edited{To demonstrate Algorithm~\ref{alg:tokenize}, consider a partial tokenizer \(D=\{ (\texttt{\{}, \texttt{\}}) \}\) and an input JSON string \(s=\texttt{\{"\{":true\}}\). Start with string index \(i=1\) and token list \(l=[]\), Algorithm~\ref{alg:tokenize} matches the first \texttt{\{} as \(m_1\) and pushes it to \(l\). Since \(i=2\) does not result in any match, \(i\) is updated to \(3\), 
where the second \texttt{\{} is matched as \(m_2\); however it is not added to the token list since it is \(k\)-repeatable. Finally, the last \(\texttt{\}}\) is matched as \(m_3\). As a result, Algorithm~\ref{alg:tokenize} returns \([m_1,m_3]\).}

\begin{algorithm}[t]
  \SetAlgoLined
  \caption{The \( \tokenInfer{(\oracle, S)} \) algorithm that infers call and return tokens. The \( \candidateNP{-}{-} \) function is the same as the one in Algorithm~\ref{alg:backtracking_tagging}.}
  \label{alg:backtracking_token}

  \KwIn{Oracle \( \oracle \) and seed strings \( S \).}
  \KwOut{Some tokenizer \( D \) compatible with \( S \), or \texttt{None} if no compatible tokenizer is found.}

\SetKwProg{Fn}{Function}{:}{}

\SetKwFunction{TokenSearch}{\( \tokenSearch \)}
\Fn{\TokenSearch{\( N \), \( \ndone \), \( D \)}}{
  \lIf{\( N \) is empty}{\Return{\( \texttt{Some}(D) \)}}
  Take a nesting pattern \( s=uxzyv \) from \( N \)\;
  \If{\( D \) is incompatible with \( uxzyv \)}{
    \ForEach{disjoint substrings \( q \) and \(g\) in \edited{\(x\)} and \( x^2 \), and \( q' \) and \( g' \) in \edited{\(y\)} and \( y^2 \)
    }{
      \edited{Based on \(((q, g), (q', g'))\) learn a new call-return token pair \((r,r')\)}\;
      \If{\( D\cup\{(r,r')\} \) is compatible with \( \ndone\cup\set{uxzyv} \)}{
        \( D' \gets \tokenSearch{(N \setminus \set{uxzyv}, \ndone\cup\set{uxzyv}, D\cup\{(r,r')\} )} \)\;
        \lIf{\( D' \) is not \texttt{None}}{\Return{\( D' \)}}
      }
    }
    \Return{\texttt{None}}; \tcp{No compatible tokenizer found}
  }
  \lElse {\Return{\( \tokenSearch{(N \setminus \set{uxzyv}, \ndone\cup\set{uxzyv}, D)} \)}}
}
Initialize \( K \) as \( 1 \)\;
\Repeat{\( D \neq \texttt{None} \)}{
  \( K\gets K+1 \); \( N_{S,K} \gets  \candidateNP{S}{K} \);
  \( D\gets \tokenSearch{(N_{S,K},\emptyset,\emptyset)} \);
}
{\Return{\( D \)}\;}

\end{algorithm}

\begin{algorithm}[t]
  \SetAlgoLined
  \caption{The \( \tokenit(D,s) \) algorithm that tokenizes a string.}
  \label{alg:tokenize}

  \KwIn{A partial tokenizer \( D \) and a string \( s \).}
  \KwOut{A token list \( l \).}

  Initialize the token list as \( l\gets[] \)\;
  Initialize the current location of string \( s \) as \( i\gets 1 \)\;
  \While{\( i\leq|s| \)} {
    \If{ We find a first match \( w=s[i]...s[j] \) for \edited{token \( h \in D \)} and \(w\) is not \(k\)-repeatable}{
      Push new match \( m=(h,i,j) \) to token list \( l \)\;
      \( i\gets j+1 \)\;
      }
      \lElse{\(i\gets i+1\)}
    }
  \Return{\( l \)}\;
\end{algorithm}

We next illustrate the steps of Algorithm~\ref{alg:backtracking_token} using our XML example.
We start with an empty tokenizer \( D \). We then iteratively select a nesting pattern \( s = uxzyv \), tokenize \( s \) using Algorithm~\ref{alg:tokenize}, and verify the compatibility of \( D \) with the tokenization. 
In our example, we start with the seed string \( s=\texttt{<p><p>p</p></p>} \)
and pick a nesting pattern. Suppose \vstar picks the outmost pattern \( (\texttt{<p>},\texttt{</p>}) \).
The token list \(\tokenit(D,s)\) of \( s \) is empty, since there is no rule to find a token yet.
Apparently, this tokenizer \( D \) is not compatible with \( uxzyv \).

We then extend \( D \) by a call-return token pair learned from \( ((q, g), (q', g')) \) derived from \edited{\((x,y)\) or} \( \left(x^2, y^2\right) \).
By enumerating candidate prefixes and suffixes \( ((q, g), (q', g')) \) within \( (x,y)=(\texttt{<p>},\texttt{</p>})\),
we can build the first call-return token pair.
From \( x \), we first pick the outmost \edited{\((\texttt{<}, \texttt{>})\)} as \( (q,g) \); then in \( y \), 
we pick the outmost \edited{\((\texttt{<}, \texttt{>})\)} as \( (q',g') \).
\edited{By learning the tokens' lexical rules from membership query functions \(\lambda w. \memquery_\lang(w\texttt{<p>p</p></p>})\) and \(\lambda w. \memquery_\lang(\texttt{<p><p>p</p>}w)\), we identify two regular expressions \texttt{<p>} and \texttt{</p>} for the call and return tokens, respectively.
Note that if the open tag contained XML attributes, the learned lexical rules would encompass regular expressions that specify these attributes.}
To check if the partial tokenizer \( D=\{ (\texttt{<p>}\), \(\texttt{</p>}) \} \) is compatible with \( s=\texttt{<p><p>p</p></p>} \), we need to tokenize \( s \) following Algorithm~\ref{alg:tokenize}, \edited{which returns the token list
\([\texttt{<p>},\texttt{<p>},\texttt{</p>},\texttt{</p>}]\).}
It can be shown that this partial tokenizer is compatible with all nesting patterns of string \( s \).
Therefore, Algorithm~\ref{alg:backtracking_token} ends here and returns this compatible tokenizer. 

\begin{lemma}[Finite and Sufficient Seed Strings]\label{lemma:finite_seed_strings_for_tokenization}
  \edited{Assume the oracle language and the oracle tokenizer satisfy the Tokenization Consistency, Separation, Exclusivity, Unique Pairing, Token Fixed Prefix and Suffix, and \(k\)-Repetition properties.}
  There exists a finite set of seed strings \( S_0\subseteq \lang \), with which we can find a tokenizer that is compatible with the oracle language \( \lang \) using Algorithm~\ref{alg:backtracking_token}.
\end{lemma}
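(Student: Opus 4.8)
The plan is to mirror the structure of the proof of Theorem~\ref{thm:construct_seed_strings} (the character-based case), but lifted to tokens. At a high level, I would construct $S_0$ as the union of three finite families of strings: (i) strings that witness every recursive matching rule of the underlying token-level VPG, analogous to the $S_0$ built in Appendix~\ref{app:thm:construct_seed_strings}; (ii) for each call or return token $h$ that consists of more than one character, the distinguished string $s_h$ guaranteed by the Token Fixed Prefix and Suffix assumption, embedded in a minimal context $w s_h w' \in \lang$, so that the prefixes and suffixes of $s_h$ form a sufficient \lstar{} test set for exact learning of $h$; and (iii) for each token $h$ as above, a handful of strings exhibiting $k$-repetitions of the problematic substrings, so that Algorithm~\ref{alg:tokenize}'s use of $k$-Repetition is exercised correctly when tokenizing members of $S_0$. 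Finiteness of (i) follows because the token-level grammar has finitely many rules and nonterminals (it is a VPG over the finite token set $H$); finiteness of (ii) and (iii) follows because $H_{\text{call}}$ and $H_{\text{ret}}$ are finite.

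The core argument then has two parts. First, \emph{existence of a compatible tokenizer that the algorithm can reach}: I would invoke Lemma~\ref{app:lemma:token_in_pattern2} to argue that for each paired call/return token there is a nesting pattern $(x,y)$ of some string in $S_0$ in which the oracle tokens appear inside $(x^2, y^2)$; combined with the Token Fixed Prefix and Suffix assumption, the fixed prefix $q$ and suffix $g$ of the call token (and $q', g'$ of the return token) occur as disjoint substrings there, so the \texttt{foreach} loop of $\tokenSearch$ at line~5 enumerates them, and the \lstar{} invocation at line~6 — using membership query function $\lambda w.\memquery_\lang(w s w')$ restricted to the prefix/suffix test strings — recovers exactly the oracle token's regular language by the sufficiency clause of that assumption. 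Second, \emph{correctness of tokenization on $S_0$}: I would show that running Algorithm~\ref{alg:tokenize} with the (partial) oracle tokenizer on any $s \in S_0$ reproduces the oracle tokenization; the only subtlety is a substring that matches a call/return token regex but is not tokenized as such by the oracle, and the $k$-Repetition property guarantees exactly that such a substring is $k$-repeatable in $s$, so Algorithm~\ref{alg:tokenize}'s guard correctly skips it. This gives well-matchedness of $\convert_D(s)$ for all $s \in S_0$, hence passes the compatibility check at line~7, and by Theorem~\ref{thm:token_vpl} (whose hypotheses are among our assumptions) the converted language is a VPL.

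Finally I would argue \emph{termination with a globally compatible tokenizer}: once $S_0$ contains the recursion-witnessing strings of part (i), any tokenizer compatible with all nesting patterns of $S_0$ is forced, on each paired token, to insert unmatched artificial call/return symbols in the right places for \emph{every} string of $\lang$ — this is the token analogue of how Theorem~\ref{thm:construct_seed_strings} upgrades $S_0$-compatibility to $\lang$-compatibility, and I would reduce it to that theorem by passing through $\convert$ and reasoning about the derivation structure of the underlying VPG. The $\Repeat$ loop in Algorithm~\ref{alg:backtracking_token} increases $K$ until $\candidateNP{S}{K}$ contains enough nesting patterns with the genuine non-regular-pumping certificates; a bound on the required $K$ follows, as in Theorem~\ref{lemma:backtrack_result}, from the number of states of the minimal \ksevpa of the token-level VPL. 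The main obstacle I anticipate is the interaction between \emph{partial} tokenization and the $k$-Repetition reasoning: because plain tokens are not recognized by $D$ and are instead discovered later during VPA learning, I must be careful that the substrings Algorithm~\ref{alg:tokenize} skips are precisely those dictated by $k$-Repetition and that no genuine call/return token occurrence is ever accidentally suppressed — establishing this invariant rigorously, using Exclusivity (prefix/suffix of a token is not an infix) to rule out spurious overlapping matches, is where the real work lies.
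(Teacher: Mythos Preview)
Your overall structure---witness strings for each call-return token pair, plus the $s_h$ strings needed for the \lstar{} learning of lexical rules, plus an argument that Algorithm~\ref{alg:tokenize} reproduces the oracle tokenization via $k$-Repetition and Exclusivity---tracks the paper's proof well, and your identification of the main obstacle (ensuring no genuine call/return occurrence is suppressed by the $k$-repeatability guard) is exactly the right one.

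Where you diverge is the step that upgrades $S_0$-compatibility to $\lang$-compatibility. You propose to reduce to Theorem~\ref{thm:construct_seed_strings} by ``passing through $\convert$ and reasoning about the derivation structure of the underlying VPG.'' This reduction is shaky: $\convert_\tk$ depends on the candidate tokenizer $\tk$, so different candidate tokenizers yield different character-based languages, and there is no single character-based instance to which that theorem applies. The paper sidesteps this with a much simpler and more robust argument: starting from the witness set $S_1$, Algorithm~\ref{alg:backtracking_token} can propose only \emph{finitely many} partial tokenizers, because its search ranges over substrings of the finitely many seed strings. One can therefore enumerate those proposed tokenizers that are incompatible with $\lang$ and, for each, adjoin a single counterexample string from $\lang$; the resulting set $S_2$ is still finite, and every tokenizer the algorithm can return on $S_2$ is $\lang$-compatible by construction. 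No structural argument about recursion witnesses or reduction to the character case is needed. After that, the paper uniformly replaces each call/return token occurrence in $S_2$ by the distinguished string $s_h$ from the Token Fixed Prefix and Suffix assumption---essentially your family~(ii), but applied as a rewrite of $S_2$ rather than as separate witness strings. Your family~(iii) is unnecessary as a separate ingredient: $k$-Repetition is a hypothesis on the oracle language itself, so it already holds for every string in $S_0$; it is invoked when analyzing Algorithm~\ref{alg:tokenize}, not engineered into the seed set.
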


As a summary, we can learn a compatible tokenizer from a certain
finite set of seed strings. With a compatible tokenizer \( \tk \),
Theorem~\ref{thm:token_vpl} gives us that \( \tilde{\lang}_{\tk} \) is a
character-based VPL. Then by Theorem~\ref{thm:learn_vpa}, we can use
Algorithm~\ref{alg:learn_vpa} to learn \( \tilde{\lang}_{\tk} \) exactly
under active learning.


\section{Evaluation}\label{sec:Evaluation}
In this section, we discuss \vstar's implementation, evaluation and
its comparison with two other state-of-the-art grammar inference
tools, \glade \cite{10.1145/3140587.3062349} and \arvada \cite{10.1109/ASE51524.2021.9678879}, in the context of inferring grammars from
program inputs.


\paragraph{Implementation}
While black-box programs naturally support membership queries,
direct support of equivalence queries is absent.
\edited{To instantiate the MAT,
we approximate equivalence queries through
membership queries. In particular, we construct a set of strings
by combining prefixes, \edited{infixes, }and suffixes of the seed strings; for
each such string \( s \), if \( \convert_\tk(s) \) is well-matched, 
we add it to a set of test strings. 
The set of test strings is then
used to check the consistency between the hypothesis VPA and the
oracle language. A test string becomes a counterexample if it 
witnesses inconsistency (i.e., either the hypothesis VPA or the oracle accepts the string, but not both)}.
\edited{Similar ideas have appeared in conformance testing \cite{10.1007/11817949_14,10.1145/3605360}.}


Our previously discussed algorithm produces a visibly pushdown automaton (VPA),
instead of a visibly pushdown grammar (VPG).  Upon the successful learning
of a VPA, we transform it into a VPG using methods outlined by
\citet{10.1145/1007352.1007390}.

\paragraph{Datasets} For our experiments, we replicated the evaluation methodology of the \arvada study, utilizing their 
datasets~\cite{10.1109/ASE51524.2021.9678879}, including the oracle grammars, datasets for evaluating the recall (discussed later), and seed strings. 
We selected the grammars of JSON, LISP, XML, While, and MathExpr, due to their distinct characteristics of being VPGs. 


\paragraph{Metrics}
We evaluate the performance of \vstar using four key metrics: Recall,
Precision, F-1 Score, and Number of Membership Queries. We define each
metric as follows:

\begin{enumerate}
    \item \textbf{Recall}: This metric is the probability that a string of the oracle grammar is also a string of the learned grammar \( G \) \cite{10.1145/3140587.3062349}. For finite languages, it can be defined as:
    \(
    \frac{|\lang_\oracle \cap \lang_G|}{|\lang_\oracle|}
    \).
    Due to the potential infinity of the languages, it may be impractical to compute recall directly. Instead, we approximate it by using a representative dataset from the oracle language and then calculating the proportion of this dataset that is accepted by the learned grammar.

    \item \textbf{Precision}: Contrary to recall, precision is the probability that a string in the learned language is accepted by the oracle \cite{10.1145/3140587.3062349}. For finite languages, it can be defined as:
    \(
    \frac{|\lang_\oracle \cap \lang_G|}{|\lang_G|}
    \).
    As with recall, we approximate precision by sampling strings from the learned grammar and calculating the percentage of strings that are accepted by the oracle. We adopt the same sampling method from \arvada \cite{10.1109/ASE51524.2021.9678879}.
    
    \item \textbf{F-1 Score}: The F-1 score is the harmonic mean of precision and recall, defined as 
    \(
      \frac{2}{\frac{1}{R}+\frac{1}{P}}
    \).
    where \( R \) is recall and \( P \) is precision. The F-1 score serves as a measure of the overall accuracy, only reaching high values when both precision and recall are high.
    
    \item \textbf{Number of Unique Membership Queries}: This counts the number of unique membership queries, i.e., distinct oracle calls, made during the learning process. Since a particular string might be queried multiple times, we cache the result after the first query, and only count unique queries. This metric serves as an efficiency measure.
\end{enumerate}

\begin{table}[t]
\caption{Evaluation on datasets where the oracle grammars are VPGs.  ``\#Seeds'' is the number of seed strings for each grammar. 
``\#Queries'' is the number of membership queries, while ``\%Q(Token)'' and ``\%Q(VPA)'' are the percentages of these queries attributed to token inference and VPA learning, respectively. ``\#TS'' is the number of test strings sampled by V-Star. \secondEdited{Results for Arvada are listed as the means over 10 runs ± the standard deviation \cite{arvadaArtifact}.}}
\label{tab:benchmark1}
\centering
\begin{tabular}{lrrrrrr}
  \toprule
   & & \multicolumn{5}{c}{\glade} \\
  \cmidrule(lr){3-7}
   & \#Seeds &\multicolumn{1}{c}{Recall} & \multicolumn{1}{c}{Precision} & \multicolumn{1}{c}{F1} & \multicolumn{1}{c}{\#Queries} & \multicolumn{1}{c}{\edited{Time}}  \\
  \midrule
  json & 71 & 0.42 & 0.98 & 0.59 & \SI{11}{K} & \SI{21}{s} \\
  lisp & 26 & 0.23 & 1.00 & 0.38 & \SI{3.8}{K} &\SI{7}{s} \\
  xml & 62 & 0.26 & 1.00 & 0.42 & \SI{15}{K} & \SI{21}{s} \\
  while & 10 & 0.01 & 1.00 & 0.02 & \SI{9.2}{K} & \SI{13}{s} \\
  mathexpr & 40 & 0.18 & 0.98 & 0.31 & \SI{19}{K}& \SI{42}{s} \\
\bottomrule
\end{tabular}

\begin{tabular}{lrrrrrr}
  \toprule
  & \multicolumn{5}{c}{\arvada} \\
   \cmidrule(lr){2-6}
   & \multicolumn{1}{c}{Recall} & \multicolumn{1}{c}{Precision} & \multicolumn{1}{c}{F1} & \multicolumn{1}{c}{\#Queries} & \multicolumn{1}{c}{\edited{Time}} \\
 \midrule
 json & 0.97 ± 0.09 & 0.92 ± 0.08 & 0.94 ± 0.05  & \SI{6.8}{K} ± 394 & \SI{25}{s}  ± \SI{2}{s}\\
 lisp & 0.38 ± 0.26 & 0.95 ± 0.08 & 0.50 ± 0.18 & \SI{2.2}{K} ± 307  & \SI{8} {s} ± \SI{2}{s}\\
 xml & 0.99 ± 0.02 & 1.00 ± 0.00 & 1.00 ± 0.01  & \SI{12}{K} ± \SI{1}{K} & \SI{61}{s} ± \SI{5}{s} \\
 while & 0.91 ± 0.20 & 1.00 ± 0.00 & 0.94 ± 0.14  & \SI{5.4}{K} ± 563 & \SI{15}{s} ± \SI{1}{s} \\
 mathexpr & 0.72 ± 0.24 & 0.96 ± 0.03 & 0.80 ± 0.16  & \SI{6.6}{K} ± 421  & \SI{24}{s}  ± \SI{2}{s}\\
  \bottomrule
\end{tabular}

\begin{tabular}{lrrrrrrrrrrrrr}
  \toprule
  & \multicolumn{7}{c}{\vstar} \\
  \cmidrule(lr){2-9}
    & \multicolumn{1}{c}{Recall} & \multicolumn{1}{c}{Precision} & \multicolumn{1}{c}{F1} & \multicolumn{1}{c}{\#Queries} & \multicolumn{1}{c}{\edited{\%Q(Token)}} & \multicolumn{1}{c}{\edited{\%Q(VPA)}} & \multicolumn{1}{c}{\edited{\#TS}} & \multicolumn{1}{c}{\edited{Time}} \\
  \midrule
  json     & 1.00 & 1.00 & 1.00 & \SI{541}{K}   & \SI{2.71}{\%}  &  \SI{97.29}{\%} & 8043 & \SI{33}{min} \\
  lisp     & 1.00 & 1.00 & 1.00 & \SI{16}{K}    & \SI{1.37}{\%}  &  \SI{98.63}{\%} & 693  & \SI{77}{s} \\
  xml      & 1.00 & 1.00 & 1.00 & \SI{208}{K}   & \SI{94.93}{\%} & \SI{5.07}{\%}  & 682  & \SI{16}{min} \\
  while    & 1.00 & 1.00 & 1.00 & \SI{1440}{K} & \SI{9.40}{\%}  &  \SI{90.60}{\%} & 119  & \SI{1.5}{h} \\
  mathexpr & 1.00 & 1.00 & 1.00 & \SI{4738}{K} & \SI{0.11}{\%}  &  \SI{99.89}{\%} & 2602 & \SI{6}{h} \\
  \bottomrule
\end{tabular}
\end{table}

\paragraph{Results}

Table~\ref{tab:benchmark1} summarizes the performances of \glade, \arvada, and \vstar on oracle VPGs, \secondEdited{with the results of \arvada and \glade assessed on the same platform as V-Star, utilizing the \arvada artifact \cite{arvadaArtifact}.} 
The table shows that \vstar achieves exact learning for \edited{all} oracles, exhibiting superior accuracy compared to other tools. 
However, \vstar issues a greater number of queries than \glade and \arvada, \edited{resulting in greater inference time.} 
This primarily stems from (1) the substantial number of {test strings} used in approximating equivalence queries,
and (2) the fact that \vstar consumes seed strings without pre-processing. In contrast, \glade and \arvada employ a pre-tokenization strategy, such as grouping digits or letters as a single terminal, which reduces seed string lengths. We take our approach since \vstar can learn tokens. 
\edited{Overall the evaluation shows that \vstar is more accurate but takes more time to infer grammars.
In grammar learning, we believe that accuracy is a more important goal as a more accurate grammar benefits downstream applications greatly. Improving efficiency of \vstar (e.g., using heuristics of target grammars) while not decreasing accuracy is left for future work.
}

\vstar requires a considerable number of membership queries for the MathExpr grammar. This can be attributed in part to the large number of constant function names (26 in all) within the grammar, such as ``sin'' or ``cos''. In its quest for high accuracy, \vstar explores various combinations of these constant names exhaustively. We acknowledge that this approach could be further optimized
and propose this as an avenue for future improvement.

\edited{In Table~\ref{tab:benchmark1}, we include data on the
  percentage of membership queries allocated for token inference
  (``\%Q(Token)'') and for learning VPA (``\%Q(VPA)'').} It can be
observed that the majority of queries are utilized for VPA
learning. This is mainly because seed strings tend to be short,
leading to fewer potential nesting patterns. \edited{One exception is
  XML, where most queries are for token inference. This is because 
  the XML grammar, primarily based on nested tag pairs,
  allows for easier inference of the overall grammar once the
  opening and closing tags (call and return tokens) are
  identified. Furthermore, many queries are required to infer the lexical
  rules of XML attributes.} Additionally, the table provides
information on the count of seed strings (``\#Seeds'') used in our
evaluation. For the grammars assessed, V-Star requires a relatively
small number of seed strings to achieve exact learning, attributed to
\edited{its strategy of employing a wide range of substring combinations to
construct test strings for effective simulating equivalence queries; column ``\#TS'' shows
the number of test strings constructed}.

\section{Future Work}

We believe the performance of \vstar can be further improved with more
advanced methods for generating counterexamples, such as using
machine learning tools to infer counterexamples from seed strings
and a VPA.  A related direction is to investigate the potential
adaptation of \vstar with discrimination trees. Other grammar
inference tools that are based on discrimination trees such as
\ttt~\cite{Isberner2015FoundationsOA} enhance inference efficiency by
reducing counterexample lengths and minimizing membership queries. It
remains to be seen how \vstar can be adapted in this manner and what
improvements this can yield.

The present study focuses primarily on inferring well-matched VPGs
using \vstar. However, our preliminary experience suggests that \vstar
can also be effectively employed to learn general
VPGs~\cite{Alur2005Congruences} with open call and return symbols.  A
general VPG can be used to specify streaming data. As such, the
learning problem for general VPGs is a promising direction for future work.

\edited{
    Our method makes the assumption of unique call-return token pairing to ease tokenizer inference and reduce computational complexity, as matching one call token with multiple return tokens complicates tokenizer inference. It would be interesting future work to consider the implications of relaxing this assumption to enhance flexibility.}

\edited{Experimentally, we focus on languages such as XML and JSON to align with benchmarks used by prior tools for a direct comparison. It would be interesting to evaluate \vstar on more complex programming language grammars to check its effectiveness on those grammars.}

\edited{Improving the readability of VPGs inferred by V-Star is still a challenge. Currently, the grammars generated tend to be larger and less readable than oracle grammars, due to the inherent rigid requirements of VPG rules, the inclusion of lexical rules, and automatically named nonterminals. Although we have made attempts to refactor grammars using regular expressions, these solutions are largely heuristic and may not consistently yield optimal results. Exploring machine learning-based approaches presents a promising avenue to systematically enhance the clarity and conciseness of inferred grammars, making them potentially more accessible and understandable for users.}

Finally, VPGs learned by \vstar may provide a valuable starting point
for better inference algorithms of CFGs. For instance, similar to the
CFGs learned by \glade~\cite{10.1145/3140587.3062349}, the VPGs inferred by
\vstar can serve as inputs for machine learning tools such as
\reinam~\cite{10.1145/3338906.3338958}, which improves the input grammar with
reinforcement learning. Comparing the improvements enabled by these
different starting grammars would be an intriguing line of inquiry.


\section{Conclusions}\label{sec:Conclusion}
This paper introduces \vstar, an algorithm designed to take advantage
of nesting structures in languages to achieve exact learning of
visibly pushdown grammars. Through a set of novel techniques to infer
token boundaries and tag call/return tokens, \vstar demonstrates
its capability to learn a diverse array of practical languages. Our
preliminary experiments are promising and show \vstar's advantages
of accurate learning.

\newpage

\appendix

\section{Proofs of Theorems in Section~\ref{subsec:learn_vpa}}\label{app:subsec:learn_vpa}

\begin{theorem}\label{app:thm:learn_vpa}
  {Given a tagging function \( t \) such that language \( \hat\lang=\{ t(s) \mid s\in\lang \} \) is a VPL,} the minimal
  \( k \)-SEVPA of language \( \hat\lang \) can be learned in polynomial numbers of equivalence and membership queries.
\end{theorem}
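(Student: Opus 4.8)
The statement is exactly a correctness-and-complexity claim about Algorithm~\ref{alg:learn_vpa}, so the plan is to show that this algorithm, run on a tagging function $t$ for which $\hat\lang=\{t(s)\mid s\in\lang\}$ is a VPL, halts after a polynomial number of equivalence queries (in $m$, the number of states of the minimal $k$-SEVPA for $\hat\lang$) and a polynomial number of membership queries (in $m$, the alphabet size $|\hat\Sigma|$, and the length of the longest counterexample returned by the teacher), and that it outputs the minimal $k$-SEVPA. All the heavy lifting has been isolated into Propositions~\ref{prop:bound_states}, \ref{prop:close_vpa} and \ref{prop:find_counterexample}, so the proof is a bookkeeping argument on top of them.

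\textbf{Invariants.} First I would establish that throughout the run $\qc=\{(Q_i,C_i)\}_{i\in[0..k]}$ is separable: it holds after initialization since every $Q_i=\{\epsilon\}$; $\closeVPA{\oracle}{\qc}$ preserves separability by repeated application of Proposition~\ref{prop:close_vpa}; and $\updateVPA{\oracle}{\qc}{t(s)}$ preserves it by Proposition~\ref{prop:find_counterexample}. Moreover, immediately after each $\closeVPA{\oracle}{\qc}$ call $\qc$ is also closed, so the hypothesis $\consVPA{\qc}$ is well-defined (Definition~\ref{def:consVPA}); I would also observe that $\closeVPA{\oracle}{\qc}$ itself terminates, since new work-list items are enqueued only when a state is added, and by Proposition~\ref{prop:bound_states} together with separability at most $m$ states can ever be added. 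By that same proposition, $\sum_{i\in[0..k]}|Q_i|\le m$ at every point of the execution.

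\textbf{Termination and query counts.} Whenever $\eqquery(\consVPA{\qc})$ returns a counterexample $s$, Proposition~\ref{prop:find_counterexample} yields a fresh access word $q\notin Q_i$ for some $i$, so $\updateVPA{\oracle}{\qc}{t(s)}$ strictly increases $\sum_i|Q_i|$, and the subsequent $\closeVPA{\oracle}{\qc}$ only adds access words; since $\sum_i|Q_i|\le m$, the while-loop body runs at most $m$ times and the algorithm issues at most $m+1$ equivalence queries. For membership queries, per iteration: the binary search of Proposition~\ref{prop:find_counterexample} costs $O(\log|s|)$ queries; computing the acceptance set of $\consVPA{\qc}$ costs at most $m$ queries; and $\closeVPA{\oracle}{\qc}$ spends, for each work-list item $(q,i,m')$, at most $|Q_i|\cdot|C_i|$ queries to test $C_i$-equivalence, where the number of work-list items is bounded by $\sum_i|Q_i|\cdot|\Sigma_M|$ with $|\Sigma_M|\le|\hat\Sigma|+m\,|\retsyms|$ and each $|C_i|\le|\retsyms|+m$ (since $C_i$ gains at most one test word per counterexample). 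All of these are polynomial in $m$ and $|\hat\Sigma|$, and summing over the at most $m+1$ iterations gives the claimed bound. Finally, when the loop exits, $\eqquery$ on $\vpa=\consVPA{\qc}$ returns no counterexample, so $\vpa$ recognizes exactly $\hat\lang$; $\vpa$ is a $k$-SEVPA by construction and has at most $m$ states by Proposition~\ref{prop:bound_states}, hence exactly $m$ states, and by uniqueness of the minimal $k$-SEVPA for a VPL~\cite{Alur2005Congruences} it is isomorphic to the minimal one.

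\textbf{Main obstacle.} I expect the delicate point to be the accounting inside $\closeVPA{\oracle}{\qc}$: adding an access word enlarges $\Sigma_M$ and spawns further work-list items, so the polynomial bound has to be argued through the monotone, bounded-fixpoint behaviour of the work list rather than a naive product. One must also note that although access and test words can grow in length across iterations, only the number of membership queries enters the stated bound, and that number is oblivious to string length in the query model; the counterexample length appears only logarithmically, exactly as in the original \lstar analysis.
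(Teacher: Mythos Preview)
Your proposal is correct and follows essentially the same approach as the paper: both arguments use Proposition~\ref{prop:bound_states} to bound the number of iterations (hence equivalence queries) by the number of states $m$ of the minimal $k$-SEVPA, Proposition~\ref{prop:find_counterexample} to bound the counterexample processing by $O(\log n)$ membership queries, and a direct count of the access-word/test-word products to bound the cost of the $\closeVPA{-}{-}$ calls. Your treatment is in fact slightly more careful than the paper's terse proof in two respects: you explicitly argue the invariant that separability is preserved and that $\closeVPA{-}{-}$ terminates, and you spell out the correctness step (the returned automaton is a $k$-SEVPA with at most $m$ states accepting $\hat\lang$, hence the minimal one by uniqueness), which the paper leaves implicit.
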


\begin{proof}
  We run Algorithm~\ref{alg:learn_vpa} with a target language \(\hat \lang\) over the alphabet \(\hat\Sigma\). Define \(m\) as the state count of the minimal \ksevpa of \(\hat \lang\) and \(n\) as the maximum length
of counterexamples returned by equivalence queries.
  Proposition~\ref{prop:bound_states} establishes that the number of equivalence queries does not exceed \(m\), as each iteration expands the number of states in \( \qc \) by a minimum of one. This also shows that the algorithm must terminate.

  A counterexample returned by an equivalence query causes at most \(\log n\) membership queries
  as detailed in Proposition~\ref{prop:find_counterexample},  resulting
  in no more than \(m \log n\) membership queries during Step~4 of
  Algorithm~\ref{alg:learn_vpa}.
  Membership queries in Steps~2 and~5 involve words either of form
  \(wqw'\) or \(wqmw'\), where \(q \in Q_i\), \(m \in \Sigma_M\), and
  \( (w,w') \in C_i\). With \(|C_i|\) bounded by \(|Q_i| \leq m\) at
  completion, total queries amount to at most \(\sum_{i=0}^{k}(|Q_i| +
  |Q_i||\Sigma_M|)|C_i| = \sum_{i=0}^{k}|Q_i||C_i|(1 + |\Sigma_M|) \leq m^2(1 + |\Sigma| + |\callsyms|\times m
  \times|\retsyms|) \leq m^3|\Sigma|^2 \).

  In conclusion, the number of queries remains polynomially bound by \(n\), \(m\), and \(|\Sigma|\), {including \( O(m^3|\Sigma|^2+m\log n) \) membership queries and \( O(m) \) equivalence queries.}
\end{proof}

\section{Proofs of Theorems in Section~\ref{subsec:learn_tag}}\label{app:proofs_of_learn_tag}

Given tagging \( t \), we say that string \( s \) is \emph{\( t \)-well-matched}, if \( t(s) \) is well-matched.

\begin{definition}\label{def:parse_tree}
  [{Parse Tree}]
  Given a grammar \( G = (V, \Sigma, P, L_0) \), a parse tree with respect to grammar \( G \) is an ordered tree where (1) the leaves of the tree are terminals in \( \Sigma \) or \( \epsilon \), and (2) each non-leaf node is a nonterminal \( L \) in \( V \), where the children of the node are \( \alpha_1 \), \( \alpha_n \), \dots, \( \alpha_n \) such that \( L\to \alpha_1 \alpha_2 \ldots \alpha_n \) is a production rule in \( P \), or \( \epsilon \), such that \( L\to\epsilon \) is a production rule. The root of the tree should be \( L_0 \), the start nonterminal of grammar \( G \).  A parse tree of a string \( s\in\Sigma^* \) in the language of grammar \( G \) is a parse tree whose leaves, when concatenated from left to right, form \( s \).
\end{definition}

\begin{lemma}
  [Pumping Lemma for VPLs]\label{app:lemma:pumping}
  For any VPL \( \hat\lang \), there exists a positive number \( l \) such that, for any string \( s \) in \( \hat\lang \) with length greater than \( l \), it is possible to express \( s \) according to one of the following conditions:
  \begin{enumerate}
  \item (Regular Pumping) We can partition \( s \) into \( s = uxv \) for strings \( u, x, \) and \( v \), with \( x \) being non-empty, such that \( ux^kv \) remains in \( \hat\lang \) for all \( k \geq 0 \). 
  \item (Nesting Pumping) We can partition \( s \) into \( s = uxzyv \) for strings \( u, x, z, y, \) and \( v \), with \( x \) and \( y \) being non-empty, \( x \) containing a call symbol, and \( y \) containing a return symbol, such that \( ux^kzy^kv \) is valid for all \( k \geq 1 \).
  \end{enumerate}
\end{lemma}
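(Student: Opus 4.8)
The plan is to run the standard context-free pumping argument on a grammar for the VPL, but organized around a parse tree whose internal structure records the stack discipline, so that the two cases of the lemma fall out depending on whether the repeated nonterminal occurrence is ``linear'' or ``straddling a matched call/return pair''. First I would fix a well-matched VPG \( G=(V,\hat\Sigma,P,L_0) \) generating \( \hat\lang \) (this exists since \( \hat\lang \) is a VPL, and we may take \( G \) in the normal form given in the definition of well-matched VPGs, whose rules are \( L\to\epsilon \), \( L\to cL_1 \), and \( L\to\ab{L_1}L_2 \)). Let \( r=|V| \) and set the pumping length \( l \) large enough (e.g.\ \( l \ge 2^{r+1} \)) that any string \( s\in\hat\lang \) with \( |s|>l \) has a parse tree \( \tree \) containing a root-to-leaf path on which some nonterminal \( N \) repeats; fix such a path and two nodes \( \nu_1 \) (closer to the root) and \( \nu_2 \) both labeled \( N \), chosen among the lowest such pair so that the subtree at \( \nu_1 \) has bounded height.

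Second I would do the case analysis on how \( \nu_2 \) sits inside the subtree rooted at \( \nu_1 \). Write the yield of \( \tree(\nu_1) \) as \( \alpha\,(\text{yield of }\tree(\nu_2))\,\beta \), so that \( s=u\,\alpha\,w\,\beta\,v \) where \( w \) is the yield of \( \tree(\nu_2) \) and \( u,v \) are the material to the left and right of \( \tree(\nu_1) \). By the usual argument, replacing \( \tree(\nu_2) \) by \( \tree(\nu_1) \) (or deleting down to \( \tree(\nu_2) \)) \( k \) times keeps the string in \( \hat\lang \): we get \( u\,\alpha^k\,w\,\beta^k\,v\in\hat\lang \) for all \( k\ge 0 \). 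Now the matching rules \( L\to\ab{L_1}L_2 \) force a key structural fact: along the chosen path, each time we take a matching rule and descend into the \emph{first} child \( L_1 \), the descent is ``inside'' a call/return pair, and the portions of \( \alpha \) and \( \beta \) contributed by that step are, respectively, a block starting with a call symbol and a block ending with the matching return symbol. If on the path from \( \nu_1 \) to \( \nu_2 \) at least one matching rule is used with the descent going into its first child, then both \( \alpha \) and \( \beta \) are nonempty, \( \alpha \) contains a call symbol and \( \beta \) contains a return symbol; setting \( x=\alpha \), \( z=w \), \( y=\beta \) gives Nesting Pumping (and pumping with \( k\ge 1 \) keeps well-matchedness, since each inserted \( \alpha \) and \( \beta \) pair up). Otherwise, every step on that path is either a linear rule \( L\to cL_1 \) or a matching rule entered through its \emph{second} child \( L_2 \); in either case only one side accumulates — I would observe that linear rules only add to the left and second-child matching descents only add to the left as well (the whole \( \ab{L_1}{} \) block goes into \( u \) or into what becomes the left part), so one of \( \alpha,\beta \) is empty while the other, call it \( x \), is nonempty, and \( u\,x^k\,(\text{rest})\in\hat\lang \) for all \( k\ge 0 \), which is Regular Pumping after relabeling \( v \) to absorb \( w\beta \).

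The main obstacle I expect is the bookkeeping in the ``only one side accumulates'' claim: in a matching rule \( L\to\ab{L_1}L_2 \) the path could descend into either \( L_1 \) or \( L_2 \), and I need to argue carefully that the \emph{non-degenerate} case — where both \( \alpha \) and \( \beta \) pick up nonempty, oppositely-tagged material — happens exactly when some descent into a first child occurs between \( \nu_1 \) and \( \nu_2 \), and that in the complementary case the accumulation is genuinely one-sided so that the classical CFL pumping collapses to the regular form. Getting the well-matchedness of the pumped strings right in the nesting case also needs a small argument: since the repeated nonterminal \( N \) generates well-matched strings (all nonterminals of a well-matched VPG do), each copy of \( x \) and each copy of \( y \) contributes balanced bracketing around the common core \( z \), so \( ux^kzy^kv \) stays well-matched and hence in \( \hat\lang \). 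I would also note the choice of \( l \): because \( G \) may have \( \epsilon \)-rules and unit-like chains, I would pick the two repeated occurrences so that the lower subtree \( \tree(\nu_2) \) is not all of \( \tree(\nu_1) \), guaranteeing \( \alpha\beta\neq\epsilon \), which is what forces \( x \) (resp.\ \( x \) and \( y \)) to be nonempty in both cases.
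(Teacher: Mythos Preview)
Your proposal is correct and follows essentially the same approach as the paper: both argue via a repeated nonterminal in a well-matched VPG parse tree and case-split on whether the derivation path from the upper occurrence to the lower one descends into the first child \(L_1\) of some matching rule \(L\to\ab{L_1}L_2\) (the paper phrases this equivalently as whether the right context \(s_2\) in \(L\to^* s_1 L s_2\) is nonempty). Your explicit bookkeeping for why linear rules and second-child descents accumulate only on the left---so that \(\beta=\epsilon\) and the CFL pump collapses to regular pumping---is if anything more careful than the paper's terse treatment; note also that in the normal form every non-\(\epsilon\) rule emits at least one terminal, so \(\alpha\beta\neq\epsilon\) is automatic once \(\nu_1\neq\nu_2\), and well-matchedness of the pumped strings is immediate since they are produced by the grammar.
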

  
\begin{proof}
  Let VPG \( G=(\hat\Sigma, V, P, L_0) \) be a grammar of \( \hat \lang \).
  We define \( l \) as the length of the longest string \( s \) that contains no recursion in {any of} its parse trees. Formally, in a parse tree of \( s \), if the subtree rooted with a nonterminal node \( A \) contains another appearance of \( A \), we say there is recursion in the parse tree and we call \( A \) a recursive nonterminal in the parse tree.
\( l \) is then defined as the length of the longest string \( s \) whose parse trees do not contain recursion.
This \( l \) is well defined because the number of non-recursive parse trees is finite: any path that goes from the root to a leaf of a parse tree and exceeds
the length of \( |V|+2 \) must have \( |V|+1 \) nonterminals and revisit at least one nonterminal twice.

  For any string \( s \) exceeding \( l \) in length, one of its parse trees must have a recursive nonterminal; say it is \( L \). The derivation of the parse tree can be written as:
  \( L_0 \to^* u L v \to^* u (s_1Ls_2)v \to^* s \),  where \( L \to^* s_1Ls_2 \). We have two cases:

  \begin{enumerate}
    \item If \( s_2 \) is empty, then \( s_1 \) cannot be empty since \( L \to L \) is not a valid VPG rule. Thus, \( u(s_1)^ks_Lv \) remains valid, where \( s_L \) is {a} terminal string derived from \( L \). This satisfies regular pumping. 

    \item If \( s_2 \) is not empty, then a matching rule is used somewhere in the derivation sequence that leads to the second appearance of \( L \). This is because
     by the VPG rules, if only rules of the form \( L_1 \to c L_2 \) or \( L_1 \to \epsilon \) were used, then \( L \) must be the last symbol in the derived string \( s_1Ls_2 \), which contradicts with that \( s_2 \) is not empty.   This leads to:
    \[ L \to^* s_1'\ac A \bc B s_2' \to^* s_1'\ac s_1''Ls_2'' \bc B s_2' \]
    Here, \( A \to^* s_1'' L s_2'' \). We then select \( x \) as \( s_1'\ac s_1'' \) and \( y \) as \( s_2''\bc s_2' \) for nesting pumping.
  \end{enumerate}
\end{proof}

\begin{lemma}\label{app:lemma:verify_nesting_pattern_for_string}
  Consider an oracle VPL \( \lang \), a VPG \( G=(\Sigma, V, P, L_0) \) for \(L\), and an oracle tagging \( t_\oracle \). For each string \( s \in \lang \) and \( s=uxzyv \), where \( u,x,z,y,v \) are substrings, and \( x,y \) are nonempty, if string \( u  x^k  z  y^k  v \) is valid but string \( u  x^k  z  y^j  v \) is invalid for \( k,j\leq (|V|^2+1)^2\) and \(k\neq j\), then \( t_\oracle(x) \) contains an oracle call symbol, and \( t_\oracle(y) \) contains an oracle return symbol, and the two symbols are matched with each other in \(s\).
\end{lemma}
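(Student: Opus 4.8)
\emph{Proof proposal.}

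First I would reduce the statement to the tagged language. Since the tagging function $t_\oracle$ distributes over concatenation and commutes with taking powers, writing $\hat s = t_\oracle(s) = UXZYV$ with $U=t_\oracle(u)$, $X=t_\oracle(x)$, $Z=t_\oracle(z)$, $Y=t_\oracle(y)$, $V=t_\oracle(v)$, the hypotheses become: $UX^kZY^kV\in\hat\lang_\oracle$ for all $k\le B:=(|V|^2+1)^2$, while $UX^kZY^jV\notin\hat\lang_\oracle$ whenever $k\ne j$ and $k,j\le B$. Two elementary observations are worth isolating: (i) every string of $\hat\lang_\oracle$ is well-matched (well-matched VPG), so a non-well-matched string is automatically invalid; and (ii) letting $\beta(w)$ denote (number of call symbols) minus (number of return symbols) in $w$, comparing $UXZYV$ with $UX^2ZY^2V$ gives $\beta(X)=-\beta(Y)$. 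It then suffices to locate a call symbol inside $X$ whose partner in the (unique) bracket-matching of the well-matched string $\hat s$ lies inside $Y$; undoing the tagging yields the claim for $x,y,s$.

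The plan for the core is a parse-tree pumping argument forced to align with the prescribed factorization. Fix a VPG $G=(V,\hat\Sigma,P,L_0)$ for $\hat\lang_\oracle$ and, taking $k=|V|^2+1\le B$, a parse tree of $w=UX^kZY^kV$. For each $i\in\{1,\dots,k\}$ I would record the pair $(L_i,L_i')\in V\times V$, where $L_i$ labels the deepest node whose yield contains the $i$-th copy of $X$ but not the $(i{-}1)$-th, and $L_i'$ is the analogous node for the $i$-th copy of $Y$. Since there are only $|V|^2$ possible pairs and $k>|V|^2$ copies, two coincide, say $(L_i,L_i')=(L_j,L_j')$ with $i<j$; excising the intervening parse fragment exhibits a grammar recursion $L\Rightarrow_G^* \gamma_1 L \gamma_2$ in which $\gamma_1$ contains a copy of $X$ and $\gamma_2$ a copy of $Y$ (and, as a consistency check, reproves $UX^{k-(j-i)}ZY^{k-(j-i)}V\in\hat\lang_\oracle$). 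A case analysis on the production forms then finishes: if $\gamma_2=\epsilon$, the recursion iterates on the $x$-side alone, and since all iterates are valid hence well-matched this eventually lets me pump $x$ without $y$, contradicting $UX^{k'}ZYV\notin\hat\lang_\oracle$ for $k'\ne1$; symmetrically $\gamma_1=\epsilon$ is impossible; so both are nonempty, which by the VPG rule shapes only happens when the recursion routes through a matching rule $M\to\ab{A}B$ with $L$ appearing inside the $A$-subtree, i.e. $L\Rightarrow^*\mu_1 M\mu_2\Rightarrow\mu_1\,\ac A\bc B\,\mu_2\Rightarrow^*\mu_1\ac\nu_1 L\nu_2\bc\nu_3\mu_2$. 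A second pigeonhole over $V\times V$, applied to the $|V|^2+1$ recursion levels produced by the first one (this is where the outer square of $(|V|^2+1)^2$ is spent), lets me slide the recursion until $\gamma_1$ and $\gamma_2$ coincide with a single copy of $X$ and of $Y$; then $\ac$ sits inside $X$, its matching $\bc$ inside $Y$, and they are matched in $\hat s$, which is exactly what is needed.

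The hard part will be the alignment bookkeeping. The pumping machinery naturally produces a pumpable pair $(\gamma_1,\gamma_2)$ \emph{somewhere near} $X$ and $Y$, but the conclusion demands that $\gamma_1$ be exactly the substring $X$ and $\gamma_2$ exactly $Y$, so that the extracted call/return genuinely lie inside $x$ and $y$ rather than straddling the boundaries $u\mid x$, $x\mid z$, $z\mid y$, or $y\mid v$; this is precisely why a single-level pigeonhole is not enough and the bound $(|V|^2+1)^2$ appears. The $k\ne j$ hypotheses are exactly the lever that excludes the degenerate alignments — a pure-$x$ pump, a pure-$y$ pump, or a ``regular'' rather than ``nesting'' recursion — but handling these when $\beta(X)\ne0$ is the subtlest point, since there a pure-$x$ pump is ruled out not by the hypotheses (the string is not even well-matched) but by well-matchedness of grammar-produced iterates forcing $\beta(\gamma_1)=0$. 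I would carry out this degenerate-case surgery by the same recursive-nonterminal analysis used in the proof of Lemma~\ref{app:lemma:pumping}.
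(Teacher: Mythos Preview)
Your route diverges from the paper's, and the central ``recursion extraction'' step has a genuine gap. From the collision $(L_i,L_i')=(L_j,L_j')$ you assert a single recursion $L\Rightarrow^*\gamma_1 L\gamma_2$ with $\gamma_1$ carrying a copy of $X$ and $\gamma_2$ a copy of $Y$, but $L_i$ and $L_i'$ are labels of \emph{two different nodes} $N_i$ and $N_i'$ in the tree, and nothing forces $N_j$ to be a descendant of $N_i$, let alone the $X$-side and $Y$-side excisions to be the \emph{same} recursion on the \emph{same} nonterminal. Concretely, take the grammar $L\to\ab{L}E\mid cL\mid\epsilon$, $E\to\epsilon$: in the parse tree of $\ac^k c\bc^k$ every nonterminal node's yield either contains $Y_1,\dots,Y_m$ for some $m$ (the $L$-nodes) or no $Y$ at all (the $E$-nodes), so for $i\ge 2$ there is \emph{no} nonterminal node whose yield contains $Y_i$ but not $Y_{i-1}$, and your $N_i'$ is undefined. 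The coordination between the $X$-side and the $Y$-side that you need is exactly the nesting you are trying to prove, so the argument is circular at this point; the ``second pigeonhole'' paragraph inherits the same issue and remains a sketch.

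The paper avoids all of this by never reading $X$ and $Y$ off a single tree. It argues by contradiction that $t_\oracle(x)$ carries no unmatched symbol; then for each $k$ the well-matchedness of $x$ forces the derivation of the $x^k$-block in $ux^kzy^kv$ to have the linear shape $L_{k,1}\Rightarrow^* x^k L_{k,2}$, and the pigeonhole is over the \emph{values of $k$} (one parse tree per $k$) on the boundary pair $(L_{k,1},L_{k,2})\in V\times V$. A repeat at $k_1\neq k_2$ lets one splice $L_{1}\Rightarrow^* x^{k_2}L_{2}$ into the $k_1$-tree, producing the forbidden $ux^{k_2}zy^{k_1}v\in\lang$. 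Once $x$ is known to carry an unmatched symbol, a short argument on $ux^2zy^2v$ shows it is a call whose partner lies in $y$. If you want to keep a tree-pumping flavor, the fix is to pigeonhole on a \emph{single} path---say the sequence of nonterminals encountered just before each successive copy of $X$ begins along the leftmost derivation of one fixed $ux^{|V|^2+1}zy^{|V|^2+1}v$---rather than on two uncoordinated nodes per index.
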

\begin{proof}
  In this proof, we abuse the notation \( x \) to also mean \( t_\oracle(x) \).
  
  We first show that \(x\) as well as \(y\) contains unmatched symbol. Otherwise, \( x \) and \( y \) contain only plain symbols or well-matched call-return pairs. For each \( k \geq 1 \) and string \( u  x^{k}  z  y^{k}  v \), the derivation path of \( {x}^{k} \) can be written as \( L_{k,1}\to^* {x}^{k}L_{k,2} \),
  where \( L_{k,1}, L_{k,2} \) are two nonterminals in \( V \).
  This is because, suppose \( x \) contains only plain symbols, then the derivation path of \( x \) is of the form 
  \[ L_1\to {x}[1]L_2\to {x}[1]{x}[2]L_3\to \cdots \to xL_{|x|+1} \]
  for certain nonterminals \( L_{i,i\in[1..|x|+1]} \). The case is similar when \( x \) also contains well-matched substrings; we omit the discussion for brevity.
  Now, for each \( k\in[1..|V|^2+1] \), we have the following derivations:
  \begin{align*}
    L_{1,1}&\to^* {x}^{1}L_{1,2} \\
    L_{2,1}&\to^* {x}^{2}L_{2,2} \\
    &\dots\\
    L_{|V|^2+1,1}&\to^* {x}^{|V|^2+1}L_{|V|^2+1,2}
  \end{align*}
  Apparantly, there exist \( k' \) and \( k_1\neq k_2 \), such that \( k',k_1,k_2\leq |V|^2+1 \), and a pair \( (L_{k',1}, L_{k',2}) \) appears twice on both sides, i.e.,
  \begin{align*}
    &\dots\\
    L_{k',1}&\to^* {x}^{k_1}L_{k',2} \\
    &\dots\\
    L_{k',1}&\to^* {x}^{k_2}L_{k',2} \\
    &\dots
  \end{align*}
  Thus, both string \( u  x^{k_1}  z  y^{k_1}  v \) and \( u  x^{k_2}  z  y^{k_1}  v \) are valid. Given that \( k_1\neq k_2 \), this is a contradiction.

  Therefore, \( x \) and \( y \) must include unmatched symbols. Consider the type of the unmatched symbol in \(x\).
   If \( x \) includes a return symbol \( \bc \), where the matched \( \ac \) is {before} \( x \), then \( ux^2zy^2v \) is invalid, because \( u \) has no additional call symbol to match \( \bc \). Thus, \( x \) includes a symbol \( \ac \), whose matched symbol \( \bc \) is {after} \( x \). If \( \bc \) is in \( y \), we are done. Otherwise, \( \bc \) is either in \( z \) or in \( v \). Consider \( ux^2zy^2v \). Since the string is valid, the unmatched \( \ac \) in \(x^2\) must match a return symbol in \( y^2 \). 
   
   In conclusion, \( x \) contains an oracle call symbol, which matches a return symbol in \( y \).
\end{proof}

\begin{theorem}[Termination and Correctness of Algorithm~\ref{alg:backtracking_tagging}]
  \label{app:lemma:backtrack_result}
  Let \( m \) be the number of states of the minimal \( k \)-SEVPA for the oracle VPL.
  There exists a number \( K\leq ((m^2+2m)^2+1)^2 \), with which function \( \tagInfer{\oracle}{S} \) returns a tagging that is compatible with a finite set of seed strings \( S \).
\end{theorem}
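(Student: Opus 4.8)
The plan is to establish two things: (1) that for a suitably chosen bound $K$, the set of candidate nesting patterns $N_{S,K}$ computed by $\candidateNPKW$ is exactly (or at least contains only) genuine nesting patterns in the sense of Lemma~\ref{app:lemma:verify_nesting_pattern_for_string}, so that the oracle tagging $t_\oracle$ is compatible with every element of $N_{S,K}$; and (2) that once every candidate is a genuine nesting pattern, the backtracking search $\tagSearchKW$ cannot fail---it will always find \emph{some} tagging compatible with all of $N_{S,K}$, because $t_\oracle$ (restricted to the pairs appearing in $S$) is a witness that the search space is nonempty. Combining these, the outer \textbf{Repeat} loop terminates at the first $K$ large enough for (1) to hold, and returns a compatible tagging.

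For step (1), I would invoke Lemma~\ref{app:lemma:verify_nesting_pattern_for_string} with $V$ replaced by the appropriate nonterminal count: the lemma says that if $u x^k z y^k v \in \lang$ for all $k \le (|V|^2+1)^2$ but $u x^k z y^j v \notin \lang$ for distinct $k,j$ in that range, then $t_\oracle(x)$ contains an unmatched oracle call symbol matched inside $t_\oracle(y)$. So I need the bound $K$ in the algorithm to be at least $(N^2+1)^2$ where $N$ is the number of nonterminals of some VPG for $\hat\lang_\oracle$. The standard construction converting the minimal $k$-SEVPA with $m$ states into a VPG (the conversion cited from \citet{10.1145/1007352.1007390}) yields $O(m^2)$ nonterminals---more precisely one can bound it by $m^2 + 2m$ (a nonterminal per state, plus per state-pair for matching rules, say). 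Plugging $N \le m^2 + 2m$ into $(N^2+1)^2$ gives exactly the stated bound $K \le ((m^2+2m)^2+1)^2$. So the key calculation is just bookkeeping on the nonterminal count of the converted grammar.

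For step (2), I would argue by a backtracking-completeness / invariant argument on $\tagSearchKW$. Consider the tagging $T_\oracle$ obtained from $t_\oracle$ by keeping only those call/return pairs $(a,b)$ such that $a$ and $b$ actually occur (matched with each other) in some nesting pattern of $S$. Since each seed string is well-matched under $t_\oracle$, it is also well-matched under $T_\oracle$ (dropping a matched pair's tags preserves well-matchedness), and by step (1) $T_\oracle$ is compatible with every pattern in $N_{S,K}$. I would then show that the search tree explored by $\tagSearchKW$ contains a path consistent with $T_\oracle$: at each undetermined nesting pattern $uxzyv$, since $T_\oracle$ is compatible with it, there is a pair $(a,b)$ with $a \in x$, $b \in y$ that $T_\oracle$ uses; the branch that picks this $(a,b)$ passes both guards (well-matchedness under $T \cup \{(a,b)\}$, and compatibility with $\ndone \cup \{uxzyv\}$) because $T_\oracle$ itself witnesses them. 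Hence the search, exploring this branch (or another successful one), reaches $N = \emptyset$ and returns $\texttt{Some}(T)$; it never returns $\texttt{None}$. Finiteness of the search is immediate since $S$ is finite, so $N_{S,K}$ is finite and each pattern contributes finitely many character-pair branches.

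The main obstacle I anticipate is \textbf{step (1)}, specifically being careful that $\candidateNPKW$'s bounded check---which only tests $k,j \le K$---is strong enough: a partitioning could pass the bounded test at level $K$ yet not be a true nesting pattern (the divergence $u x^k z y^j v \notin \lang$ might first appear only for $j > K$, or conversely $u x^k z y^k v$ might leave $\lang$ only for large $k$). Lemma~\ref{app:lemma:verify_nesting_pattern_for_string} is designed precisely to rule this out once $K \ge (N^2+1)^2$: the pigeonhole argument on derivation nonterminals forces that if $x,y$ contain no matched unmatched-pair then a counterexample $u x^{k_1} z y^{k_1} v$ vs. $u x^{k_2} z y^{k_1} v$ with $k_1,k_2 \le N^2+1 \le K$ already appears, contradicting the bounded test. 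So the resolution is to thread the quantitative bound from that lemma through cleanly; the remaining parts of the argument (step 2, termination of the loops) are routine. I would also remark that the theorem only claims compatibility with the \emph{finite} set $S$, not with all of $\lang$---that gap is closed separately by Theorem~\ref{thm:construct_seed_strings}.
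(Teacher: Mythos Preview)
Your proposal is correct and follows the same approach as the paper: invoke Lemma~\ref{app:lemma:verify_nesting_pattern_for_string} so that once $K \ge (|V|^2+1)^2$ every candidate in $N_{S,K}$ carries an oracle call/return pair, then bound $|V|$ by $m^2+2m$ via the \citet{10.1145/1007352.1007390} VPA-to-VPG conversion, and observe that the oracle tagging witnesses nonemptiness of the search. Your step~(2) is considerably more explicit than the paper's one-line ``at least the oracle tagging can be found''; the path-through-the-search-tree argument you give (restricting to oracle pairs and using Unique Pairing to preserve well-matchedness under subsets) is exactly the right way to flesh out that line, and your identification of the bounded-check subtlety in step~(1) as the crux is accurate.
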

\begin{proof}
  First, we show that at least the oracle tagging can be found, which must be compatible with the pattern. This is because, from Lemma~\ref{app:lemma:verify_nesting_pattern_for_string}, when \( K> (|V|^2+1)^2 \), each candidate nesting pattern \( (x,y) \) must either be invalidated, or contain oracle call-return pair unmatched in \( x \) and \( y \), respectively.

  Therefore, since any VPG for the oracle VPL can be used for the checking, we pick the specific VPG {converted from the minimal \( k \)-SEVPA by the method discussed in \citet{10.1145/1007352.1007390}, Theorem 5.3 (Visibly pushdown grammars), where \( |V| \) is bounded by \( m^2+2m \).}
\end{proof}

For Lemmas~\ref{app:lemma:finite_oracle_unmatched_symbols} and~\ref{app:thm:compatible_tagging},
we first introduce another congruence relation by \citet{Alur2005Congruences}. Two \emph{well-matched} strings, \(s_1\) and \(s_2\), are deemed congruent, denoted as \(s_1 \sim s_2\), if their contexts coincide. Specifically,
\[ \forall u,v \in \Sigma^*, us_1v \in \lang \iff us_2v \in \lang. \]
This congruence is an equivalence relation, and \(\lang\) {is} a VPL on \( \hat\Sigma \) if and only if the congruence relation admits a finite number of equivalence classes. 

Given a tagging \( t \), denote the congruence relation over \( \hat\lang_t \) as \( \sim_t \).

Given a compatible tagging function \(t\) and a string \(s\), the following Lemma~\ref{app:lemma:finite_oracle_unmatched_symbols} shows that, if \(t(s)\) is well-matched, then \(t_\oracle(s)\) has a bounded number of unmatched symbols.

\begin{lemma}\label{app:lemma:finite_oracle_unmatched_symbols}
  Given oracle language \( \lang \) and oracle tagging \( t_\oracle \),
  for each compatible tagging \( t \),
  there exists an upper bound positive number, denoted as \( N_t \),
  such that for each string \( s\in\Sigma^* \), if \( s \) is \( t \)-well-matched and there exists context strings \( (w,w') \) such that \( wsw'\in\lang \), then \( t_\oracle(s) \)  contains at most \( N_t \) unmatched oracle call and return symbols.
\end{lemma}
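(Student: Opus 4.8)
\emph{Approach.} The plan is a proof by contradiction. Assuming $t_\oracle(s)$ carries too many unmatched oracle symbols, I will extract from the string $wsw'\in\lang$ a genuine \emph{nesting pattern} whose call-side substring lies entirely inside $s$ (or, by a symmetric argument, whose return-side substring lies inside $s$), and then derive a contradiction from compatibility of $t$ with $\lang$ together with the hypothesis that $t(s)$ is well-matched. The point is that a nesting pattern forces an \emph{unmatched} tagged symbol inside a piece of $t(s)$, which cannot happen if $t(s)$ is well-matched.

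\emph{Setup, counting, and pigeonhole.} Since $\hat\lang_{t_\oracle}$ is a VPL, fix a (deterministic) VPA $\automaton$ with state set $Q$ recognizing it, and set $N_0=|Q|$ and $N_t=2N_0^2$. Suppose for contradiction that $t_\oracle(s)$ has more than $2N_0^2$ unmatched oracle call/return symbols; by pigeonhole, more than $N_0^2$ of them are of one kind, say unmatched calls (the return case is symmetric). Using the standard shape of a substring of a well-matched string---a block of unmatched returns, then well-matched material, then a block of unmatched calls, with well-matched blocks interleaved---and the fact that an oracle call always matches an oracle return to its right, these $q>N_0^2$ unmatched calls occur at positions $p_1<\cdots<p_q$ in $s$, and each matches a return occurring in $w'$, the matching returns appearing in $w'$ at positions $q^{(q)}<\cdots<q^{(1)}$. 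Running $\automaton$ on $t_\oracle(wsw')$, let $\sigma_i$ be the state just before reading the $i$-th unmatched call and $\rho_i$ the state just after the pop transition consuming its matching return; since there are only $N_0^2$ pairs $(\sigma_i,\rho_i)$, I can pick $i<j$ with $(\sigma_i,\rho_i)=(\sigma_j,\rho_j)$.

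\emph{Surgery and the contradiction.} Put $U=w\cdot s[1..p_i{-}1]$, $X=s[p_i..p_j{-}1]$, $Z=s[p_j..|s|]\cdot w'[1..q^{(j)}]$, $Y=w'[q^{(j)}{+}1..q^{(i)}]$, and $V=w'[q^{(i)}{+}1..|w'|]$, so that $wsw'=UXZYV$; here $X$ is a substring of $s$ containing exactly $j-i\geq 1$ oracle call symbols (everything else in $X$ being well-matched), and $Y$ is a substring of $w'$ containing the $j-i$ matching returns. A standard run-splicing argument on $\automaton$---realigning the pushed block using $\sigma_i=\sigma_j$ and the popped block using $\rho_i=\rho_j$---shows $UX^kZY^kV\in\lang$ for every $k\geq 1$. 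Moreover $X$ has net stack effect $+(j-i)$ and $Y$ has net effect $-(j-i)$, so $UX^kZY^{k'}V$ has net effect $(k-k')(j-i)\neq 0$ whenever $k\neq k'$, hence is not well-matched and not in $\lang$. Thus $(X,Y)$ is a bona fide nesting pattern of $wsw'\in\lang$. By compatibility of $t$ with $\lang$, some pair $(\ac,\bc)$ in $t$ has $t(X)$ containing an occurrence of $\ac$ that is unmatched in $t(wsw')$; but $t(X)$ is a contiguous substring of the well-matched string $t(s)$, so that occurrence of $\ac$ is matched within $t(s)$, hence within $t(wsw')=t(w)\,t(s)\,t(w')$ --- a contradiction. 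The symmetric argument on the unmatched returns (which yields a nesting pattern whose return-side substring lies in $s$) gives the remaining half of the bound, so $N_t=2N_0^2$ works.

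\emph{Main obstacle.} The delicate step is the run-splicing surgery: one must verify that after deleting or duplicating the call block $X$ inside $s$ together with the corresponding return block $Y$ inside $w'$, the run of $\automaton$ stays legal end-to-end, which relies on the push-side state equality to reconnect the call block, the pop-side state equality to reconnect the return block, and careful bookkeeping of which well-matched sub-blocks of $w'$ are duplicated (and why the stack levels line up throughout). The remaining ingredients---the shape of substrings of well-matched strings, the net-stack-effect computation establishing condition~(3) of a nesting pattern, and the closing compatibility argument---are routine.
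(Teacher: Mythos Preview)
Your setup and the surgery step are fine: extracting a nesting pattern $(X,Y)$ with $X\subseteq s$ and $Y\subseteq w'$ via pigeonhole on state pairs of an oracle VPA is a reasonable route, and the run-splicing plus net-effect arguments do establish that $(X,Y)$ is a genuine nesting pattern of $wsw'$. The problem is the closing contradiction. Compatibility of $t$ with the nesting pattern $UXZYV$ does \emph{not} assert that the $\ac$ in $t(X)$ is ``unmatched in $t(wsw')$''; since $wsw'\in\lang$ and $t$ is compatible with $\lang$, the string $t(wsw')$ is well-matched and nothing in it is unmatched. What compatibility provides (see the definition and how the paper itself uses it) is only that some $\ac$ is unmatched \emph{within} $t(X)$, i.e.\ its partner lies outside $X$, and separately that some $\bc$ is unmatched within $t(Y)$; it does \emph{not} force that particular $\ac$'s partner to lie in $Y$. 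In your decomposition $X=s[p_i..p_j{-}1]$ there is a nonempty tail $s[p_j..|s|]\subseteq Z$ of $s$ to the right of $X$; because $t(s)$ is well-matched, the partner of $\ac$ must lie inside $s$, and it can perfectly well land in that tail of $Z$. That is not a contradiction.

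This is exactly the obstacle that makes the paper's argument substantially longer. After producing (many) nesting patterns with all $x$-blocks inside $s$ and all $y$-blocks inside $w'$, the paper compares the number $n_c(x)$ of unmatched $t$-calls to the number $n_d(x)$ of unmatched $t$-returns in a single $x$. If $n_c(x)\neq n_d(x)$ one can manufacture $K$ pairwise distinct equivalence classes of the oracle congruence $\sim_{t_\oracle}$, which bounds $K$ since $\hat\lang_{t_\oracle}$ is a VPL. The delicate case $n_c(x)=n_d(x)$ does not yield a direct contradiction at all; instead the paper rewrites $x^2$ to expose a strictly shorter sub-block that must again carry an unmatched $t$-call, and iterates to an infinite descent. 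Your single-pattern argument supplies none of this case analysis; one invocation of compatibility on one nesting pattern is not enough to finish the proof.
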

\begin{proof}
  In this proof, to simplify the notation, we use ``\(s\)'' or ``\(w\)'' to also mean strings tagged by the oracle tagging function \(t_\oracle\). For strings tagged by a compatible tagging function \(t\), we use ``\(t(s)\)'' and ``\(t(w)\)'' explicitly.

  To simplify the problem, let us assume there is an oracle VPG that includes only one matching rule; we denote the matching rule as \(L\to \ab{A}B\). 
  As an overview, we show that 
  for string \( s \) that contains \( K \) unmatched oracle call symbols \( \ac \),
  we can construct \( K \) equivalence classes for the oracle congruence relation.
  Therefore, \( N_t \) is bounded by the number of oracle equivalence classes. The case for multiple matching rules can be similarly proved, and we omit it for brevity.
  
  For \(t\)-well-matched string \(s\) with \(wsw'\in\lang\), if \(s\) contains no unmatched oracle symbols, then we are done. Otherwise, assume \(s\) contains no return symbols, and \( K \) unmatched oracle call symbols \( \ac \) (the other cases are similar and we omit them for brevity).
  We can rewrite \( wsw' \) to reflect the derivation of these oracle call and return symbols as follows:
  \[ wsw' = w+q_0(\ac s_3^{(1)}) \dots (\ac s_3^{(K)})s_L^{(1)}+s_L^{(2)}(s_4^{(K)}\bc s_B^{(K)}) \dots (s_4^{(1)}\bc s_B^{(1)})+w'' \]
  where \(w\), \(q_0\), \( s_2^{(i)} \), \( s_3^{(i)} \), \( s_L^{(1)} \), \( s_L^{(2)} \),  \( s_4^{(i)} \), \( s_B^{(i)} \) and \( w'' \) for \(i\in[1..K]\) are strings, and
  \begin{align*}
    s  &= q_0( \ac s_3^{(1)} ) \dots ( \ac s_3^{(K)} ) s_L^{(1)},\\
    w' &= s_L^{(2)}( s_4^{(K)}\bc s_B^{(K)} ) \dots ( s_4^{(1)}\bc s_B^{(1)} )+w''    
  \end{align*}
  and string \( q_0( \ac s_3^{(1)} ) \dots ( \ac s_3^{(K)} )s_L^{(1)}+s_L^{(2)}( s_4^{(K)}\bc s_B^{(K)} ) \dots ( s_4^{(1)}\bc s_B^{(1)} ) \) is derived by:
  \begin{align*}
    L \to \ac A \bc B &\to ( \ac s_3^{(1)} ) L ( s_4^{(1)}\bc s_B^{(1)} )\\
    &\to ( \ac s_3^{(1)} ) \ac A \bc B  ( s_4^{(1)}\bc s_B^{(1)} )\\
    &\to ( \ac s_3^{(1)} ) ( \ac s_3^{(2)} ) L ( s_4^{(2)}\bc s_B^{(2)} ) ( s_4^{(1)}\bc s_B^{(1)} )\\
    &\to ( \ac s_3^{(1)} ) \dots ( \ac s_3^{(K)} ) L ( s_4^{(K)}\bc s_B^{(K)} ) \dots ( s_4^{(1)}\bc s_B^{(1)} )\\
    &\to ( \ac s_3^{(1)} ) \dots ( \ac s_3^{(K)} ) s_L ( s_4^{(K)}\bc s_B^{(K)} ) \dots (s_4^{(1)}\bc s_B^{(1)})
  \end{align*}
  where \( s_L=s_L^{(1)}s_L^{(2)} \).

  Let \( (x_i,y_i)=(\ac s_3^{(i)},s_4^{(i)}\bc s_B^{(i)}) \), \( i\in[1..K] \). Notice that \((x_i,y_i)\) are \( K \) disjoint nesting patterns.
  Since those \((x_i,y_i)\) are exchangeable, we denote each of them as \((x,y)\) when their indices do not matter. With this notation, we can simplify the above formulae as
  \begin{align}\label{app:eq:split1}
    wsw' &= w+q_0x^Ks_L^{(1)}+ s_L^{(2)}y^Kw'', \text{ where} \\
    \label{app:eq:split2}
    s&=q_0x^Ks_L^{(1)} \\
    \label{app:eq:split3}
    w'&=s_L^{(2)}y^Kw''
  \end{align}

  Since \( t \) is compatible with \( \lang \), for \( i\in[1..K] \), by definition, each pattern \( (x_i,y_i) \) contains a call-return pair \( (c_i,d_i) \) of compatible tagging \( t \),
  where \( c_i \) and \( d_i \) are unmatched in \( x_i \) and \( y_i \), respectively.
  Without loss of generality, let us assume these \( (c_i,d_i) \) are the same, denoted as \( (c,d) \).

  Now consider Equation~(\ref{app:eq:split1})-(\ref{app:eq:split3}). For \( i\in[1..K] \), each \( x_i \) contains a symbol \( c_i \), whose matched symbol, denoted as \( d_i' \), is after \( x_i \); similarly, each \( y_i \) contains a symbol \( d_i \), whose matched \( c_i' \) is before \( y_i \).
  Since \( s \) is \(t\)-well-matched,
  for \( i\in[1..K] \), each symbol \( d_i' \) can only locate in \( s \), and each symbol \( c_i' \) cannot locate in \( s \).

  Now, for \( j\in[1..K] \), we construct \( t_\oracle \)-well-matched strings \( s_j \) and their contexts \( (\widehat w_j,\widehat w_j') \), as follows:
  \begin{align*}
    s_j&=x^{K-j}s_L^{(1)} + s_L^{(2)}y^{K-j}\\
    \widehat w_j&=wq_0x^{j}\\
    \widehat w_j'&=y^{j}w''
  \end{align*}
  Now, we prove that each \(s_j\) represents a different equivalence class. First, it is obvious that \(s_j\) is \(t_\oracle\)-well-matched. Then, we show that \(\widehat w_i s_j \widehat w_i'\) is invalid for \(i \neq j\). 
  Let us expand \(\widehat w_i s_j \widehat w_i'\) as
  \[ \widehat w_i s_j \widehat w_i' = w+q_0x^{K+i-j}s_L^{(1)}+s_L^{(2)}y^{K+i-j} w''. \]
  Denote the number of unmatched \(d\) and unmatched \(c\) in string \(x\) as \(n_d(x)\) and \(n_c(x)\), respectively. 
  There are three cases.

  If \(n_d(x) > n_c(x)\), then, when \(i>j\), we have \(n_d(wq_0x^{K+i-j}) > n_d(wq_0x^{K})=0 \), thus \(\widehat w_i s_j \widehat w_i'\)'s prefix contains pending return symbol, therefore the string is invalid.

  If \(n_d(x) < n_c(x)\), then, when \(i<j\), we have
  \begin{equation}\label{app:eq:nc1}
    n_c(\widehat{w}_i)=n_c(wq_0x^i) < n_c(wq_0x^j) = n_c(\widehat{w}_j).
  \end{equation}
  On the other hand, since \(s\) is \(t\)-well-matched, we have 
  \begin{equation}\label{app:eq:nc2}
    n_c(qw_0x^j)=n_d(x^{K-j}s_L^{(1)}).
  \end{equation}
  Therefore, based on Equation~(\ref{app:eq:nc1})-(\ref{app:eq:nc2}), we have
  \[ n_d(\widehat{w}_ix^{K-j}s_L^{(1)}) = n_d(x^{K-j}s_L^{(1)}) - n_c(\widehat{w}_i) = n_c(wq_0x^j) - n_c(\widehat{w}_i) > 0. \]
  Similar to the first case, this shows \(\widehat w_i s_j \widehat w_i'\) is invalid.

  In the last case, \( n_d(x)=n_c(x) \). We show that this is impossible.
  We first assume \( n_d(x)=n_c(x)=1 \), then discuss the other case in the end.
  First, we rewrite each \( x \) in \(x^K\) as
  \[ x = w_1 d w_2 c w_3, \]
  where \( w_{i,i=1,2,3} \) does not contain unmatched \( c \) nor \( d \).
  Therefore, for two adjacent \( x \), we have
  \begin{align*}
    (w_1 d w_2 c w_3)(w_1 d w_2 c w_3)\\
    =w_1 d w_2 (c w_3 w_1 d w_2) c w_3
  \end{align*}
  In string \( c w_3 w_1 d w_2 \), notice that \( c \) is matched with \( d \), therefore, string \( w_3w_1 \) is \( t \)-well-matched. Since \((x,y)\) is a nesting pattern of the oracle tagging function,  for any \(k>0\), we can rewrite \(x^{k+1}\) as
  \[ x^{k+1}= w_1 d w_2 (c w_3 w_1 d w_2)^k c w_3. \] 
  With the corresponding \( y^k \), we have a new nesting pattern 
  \[ (c w_3 w_1 d w_2, y). \]
  Since tagging \( t \) is compatible and \(c w_3 w_1 d\) is \(t\)-well-matched, \( w_2 \) contains an unmatched call symbol of tagging \(t\), denoted as \( g \), whose matched return symbol of tagging \(t\), denoted as \( h \), is after \( w_2 \). 

  Now we have come back to a similar situation of comparing \(n_g(x)\) and \(n_h(x)\). With a similar analysis, we can show that 
  if the number of unmatched \( n_g(x) \neq n_h(x) \),
  then, we can construct \( K \) equivalence classes in \( \sim_{t_\oracle} \).
  Therefore, we again must have \( n_g(x) = n_h(x) \).

  Then, we can rewrite \( x \) by expanding \(w_2\) as
  \[ x = w_1 d w_2 c w_3 = w_1 d (w_1'hw_2'gw_3') c w_3. \]
  And, similarly, rewrite two adjacent \( x \) as
  \begin{align*}
    xx & = (w_1 d (w_1'hw_2'gw_3') c w_3)(w_1 d (w_1'hw_2'gw_3') c w_3)\\
       & = w_1 d w_1'hw_2'(gw_3' c w_3w_1 d w_1'hw_2')gw_3' c w_3
  \end{align*}
  Again, string \( gw_3' c w_3w_1 d w_1'hw_2' \) forms the first part of a nesting pattern, thus must contain another unmatched call symbol in tagging \( t \). 
  
  However, notice that \( gw_3' c w_3w_1 d w_1'h \) is \( t \)-well-matched. Therefore, the new unmatched symbol must appear in \( w_2' \), which is strictly shorter than \(w_2\). Subsequentially, we can find substrings \(w_2'\), \(w_2''\), \dots, \(w_2^{(i)}\), \dots with decreasing lengths that must contain unmatched call symbol in \(t\). However, \(w_2^{|w_2|}\) must be empty and contain no symbol, which makes \( t \) incompatible with \( \lang \), a contradiction.

  Above is the case where the numbers of unmatched \( c \) and \( d \) in \(x\) is \( 1 \). When the numbers are greater than \(1\) (recall that the two numbers should be the same), we rewrite \( x \) as
  \[ w_1 d w_2 c w_3, \]
  where \( d \) is the last unmatched \( d \), and \( c \) is the first unmatched \( c \). Expand \(x^2\) again, and we can observe that string \( w_3 w_1 \) is still \( t \)-well-matched. The rest of the reasoning is the same as above.

  In conclusion, for each \( K \) and \( t \)-well-matched string \( s \) with \( wsw'\in\lang \), the number of unmatched oracle call symbols equals or less than the number of equivalence classes of \( \sim_{t_\oracle} \). Given that \(\lang\) is a VPL under \(t_\oracle\), the numbers of unmatched oracle symbols in any such \(s\) have an upper bound.
\end{proof}

\begin{theorem}\label{app:thm:compatible_tagging}
  Given oracle language \( \lang \) and oracle tagging \(t_\oracle\),
  if language \( \{ t_\oracle(s) \mid s\in\lang \} \) is a VPL, then
  for any tagging \( t \) compatible with \( \lang \), language \( \hat\lang_t=\{ t(s) \mid s\in\lang \} \) is a VPL.
\end{theorem}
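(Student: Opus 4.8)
The plan is to use the Myhill--Nerode-style characterization recalled just above the theorem: a well-matched language is a VPL iff its context congruence has finitely many equivalence classes. Since \( t \) is compatible with \( \lang \), every string of \( \lang \) is \( t \)-well-matched, so \( \hat\lang_t=\{ t(s)\mid s\in\lang \} \) consists of well-matched strings and the characterization applies; thus it suffices to bound the index of \( \sim_t \). All \( t \)-well-matched strings that never occur as a factor of a string of \( \hat\lang_t \) collapse into one ``dead'' \( \sim_t \)-class, so I only need to bound the number of \( \sim_t \)-classes among strings of the form \( t(s_2) \) with \( s_1 s_2 s_3\in\lang \) for some \( s_1,s_3 \) and \( t(s_2) \) well-matched (since \( t \) acts character-by-character, every well-matched factor-with-context of a string of \( \hat\lang_t \) is of this form). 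Note the hypothesis of Lemma~\ref{app:lemma:finite_oracle_unmatched_symbols} is exactly ``\( s_2 \) is \( t \)-well-matched and \( w s_2 w'\in\lang \) for some \( w,w' \)'', so the lemma applies to every such \( s_2 \).

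Next I would attach to each such factor \( s_2 \) a finite \emph{profile}. By Lemma~\ref{app:lemma:finite_oracle_unmatched_symbols}, \( t_\oracle(s_2) \) has at most \( N_t \) unmatched oracle call/return symbols, so it admits its canonical staircase factorization \( t_\oracle(s_2)=w_0\,\retsym{b_1}\,w_1\cdots\retsym{b_p}\,w_p\,\callsym{a_1}\,w_1'\cdots\callsym{a_q}\,w_q' \), where each \( w_i \) and \( w_j' \) is well-matched over \( \hat\Sigma_{t_\oracle} \), the \( \retsym{b_i} \) are the unmatched returns, the \( \callsym{a_j} \) the unmatched calls, and \( p+q\le N_t \). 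The profile of \( s_2 \) is the tuple consisting of \( p \), \( q \), the lists \( (\retsym{b_1},\dots,\retsym{b_p}) \) and \( (\callsym{a_1},\dots,\callsym{a_q}) \), and the \( \sim_{t_\oracle} \)-classes of \( w_0,\dots,w_p,w_1',\dots,w_q' \). Because \( \{ t_\oracle(s)\mid s\in\lang \} \) is a VPL, \( \sim_{t_\oracle} \) has finite index; since \( \Sigma \) is finite and \( p,q \) are bounded by \( N_t \), only finitely many profiles occur.

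The crux is the claim: if two such factors \( s_2,s_2' \) have the same profile, then \( t(s_2)\sim_t t(s_2') \); this gives that \( \sim_t \) has at most one more class than the number of profiles, finishing the proof. To prove the claim it suffices, by symmetry, to show \( s_1 s_2 s_3\in\lang \implies s_1 s_2' s_3\in\lang \) for all \( s_1,s_3 \) (recall membership in \( \hat\lang_t \) of \( t(s_1)\,\widehat{\,\cdot\,}\,t(s_3) \) is just membership of the \( t_\oracle \)-image in \( \{ t_\oracle(s)\mid s\in\lang \} \), and every context over \( \hat\Sigma_t \) has the form \( t(s_1) \), \( t(s_3) \)). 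In the \( t_\oracle \)-well-matched string \( t_\oracle(s_1)\,t_\oracle(s_2)\,t_\oracle(s_3) \) the unmatched returns \( \retsym{b_i} \) of \( t_\oracle(s_2) \) pair with symbols inside \( t_\oracle(s_1) \), the unmatched calls \( \callsym{a_j} \) with symbols inside \( t_\oracle(s_3) \), and each block \( w_i \), \( w_j' \) occurs as a contiguous well-matched factor. I would then replace the blocks one at a time by the corresponding blocks of \( t_\oracle(s_2') \), which lie in the same \( \sim_{t_\oracle} \)-class: each replacement is one application of the definition of \( \sim_{t_\oracle} \) with the surrounding text as left/right context, it preserves membership, and since the substituted block is again well-matched it leaves the remaining blocks as contiguous well-matched factors and the remaining unmatched symbols paired as before. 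After at most \( p+q+1 \) steps the string has become \( t_\oracle(s_1)\,t_\oracle(s_2')\,t_\oracle(s_3) \), which is therefore in \( \{ t_\oracle(s)\mid s\in\lang \} \), i.e.\ \( s_1 s_2' s_3\in\lang \).

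The main obstacle I expect is precisely this block-by-block substitution: one must verify that the canonical staircase decomposition of \( t_\oracle(s_2) \) is compatible with the surrounding string (its unmatched returns really do pair with \( t_\oracle(s_1) \) and its unmatched calls with \( t_\oracle(s_3) \), which follows from \( t_\oracle(s_1 s_2 s_3) \) being well-matched), and that substituting a \( \sim_{t_\oracle} \)-equivalent well-matched block is an invariant of the whole staircase structure---it disturbs neither the pairing of the other unmatched symbols nor the well-matchedness of the other blocks, so the iteration can be carried through. Once this invariant is established the rest---reducing VPL-ness of \( \hat\lang_t \) to finite index of \( \sim_t \), and counting profiles---is routine bookkeeping, with Lemma~\ref{app:lemma:finite_oracle_unmatched_symbols} supplying the one genuinely nontrivial finiteness ingredient.
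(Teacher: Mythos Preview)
Your proposal is correct and follows essentially the same route as the paper: reduce to finite index of \(\sim_t\), invoke Lemma~\ref{app:lemma:finite_oracle_unmatched_symbols} to bound the number of unmatched oracle symbols in any \(t\)-well-matched factor, decompose that factor into well-matched \(t_\oracle\)-blocks separated by those unmatched symbols, and use \(\sim_{t_\oracle}\)-equivalence of the blocks to collapse classes. The only cosmetic difference is that the paper replaces each block by a bounded-length \(\sim_{t_\oracle}\)-representative and concludes that every \(\sim_t\)-class has a representative of length at most \(N_t+(N_t{+}1)l\), whereas you count ``profiles'' directly; your staircase factorization with both unmatched returns and calls is in fact more careful than the paper's ``without loss of generality only unmatched calls'' shortcut.
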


\begin{proof}
  By Lemma~\ref{app:lemma:finite_oracle_unmatched_symbols},
  there exists a positive number \( N_t \), such that
  any \( t \)-well-matched string contains at most \( N_t \) unmatched oracle symbols.

  For a given \( t \)-well-matched string \( p \), without loss of generality, let us assume there are only \( K<N_t \) number of unmatched oracle call symbols in \( t \). We prove the theorem by showing that \(t(p)\) is equivalent to a string within a fixed length, denoted as \(N\). 
  
  First, we can partition \(t_\oracle(p)\) into \(t_\oracle(p)=\hat p_1\ac_1\hat p_2\ac_2\dots \hat p_{K}\ac_K\hat p_{K+1}\), where each \(\ac_i\) is an unmatched call symbol and each \(\hat p_i\) is well-matched under \(t_\oracle\). Let us denote the length of the longest representative under \(t_\oracle\) as \(l\). Denote \({[p_i]}_{\oracle}\) as \(p_i\)'s representative in the oracle congruence relation, we have 
  \[ \forall w, w',\ w p_i w'\in \lang \iff w {[p_i]}_{\oracle} w'\in \lang. \]
  Now, we construct a shorter representative for \( t(p) \), by replacing \( p_i \) with \( [p_i]_\oracle \). Formally, for all \(w_1\) and \(w_2\), 
  \begin{align*}
  w_1 p w_2& =w_1(p_1{a}_1p_2{a}_2\dots p_{K}{a}_Kp_{K+1})w_2 \\
  & =(w_1p_1{a}_1\dots {a}_{i-1}) p_i ({a}_i\dots p_{K}{a}_Kp_{K+1}w_2)\in \lang  \\
  & \iff (w_1p_1{a}_1\dots {a}_{i-1}) {[p_i]}_{\oracle} ({a}_i\dots p_{K}{a}_Kp_{K+1}w_2)\in \lang.
  \end{align*}
  Consequently, the length of any representative under tagging \(t\) is limited by \(K+(K+1)l\leq N_t+(N_t+1)l\), accounting for \(K\) unmatched characters and \(K+1\) substrings each with a length not exceeding \(l\). 

  In conslusion, language \( \hat\lang_t \) has a finite number of equivalence classes, therefore is a VPL.
\end{proof}

\begin{theorem}[Finite and Sufficient Seed Strings]
  \label{app:thm:construct_seed_strings}
  For any given oracle language \(\lang\), there exists a finite set of seed strings, denoted as \( S_0 \), such that any tagging that is compatible with \( S_0 \) is also compatible with \(\lang\).
\end{theorem}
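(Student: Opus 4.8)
The plan is to read $S_0$ off from a grammar for $\lang$, turning the recursive structure of its matching rules into concrete witness strings, and then to show that compatibility with these finitely many witnesses already forces compatibility with every string of $\lang$. Fix a VPG $G=(\Sigma,V,P,L_0)$ for $\lang$ (e.g.\ the $G$ converted from the minimal $k$-SEVPA as in the proof of Theorem~\ref{app:lemma:backtrack_result}, which keeps $|V|$ bounded, though only finiteness of $G$ is needed). I assemble $S_0$ from two finite families. \emph{Well-matchedness witnesses:} there are finitely many candidate taggings $t\subseteq\Sigma\times\Sigma$, and for each a pumping argument on $G$ shows that if some string of $\lang$ fails to be $t$-well-matched then one of length at most a bound $N_1$ (depending only on $G$) does; so I add every string of $\lang$ of length $\le N_1$ to $S_0$. \emph{Nesting witnesses:} for each matching rule $L\to\ab{A}B$ such that $A\Rightarrow^*\sigma L\tau$ for some $\sigma,\tau$, fix a shortest derivation witnessing $A\Rightarrow^*\sigma L\tau$ whose parse path visits no nonterminal twice; together with the rule this realizes a recursion $L\Rightarrow^* x L y$ with $x=s_1'\ac s_1''$ and $y=s_2''\bc s_2'$, and, completing it with shortest derivations $L_0\Rightarrow^* u L v$ and $L\Rightarrow^* z$, gives a witness $uxzyv\in\lang$. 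Following the $agagcdhbhbcd$ example I also add the depth-two witnesses obtained by applying the rule once more before descending (the strings witnessing the $A\to L\to A$ transition). Since $P$ is finite and a repeat-free recursion path has bounded length, $S_0$ is a finite subset of $\lang$.

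Let $t$ be compatible with $S_0$. That every string of $\lang$ is $t$-well-matched is immediate from the first family. For coverage, take any $s\in\lang$ and any nesting pattern $s=uxzyv$. By Lemma~\ref{app:lemma:pumping} and Lemma~\ref{app:lemma:verify_nesting_pattern_for_string} the pattern is realized by a recursive derivation $L\Rightarrow^* x L y$ threading some matching rule $L\to\ab{A}B$, with an oracle call symbol unmatched in $t_\oracle(x)$ matched to an oracle return symbol unmatched in $t_\oracle(y)$. I then reduce this (possibly long) recursion to the recorded minimal ones: repeatedly excise a repeated nonterminal from the recursion path, and replace each context piece $u,z,v$ by an oracle-congruence representative — the same substitution device used in the proof of Theorem~\ref{app:thm:compatible_tagging}, valid because $\lang$ has finitely many oracle-congruence classes — obtaining bounded-length $x_0,y_0$ that are the $x,y$ of a witness already in $S_0$. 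As $t$ covers $(x_0,y_0)$, some $(\ac,\bc)\in t$ has $\ac$ unmatched in $t(x_0)$ and $\bc$ unmatched in $t(y_0)$. Finally I transfer coverage back to $(x,y)$ by monotonicity: if $t(w)$ reduces (after cancelling matched pairs) to $\bc^{p}\ac^{q}$ with $q\ge1$, then so do $t(ww')$, $t(w'w)$ and $t(w^n)$ for every $w'$ and every $n\ge1$, and dually for return symbols; since $x$ (resp.\ $y$) is obtained from $x_0$ (resp.\ $y_0$) by inserting or iterating derivable substrings, the unmatched $\ac$ and $\bc$ survive, so $t$ covers $(x,y)$ and hence is compatible with $s$.

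The step I expect to be the main obstacle is the reduction in the second paragraph: a nesting pattern of an arbitrary $s$ need not be a plain concatenation of recorded minimal patterns, since the long recursion can \emph{interleave} a repeat-free core with nested sub-recursions, so that $x=x_0^{(1)}x_1x_0^{(2)}x_2\cdots$ rather than $x_0x_1\cdots$. Turning this into a rigorous argument calls for a structural induction on the recursion path in the parse tree, peeling off one repeated nonterminal at a time while maintaining the invariant that the residual pattern is covered whenever the original is; combined with the monotonicity facts this shows a hypothetical coverage failure in $\lang$ propagates down to a bounded witness and contradicts compatibility with $S_0$. Checking that the padding introduces no spurious cancellation — i.e.\ that $t$ cannot cover a core pattern while failing the full one — is the real bookkeeping; the well-matchedness pumping bound and the finiteness of the two families are routine by comparison.
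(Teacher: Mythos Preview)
Your approach is considerably more elaborate than the paper's, and you have missed a much simpler argument. The paper's proof is essentially a one-line finiteness observation: since $\Sigma$ is finite, there are only finitely many candidate taggings $T\subseteq\Sigma\times\Sigma$. For each tagging $t$ that is \emph{not} compatible with $\lang$, pick a single witness $s_t\in\lang$ (either a string that fails to be $t$-well-matched, or a string carrying a nesting pattern not covered by $t$), and let $S_0$ collect all such witnesses together with one seed exhibiting each oracle call-return pair. Any tagging compatible with $S_0$ is then compatible with every $s_t$, so it cannot be among the incompatible taggings, hence it is compatible with $\lang$. No structural reduction of nesting patterns is needed at all.

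Your constructive route, by contrast, has a genuine gap in the transfer step. The monotonicity claim ``if $t(w)$ reduces to $\bc^{p}\ac^{q}$ with $q\ge1$, then so does $t(ww')$'' is false: take $w=a$, $w'=b$, $t=\{(a,b)\}$; then $t(w)=\ac$ has $q=1$, but $t(ww')=\ac\bc$ reduces to $\epsilon$. More structurally, the padding you insert when passing from $x_0$ to $x$ consists of substrings that are well-matched under the \emph{oracle} tagging $t_\oracle$, but nothing guarantees they are well-matched under the hypothesis tagging $t$; if that padding carries surplus $\bc$'s under $t$, they can cancel precisely the unmatched $\ac$ you are trying to preserve. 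So you have not excluded a tagging that covers every minimal recursion witness in your $S_0$ yet fails on some longer string of $\lang$. Closing this would require control over how $t$ behaves on \emph{all} derivable padding, which is essentially the statement you are trying to prove. The finiteness-of-taggings argument sidesteps this entirely; if you want a constructive $S_0$, you would need a different invariant than the stated monotonicity.
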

  
\begin{proof}
  The strategy is to first construct a set of seed strings that provide information of the oracle call and return symbols, then extend the set with more seed strings to exclude the incompatible tagging functions.

  Initialize \( S_0 \) as an empty set. For each oracle call-return symbol pair \((\ac,\bc)\), pick a seed string \(s\) that contains a nesting pattern \((x,y)\) where \(\ac\) is unmatched in \(x\), and \(\bc\) is unmatched in \(y\). Incorporate \(s\) into \(S_0\). Note that if there is no such nesting pattern for \((\ac,\bc)\), then it is easy to show that  \((\ac,\bc)\) are ``redundant'' in that they can be treated as plain symbols, which does not change the language with tagging removed.

  Then, for each tagging \( t \) that can be found in \( S_0 \), but is incompatible with a nesting pattern of a certain string \( s \) in \( \lang \), include string \( s \) in \( S_0 \). Given that the set of such tagging functions is finite, \(S_0\) remains a finite set.

  In conclusion, given such \(S_0\), Algorithm~\ref{alg:backtracking_tagging} can at least find the oracle tagging (with redundant tagging removed), by iteratively selecting the oracle call-return pair for each nesting pattern.
\end{proof}

\section{Proofs of Theorems in Section~\ref{sec:learn_token_tag}}
\label{app:proofs_of_learn_program_input}

This section is organized as follows. Lemma~\ref{app:lemma:token_in_pattern} shows that given a nesting pattern \( uxzyv \),
for sufficiently large \( k \), string \( ux^kzy^kv \) contains an unmatched oracle call token in \( x^k \), and contains an unmatched oracle return token in \( y^k \).
Since such \( k \) varies among strings, Lemma~\ref{app:lemma:token_in_pattern2} bounds \( k \) with \( k\leq 2 \) with the help of Exclusivity.
Moving on, Lemma~\ref{app:lemma:oracle_token_vpl} shows that for oracle language \( \lang \) and oracle tokenizer \( \tk_\oracle \), 
\( \tilde{\lang}_{\tk_\oracle} \) is a VPL over \( \tilde{\Sigma}_\oracle \).
Lemma~\ref{app:lemma:cr_covered_by_nesting_pattern} shows that token-based matching rules lead to nesting patterns. Based on Lemma~\ref{app:lemma:cr_covered_by_nesting_pattern}, Lemma~\ref{app:lemma:finite_oracle_unmatched_tokens} bounds the number of unmatched tokens between \(\tk(s)\) and \(\tk_\oracle(s)\). Following Lemma~\ref{app:lemma:oracle_token_vpl} and~\ref{app:lemma:finite_oracle_unmatched_tokens}, Theorem~\ref{app:thm:token_vpl} proves that a compatible tokenizer converts the oracle language into a VPL. We conclude this section with Lemma~\ref{app:lemma:finite_seed_strings_for_tokenization}, which shows that there exists a finite set of seed strings that allows V-Star to find a compatible tokenizer.

\begin{lemma}[Matched Tokens in Nesting Patterns, I]
  \label{app:lemma:token_in_pattern}
  Given oracle language \( \lang \), oracle tokenizer \( \tk \), 
  and a VPG \( G=(\Sigma,V,P,L_0) \) for the oracle language, 
  for each nesting pattern \( s=uxzyv \), 
  there exists an oracle matching rule in \( P \), 
  denoted as \( L \to h_a{A}h_bB \), 
  and exists a positive number \( k \), 
  so that tokens \(h_a\) and \(h_b\) are included in \(x^k\) and \( y^k \), respectively.
\end{lemma}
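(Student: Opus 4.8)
The plan is to mirror the structure of the VPL pumping lemma (Lemma~\ref{app:lemma:pumping}) and of Lemma~\ref{app:lemma:verify_nesting_pattern_for_string}, lifted to the token level. Write $G=(\Sigma,V,P,L_0)$ for the given token-level VPG, whose terminals are the (oracle) tokens and whose language is $\{\tk(s)\mid s\in\lang\}$, and set $N=|V|$. Fix the nesting pattern $s=uxzyv$. For every $k\ge 1$ we have $ux^kzy^kv\in\lang$, so $w_k:=\tk(ux^kzy^kv)$ is generated by $G$; let $\pi_k$ be a parse tree for $w_k$. Using Tokenization Consistency I would first identify, inside $w_k$, the sub-block of tokens whose underlying characters lie in the $x^k$-region (positions $|u|+1,\dots,|u|+k|x|$), and likewise for the $y^k$-region; since $x$ and $y$ are non-empty, for large $k$ these blocks are long. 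This is the one place where one must reason carefully about how the boundaries of oracle tokens interact with the repetition $x^k$.

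Next I would run the recursion argument on a single $\pi_{k^\ast}$ for $k^\ast$ chosen large enough (larger than the length of the longest token string whose $G$-parse trees are recursion-free, a quantity bounded in terms of $N$ exactly as in the proof of Lemma~\ref{app:lemma:pumping}). Because the token block in the $x^{k^\ast}$-region is long, $\pi_{k^\ast}$ contains a recursive nonterminal with a derivation of the form $L\to^*\mu\,L\,\nu$ whose left yield $\mu$ overlaps the $x$-side. The crucial point is that the $y^{k^\ast}$-side must scale together with it: if the recursions touching the $x$-side were independent of those touching the $y$-side, pumping them independently would produce $ux^{i}zy^{j}v\in\lang$ with $i\ne j$, contradicting the third clause in the definition of a nesting pattern. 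Hence some recursion $L\to^*\mu L\nu$ has $\mu$ overlapping the $x$-region, $\nu$ overlapping the $y$-region, and $\nu$ non-empty. Since $\nu\ne\epsilon$, the VPG rule shapes force a matching rule on the path: if only linear rules and $\epsilon$-rules were used, the recursive occurrence of $L$ would be the rightmost symbol of $\mu L\nu$, so $\nu$ would be empty. Exactly as in case~2 of the proof of Lemma~\ref{app:lemma:pumping}, the derivation then factors as $L\to^* x_1'\,h_a\,A\,h_b\,B\,x_2'\to^* x_1'\,h_a\,x_1''\,L\,x_2''\,h_b\,B\,x_2'$, exhibiting a matching rule $L'\to h_aAh_bB$ of $P$ (this $L'$ is the one named $L$ in the statement), with $h_a$ appearing in $\mu=x_1'h_ax_1''$ on the $x$-side and $h_b$ appearing in $\nu$ on the $y$-side, and with the $h_b$ matching this $h_a$ lying across the whole repeated region, so $h_a$ is unmatched within the $x^{k^\ast}$-region and $h_b$ unmatched within the $y^{k^\ast}$-region.

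To pin down the claimed $k$: since $ux^kzy^kv\in\lang$ for \emph{all} $k\ge 1$ and the recursion $L\to^*\mu L\nu$ can be unfolded arbitrarily often, unfolding it sufficiently many times (a number bounded in terms of $N$, $|x|$, and $|y|$) places the underlying character substring of the emitted $h_a$ entirely inside the $x^k$-region and that of $h_b$ entirely inside the $y^k$-region; take $k$ to be the resulting value (no sharp bound is needed here---Lemma~\ref{app:lemma:token_in_pattern2} later uses Exclusivity to push this down to $k\le 2$). I expect the main obstacle to be the first step: the oracle tokenizer is not compositional, so token boundaries in $\tk(ux^kzy^kv)$ need not align with the $x$-copy boundaries and a single token could straddle several copies; controlling this needs Tokenization Consistency together with the observation that, for large $k$, any tokenization of the interior of $x^k$ (resp.\ $y^k$) is eventually periodic with period a bounded power of $x$ (resp.\ $y$), which is what guarantees that a whole copy of $h_a$ eventually lands interior to $x^k$. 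A secondary, more routine subtlety is making the pigeonhole that couples the $x$- and $y$-recursions precise; this follows the template of Lemma~\ref{app:lemma:verify_nesting_pattern_for_string} but must track both regions simultaneously.
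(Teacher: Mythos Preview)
Your route is genuinely different from the paper's, and the gap is in the coupling step. You claim that if every recursion in $\pi_{k^\ast}$ that touches the $x$-region fails to touch the $y$-region, then ``pumping them independently would produce $ux^{i}zy^{j}v\in\lang$ with $i\ne j$.'' But pumping a token-level recursion $L\to^*\mu L\nu$ whose yield lies inside the $x^{k^\ast}$-region produces a valid \emph{character} string in $\lang$, not one of the form $ux^{i}zy^{k^\ast}v$: the character yield of $\mu$ is just some substring of $x^{k^\ast}$ (possibly straddling $x$-copy boundaries), and repeating it need not give a power of $x$. So you get strings in $\lang$ that the nesting-pattern clause says nothing about, and no contradiction follows. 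The eventual-periodicity remark you make at the end is the right ingredient to repair this (a recursion aligned with one tokenization period would correspond to $x^d$ at the character level), but you never connect it back to this step, and making that connection is the actual work.

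The paper avoids this coupling argument entirely. Its proof has two different key moves. First, it bounds how many copies of $x$ a single oracle token can cover: if some token $h$ could cover unboundedly many, it would admit strings $vx^{i}x_1$ and $vx^{j}x_1$ with $i\ne j$; swapping them inside the tokenization of $ux^kzy^kv$ yields a valid string $ux^{k+(j-i)}zy^kv\in\lang$, directly contradicting the nesting-pattern clause. This lets the paper pad $u$, $z$, $v$ by $x^N$, $y^N$ so that every token overlapping the middle $x^k$ is wholly interior to it. Second, rather than chasing a single cross-region recursion, the paper argues by counting: if for all large $k$ the interior tokens of $x^k$ were plain or well-matched with bounded nesting depth, the token language there would be regular, contradicting the nesting pattern; hence unmatched call/return tokens appear, and since unmatched return tokens in $x^k$ are bounded by what $u$ can supply, unmatched call tokens grow without bound and must eventually match return tokens in $y^k$. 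Your sketch has neither the explicit token-length bound nor this counting step; both are what replace your coupling argument.
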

\begin{proof}
  For each \( k>0 \), consider the tokenization of \( ux^kzy^kv \), denoted as \(l_k=\tk(ux^kzy^kv)\). 

  Let \(l_k^x\) be the token sequence in \(l_k\), i.e., each token in \(l_k^x\) overlaps \(x^k\).
  
  We say a token \emph{covers} a string, if the string is a substring of the string captured by the token.
  For example, if token \(h\) in \(l_k\) captures string \(ux\), then \(h\) covers \(x^1\).
  
  We show that there exists an upper bound, denoted as \(N_x\), such that for each \(k\) and each token \(h\in l_k^x\), if \(h\) covers \(x^i\), then \(i\leq N_x\). Otherwise, notice that there are only a finite number of tokens, therefore, there exists a token \(h\) such that \(h\) starts at a character in \(ux^k\), and can end at a character in either infinite locations in \(x^k\) for \(k>0\), or infinite locations in \(y^k\) for \(k>0\) (could be both). 
  In the first case, one can see that there exist two strings \(s_1\), \(s_2\) of \(h\), such that 
  \begin{align*}
    s_1&=vx^ix_1 \\
    s_2&=vx^jx_1
  \end{align*}
  where \(v\) is a suffix of either \(u\) or \(x\), \(i\) and \(j\) are two numbers such that \(i \neq j\), and \(x_1\) is a prefix of \(x\). Apparantly, \(s_1\) and \(s_2\) can be exchanged, which violates that \((x,y)\) is a nesting pattern. The second case could be similarly invalidated.

  Similarly, we can prove another upper bound \(N_y\) for \(y\). Let \(N\) be \(\max(N_x,N_y)\).

  With upper bound \(N\), we can consider a new nesting pattern \(u'xz'yv'\), where \(u'=ux^N\), \(z'=x^Nzy^N\), and \(v'=y^Nv\):
  \[ (ux^N)x^k(x^Nzy^N)y^k(y^Nv). \]
  By doing this, we exclude the first and last tokens in \(l_k\) that only partially overlap with \(x^k\). From now on, we assume \(l_k\) is contained in \(x^k\) for each \(k\).
  
  For each positive number \(K\),
  define language \( \lang_K \) as \(\{ w \in l_k \mid k>K \} \), and token-based language \( \lang_K' \) as \(\{ l_k \mid k>K \} \). 
  Apparantly, \( \lang_K \) is not a regular language.
  Based on this, we can show that there exists token list \( l \in \lang_K' \) that contains unmatched call or return tokens. Otherwise, each sequence \(l \in \lang_K'\) can only contain plain tokens or well-matched tokens.
  We only need to show that the depth of the nested well-matched tokens is bounded for all \(k\), then the language \( \lang_K \) is a regular language, a contradiction.

  To show that the depth is bounded, notice that otherwise, we would have a nesting pattern in \(x^k\) for certain \(k\). Denote the pattern as \((g,h)\). Nesting pattern \((g,h)\), by definition, can be replaced by \((g^{i},h^{i})\) for any \(i>0\) in \(x^k\). This replacement extends \(x^k\) to \(x^{k'}\) for certain \(k' > k\).
  However, since \((x,y)\) is a nesting pattern, \(ux^{k'}zy^{k}v\) should be invalid, a contradiction.
  
  Therefore, we have shown that for each \(K\), there exists \(k\) and \(l_k\), such that \(k>K\) and \(l_k\) contains a non-plain token. In other words, when \( k \) goes to infinity, \( x^k \) contains an infinite number of unmatched call or return tokens. 
  However, the number of unmatched return tokens is bounded in \( x^k \) for \(k>0\), otherwise, since \( u \) is fixed, not enough call tokens can match those return tokens.
  Therefore, the number of unmatched call tokens is unbounded in \( x^k \) for \( k>0 \). This means for a sufficiently large \(k\), a call token in \(x^k\) must be matched with a return token in \(y^k\). We thus have proven the lemma.
\end{proof}

\begin{lemma}
  [Matched Tokens in Nesting Patterns, II]
  \label{app:lemma:token_in_pattern2}
  Based on Lemma~\ref{app:lemma:token_in_pattern},
  assume Exclusivity.
  For any nesting pattern \( uxzyv \), 
  there exists a matching rule, denoted as \( L \to \ta{A}\tb B \), such that \( s_a\in \ta \) is a substring of \( x^2 \), and \( s_b\in \tb \) is a substring of \( y^2 \).
\end{lemma}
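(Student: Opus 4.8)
The plan is to reuse, unchanged, the matching rule that Lemma~\ref{app:lemma:token_in_pattern} already supplies, and to only sharpen \emph{where} its call and return tokens sit: from ``inside $x^k$ and $y^k$ for some $k$'' down to ``inside $x^2$ and $y^2$.'' Lemma~\ref{app:lemma:token_in_pattern} hands us a matching rule $L\to\ta A\tb B$, a positive integer $k$, and witness strings $s_a\in\ta$, $s_b\in\tb$ such that $s_a$ is a substring of $x^k$ and $s_b$ is a substring of $y^k$ (the tokens being ``included in'' $x^k$ resp.\ $y^k$ means precisely that these captured strings lie entirely inside those blocks). I would keep this rule and these two witnesses; since a substring of $x^k$ is again a substring of $x^{k'}$ for all $k'\ge k$, the string $s_a$ is a substring of the infinite periodic word $x^\omega=xxx\cdots$, and likewise $s_b$ of $y^\omega$. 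Everything that remains is a length bound.

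The key step will be to show $|s_a|\le|x|+1$ (and symmetrically $|s_b|\le|y|+1$) by invoking Exclusivity. Since $s_a$ is a substring of $x^\omega$ it has period $|x|$: $s_a[i]=s_a[i+|x|]$ whenever $1\le i\le|s_a|-|x|$. Assume for contradiction that $|s_a|>|x|+1$. Taking $i=1$ (legal because $|s_a|>|x|$), the one-character string $s_a[1]$ reoccurs at position $1+|x|$ of $s_a$, and since $|s_a|>|x|+1$ this second occurrence is strictly interior, i.e.\ at a position that is neither the first nor the last of $s_a$. Thus $s_a[1]$, a proper prefix of the string $s_a\in\ta$, is an infix of a string of $\ta$; Exclusivity rules this out unless $\ta$ is a single-character token, and in that case $|s_a|=1\le|x|+1$ anyway, contradicting the assumption. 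Hence $|s_a|\le|x|+1$, and the identical argument with $y$ in place of $x$ gives $|s_b|\le|y|+1$.

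Finally I would convert these length bounds into the desired substring statements. An occurrence of a string of length at most $|x|+1$ inside $x^k$ can meet at most two consecutive copies of $x$: meeting three consecutive copies would force the occurrence to contain an entire middle copy plus at least one further position on each side, needing length at least $|x|+2$. Two consecutive copies of $x$ inside $x^k$ concatenate to exactly $x^2$, so $s_a$ is a substring of $x^2$; the same argument shows $s_b$ is a substring of $y^2$. This establishes the lemma, with the matching rule $L\to\ta A\tb B$ and witnesses $s_a,s_b$ inherited from Lemma~\ref{app:lemma:token_in_pattern}.

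The delicate point will be the Exclusivity step: I need to pin down the precise reading of ``a prefix or suffix of a token cannot serve as an infix of it'' so that the length-one prefix $s_a[1]$ genuinely counts as a prefix of the token $\ta$ and its interior re-occurrence genuinely counts as an infix of $\ta$, and I must invoke the single-character-token exemption at exactly the right place (this is the degenerate regime, e.g.\ $|x|=1$). The span-counting in the last paragraph is routine but should be stated carefully so that the ``at most two consecutive copies'' claim is airtight even when the occurrence of $s_a$ starts in the middle of a copy of $x$.
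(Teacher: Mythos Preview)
Your proof is correct and rests on the same idea as the paper's---Exclusivity prevents the captured string $s_a$ from stretching across too many copies of $x$---but your execution differs. The paper writes $s_a=x_2x^ix_1'$ (where $x_2$ is the tail of the copy where $s_a$ begins and $x_1'$ the head of the copy where it ends), then does a two-case analysis on whether $x_2=\epsilon$, in each case arguing that the relevant piece ($x_2$, or $x_1'$/$x_2'$) would otherwise be both a prefix and an infix of $s_a$; this bounds $i$ and lands $s_a$ in $x^2$. You instead extract the period-$|x|$ structure directly and apply Exclusivity to the single-character prefix $s_a[1]$, obtaining the sharper bound $|s_a|\le|x|+1$, after which the ``at most two copies'' conclusion is immediate. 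Your route avoids the case split and yields a tighter intermediate bound; the paper's route makes the offending prefix/infix more explicit as a nontrivial chunk of $x$. Both rely on the same (string-level) reading of Exclusivity, which the paper's own proof also adopts, so your ``delicate point'' is already resolved by consistency with the paper. One small thing to tidy in your write-up: make explicit that $s_a[1]$ is a \emph{proper} prefix because the single-character exemption forces $|s_a|\ge 2$ before Exclusivity is invoked; you gesture at this but should state it cleanly.
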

\begin{proof}
  From Lemma~\ref{app:lemma:token_in_pattern}, we know that a token \(h_a\) is in \(x^k\) for certain \(k\).
  We can therefore let \( x= x_1x_2=x_1'x_2' \), so that \(s_a=x_2x^ix_1'=x_2(x_1x_2)^ix_1'\) for certain \(i\). Consider \(x_2\); there are two cases.
  
  If \(x_2\) is empty, then \(s_a=x_1^ix_1' =x^ix_1'=(x_1'x_2')^ix_1'\). 
  Notice that we must have \(i\leq 1\), otherwise either \(x_1'\) or \(x_2'\) becomes both a prefix and an infix of \(s_a\), which violates Exclusivity. Therefore, \(s_a = x_1'\) or \(x_1'x_2'x_1'\). Apparantly, \(s_1\) is a substring of \(x^2\).

  If \(x_2\) is nonempty, \(i\) must be zero, otherwise \(x_2\) is both a prefix and an infix, a violation of Exclusivity. In this case, we have \( s_a=x_2x_1' \), also a substring of \(x^2\).

  In conclusion, we know that \( s_a \) is a substring of \( x^2 \).
  The reasoning is quite similar for the return token, and we omit it for brevity.
\end{proof}

\begin{lemma}\label{app:lemma:oracle_token_vpl}
  For oracle language \( \lang \) and oracle tokenizer \( \tk_\oracle \), 
  \( \tilde{\lang}_{\tk_\oracle} \) is a VPL over \( \tilde{\Sigma}_\oracle \).
\end{lemma}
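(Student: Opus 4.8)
The plan is to realise \(\tilde{\lang}_{\tk_\oracle}\) as the image of the token-level VPL \(\lang_H:=\{\tk_\oracle(s)\mid s\in\lang\}\) under a simple terminal substitution, and then argue that this substitution preserves the visibly pushdown property. For each token \(h\in H\) fix a DFA \(D_h\) for its regular string set over \(\Sigma\), and define a substitution \(\sigma\) on \(H\) by \(\sigma(h_c)=h_c\) (all symbols plain) for a plain token \(h_c\), \(\sigma(h_{a_j})=\{\arta_j\}\cdot h_{a_j}\) for the \(j\)-th call token, and \(\sigma(h_{b_j})=h_{b_j}\cdot\{\artb_j\}\) for the \(j\)-th return token --- exactly the artificial markers that \(\convert_{\tk_\oracle}\) inserts. (Unique Pairing is what makes the index \(j\), and hence \(\tilde{\Sigma}_\oracle\) and the markers, well defined.)

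The first step is to check \(\tilde{\lang}_{\tk_\oracle}=\sigma(\lang_H)\). The inclusion ``\(\subseteq\)'' is immediate from the definition of \(\convert_{\tk_\oracle}\): if \(s\in\lang\) with \(\tk_\oracle(s)=t_1\cdots t_k\) and \(s=s_1\cdots s_k\), \(s_i\in t_i\), then \(\convert_{\tk_\oracle}(s)\) replaces each block \(s_i\) by \(s_i\), \(\arta_j s_i\), or \(s_i\artb_j\) according to the kind of \(t_i\), i.e.\ by a representative in \(\sigma(t_i)\). For ``\(\supseteq\)'', take \(w\in\sigma(\lang_H)\), write \(w=w_1\cdots w_k\) with each \(w_i\in\sigma(t_i)\) for some \(t_1\cdots t_k\in\lang_H\), strip the markers to obtain \(u_i\in t_i\), and set \(s=u_1\cdots u_k\); Tokenization Consistency gives \(\tk_\oracle(s)=t_1\cdots t_k\in\lang_H\), and since --- in the tokenize-then-parse modelling of the oracle --- a string is accepted exactly when its tokenization lies in the token-level language, \(\lang\) is closed under retokenisation and \(s\in\lang\); then \(\convert_{\tk_\oracle}(s)=w\), so \(w\in\tilde{\lang}_{\tk_\oracle}\). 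This identity --- really its ``\(\supseteq\)'' half, the one place where the retokenisation-closure of \(\lang\) is invoked --- is the only part of the argument beyond routine bookkeeping.

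It remains to see that \(\sigma(\lang_H)\) is a VPL over \(\tilde{\Sigma}_\oracle\). Take a VPA \(\automaton_H=(Q,q_0,\Gamma,\delta,Q_F)\) for \(\lang_H\) and build a VPA \(\tilde{\automaton}\) whose state is a pair of a state of \(\automaton_H\) and a ``mode'', the mode being either \emph{idle}, or ``reading token \(h\) towards target \(q'\)'' together with the current \(D_h\)-state. From an idle state \(q\), \(\tilde{\automaton}\) branches nondeterministically: for each \(\deltapln(q,h_c)=q'\) it may enter the reading mode for \(h_c\); for each \(\deltacall(q,h_{a_j})=(q',\gamma)\) it reads \(\arta_j\), performs the push of \(\gamma\), and enters the reading mode for \(h_{a_j}\); for each \(\deltaret(q,h_{b_j},\gamma)=q'\) it enters the reading mode for \(h_{b_j}\), remembering that it still owes a pop of \(\gamma\). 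While reading it consumes symbols of \(\Sigma\) (plain, so the stack is untouched) and advances \(D_h\); whenever \(D_h\) is in an accepting state it may close the token, returning to the idle state \(q'\) and, in the return case, additionally reading \(\artb_j\) with the pop of \(\gamma\) (legitimate precisely because the token body left the stack alone). It accepts when idle in a state of \(Q_F\) with empty stack. This \(\tilde{\automaton}\) is a nondeterministic VPA over \(\tilde{\Sigma}_\oracle\) --- it pushes only on the \(\arta_j\), pops only on the \(\artb_j\), and does nothing on \(\Sigma\) --- hence recognises a VPL, since VPLs are closed under determinisation. An accepting run factors its input into per-token blocks whose token sequence \(\automaton_H\) accepts, so the input lies in \(\sigma(\lang_H)\); conversely every word of \(\sigma(\lang_H)\) admits an accepting run by making the obvious guesses. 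Thus \(L(\tilde{\automaton})=\sigma(\lang_H)=\tilde{\lang}_{\tk_\oracle}\) is a VPL. (Equivalently, one could perform the analogous surgery on a well-matched VPG for \(\lang_H\), replacing plain-token terminals by linear-rule gadgets for their regular grammars and each matching rule \(L\to h_aL_1h_bL_2\) by \(\tilde L\to\arta_j\,M\,\artb_j\,\tilde L_2\) with a fresh nonterminal \(M\) generating the concatenation \(h_a\cdot L(\tilde L_1)\cdot h_b\), which is a VPL by closure under concatenation; the automaton phrasing simply avoids the normal-form bookkeeping.)
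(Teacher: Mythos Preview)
Your proof is correct and follows essentially the same route as the paper: both take a VPA for the token-level language \(\lang_H=\{\tk_\oracle(s)\mid s\in\lang\}\) and splice in a finite-state recogniser for each token's regular string set, with the artificial markers \(\arta_j,\artb_j\) carrying the push/pop actions. Your presentation is slightly more careful---you make the identification \(\tilde{\lang}_{\tk_\oracle}=\sigma(\lang_H)\) explicit and isolate the retokenisation-closure step needed for the ``\(\supseteq\)'' inclusion, which the paper's proof invokes only implicitly in its final sentence---but the underlying construction is the same.
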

\begin{proof}
  We prove by building a VPA for language \( \tilde{\lang}_{\tk_\oracle} \).

  First, since the language \( \{ \tk_\oracle(s) \mid s\in\lang \} \) is a VPL over \( T_{\tk_\oracle} \), we have a VPA for it, denoted as \( \vpa_\oracle \).

  Then, since each token \( \tk \) is a regular language, denote its finite state automaton as \( \fsa_t \).

  Now we build a VPA \( \tilde{\vpa} \) for language \( \tilde{\lang}_{\tk_\oracle} \) by replacing each token in \( \vpa_\oracle \) with its FSA \( \fsa_t \).

  First, the set of states is the union of states in \( \vpa_\oracle \) and \( \fsa_t \) for \( \tk\in T_{\tk_\oracle}\). 
  
  Then, the transitions are defined as follows.
  \begin{enumerate}
    \item 
    We retain transitions \( p \xrightarrow{i} p' \) in each FSA \( \fsa_{h_c} \) for plain token \( h_c \).
    \item 
    For transition \( q\xrightarrow{h_c} q' \), we add transition \( (q,\epsilon)\to S_{h_c,0} \) for the start state \( S_{h_c,0} \) and transitions \( (E_{h_c},\epsilon)\to q' \) for each acceptance state \( E_{h_c} \) in \( \fsa_{h_c} \).
    \item
    For transition \( q\xrightarrow{h_a, \text{ push }} q' \), we add transition \( q \xrightarrow{\ac, \text{ push } (q,\ac)} S_{h_a,0} \), where \( \ac \) is the call token corresponding to \( h_a \), and we add transitions \( (E_{h_a},\epsilon)\to q' \) for each acceptance state \( E_{h_a} \) in \( \fsa_{h_a} \).
    \item
    For transition \( q\xrightarrow{h_b, \text{ pop } (q'',h_a')} q' \), we add transition \( q \xrightarrow{\bc, \text{ pop } (q'',\ac')} q' \), where \( \ac' \) is the call token corresponding to \( h_a' \), and we add transitions \( (E_{h_b},\epsilon)\to q' \) for each acceptance state \( E_{h_b} \) in \( \fsa_{h_b} \).
  \end{enumerate}
  
  Apparantly, a string \( s\in\lang \) must be accepted by \( \tilde{\vpa} \).
  On the otherhand, if a string \( s \) is accepted by \( \tilde{\vpa} \), then it leads to a valid token sequence \( l \) and \( s\in l \), therefore, 
  \( s \) is also a valid string \( \lang \).

  In conclusion, \( \tilde{\vpa} \) is a VPA that accepts \( \tilde{\lang}_{\tk_\oracle} \), which means language \( \tilde{\lang}_{\tk_\oracle} \) is a VPL.
\end{proof}

\begin{lemma}[Call and Return Tokens in Nesting Patterns]
  \label{app:lemma:cr_covered_by_nesting_pattern}
  For any oracle VPL \( \lang \), for each string \( s \in \lang \), if \( s \) is derived by repeatedly applying a matching rule which exposes a recursion, i.e.,
  \[ L_0 \to^* pLq \to p(\ta A\tb B)q \to^* p(s_a uLv s_bB)q \to^* ps_a us_Lv s_bs_Bq =s \]
  where \( s_a \in \ta  \) and \( s_b \in \tb  \), and \( s_L,s_B\in\lang \) are strings derived from nonterminals \( L \) and \( B \), respectively, and \(u,v\in\lang\). Then there exists a nesting pattern \( (x, y) \) for \( s \) where \( s_a \) is a prefix of \( x \) and \( s_b \) is a substring of \( y \).
\end{lemma}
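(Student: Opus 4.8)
The plan is to exhibit the nesting pattern explicitly and then verify the three conditions in its definition. Take \( x = s_a u \), \( y = v s_b s_B \), \( z = s_L \), and use the pair \( (p,q) \) as the outer context, so that the partition \( s = p\,x\,z\,y\,q \) holds (since \( s = p\,s_a u\,s_L\,v s_b s_B\,q \)), with \( s_a \) a prefix of \( x \) and \( s_b \) an infix of \( y \), exactly as the lemma requires. Both \( x \) and \( y \) are nonempty because \( s_a \in \ta \) and \( s_b \in \tb \) are nonempty token strings.

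Condition (2) of a nesting pattern — that \( p\,x^k\,z\,y^k\,q \in \lang \) for every \( k \ge 1 \) — follows by iterating the exposed recursion. From \( L \to \ta A \tb B \) with \( \ta \to^* s_a \), \( A \to^* u L v \), and \( \tb \to^* s_b \) we get \( L \to^* s_a u\, L\, v s_b B \); applying this \( k \) times and then finishing the remaining \( L \) and the \( k \) copies of \( B \) by \( L \to^* s_L \) and \( B \to^* s_B \) gives \( L \to^* (s_a u)^k\, s_L\, (v s_b s_B)^k \), hence \( p(s_a u)^k s_L (v s_b s_B)^k q \in \lang \); the case \( k=1 \) recovers \( s \) itself.

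For condition (3) — that for \( k \ne k' \) (both \( \ge 0 \)) the string \( w_{k,k'} = p (s_a u)^k s_L (v s_b s_B)^{k'} q \) is not in \( \lang \) — I would count call and return tokens in the oracle token language \( \{\tk_\oracle(w) \mid w \in \lang\} \), which is a well-matched token VPL. Since in the oracle tokenization of \( s \) the substrings \( s_a \) and \( s_b \) are single token instances (so \( p \) ends and \( q \) begins at token boundaries), and since \( u, v, s_L, s_B \in \lang \) each decompose into complete token strings, Tokenization Consistency gives that \( \tk_\oracle(w_{k,k'}) \) is obtained by splicing: the tokens of \( p \), then \( k \) repetitions of ``\( \ta \) followed by the tokens of \( u \)'', then the tokens of \( s_L \), then \( k' \) repetitions of ``the tokens of \( v \), then \( \tb \), then the tokens of \( s_B \)'', then the tokens of \( q \). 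Next, by Unique Pairing a well-matched token string contains the same number of \( \ta \) and \( \tb \) occurrences; applying this to the well-matched strings \( \tk_\oracle(u), \tk_\oracle(v), \tk_\oracle(s_L), \tk_\oracle(s_B) \) and to \( \tk_\oracle(s) \) (which fixes the combined \( \ta/\tb \) imbalance of \( \tk_\oracle(p) \) and \( \tk_\oracle(q) \)), a short arithmetic shows that in \( \tk_\oracle(w_{k,k'}) \) the quantity \( \#\ta - \#\tb \) equals exactly \( k - k' \). When \( k \ne k' \) this is nonzero, so \( \tk_\oracle(w_{k,k'}) \) is not well-matched and therefore \( w_{k,k'} \notin \lang \).

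The main obstacle is condition (3), since it is the only step that uses properties of \( \lang \) beyond the given recursion. Two points there need care: invoking Tokenization Consistency correctly, so that the repeated copies of \( s_a \) and \( s_b \) are genuinely re-tokenized as the call token \( \ta \) and the return token \( \tb \) rather than being lexed differently once duplicated; and invoking Unique Pairing to reduce ``well-matched'' to the purely numerical identity \( \#\ta = \#\tb \). Everything else — exhibiting the partition, checking nonemptiness, and the positive pumping direction — follows directly from the hypothesised derivation.
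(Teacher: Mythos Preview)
Your proposal is correct and follows essentially the same route as the paper: take $(x,y)=(s_a u,\,v s_b s_B)$, get $p x^k z y^k q\in\lang$ by iterating the recursion $L\to^* s_a u\,L\,v s_b B$, and rule out $p x^k z y^{j} q$ for $k\neq j$ by applying Tokenization Consistency to obtain the spliced token sequence and then observing a call/return imbalance. The only minor difference is that the paper argues via the \emph{total} call-versus-return count (which is already forced to be zero in any well-matched token string) rather than via $\#h_a-\#h_b$, so it does not need to invoke Unique Pairing at this step.
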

\begin{proof}
  Consider the iterative application of the derivation \( pLq\to^*p(s_a uLv s_bB)q \) to \( L \). This leads to the deduction \( pLq\to^*p(s_a u)^kL(v s_bs_B)^kq \). 
  To show that the pair \( (s_a u, v s_bB) \) represents a nesting pattern, we only need to prove that \( ux^kzy^jv \) is invalid when \(k\neq j\) (both \(\geq 0\)).

  Let the oracle tokenizer be \( \tk \).
  For \(k>0\), we tokenize string \( s_3 = p(s_a u)^ks_L(v s_bB)^kq \) as:
  \begin{align*}
    \tk(s_3) &= \tk ( p(s_a u)^k s_L (v s_b s_B)^k q ) \\
    &= \tk(p) \ta  ( \tk(u) \ta  )^{k-1} \tk(u s_L v) \tb  ( \tk(s_B v) \tb  )^{k-1} \tk(s_B q) \quad\text{(By Tokenization Consistency)} && 
  \end{align*}
  Notice that \(u\), \(s_L\) and \(v\) are independent strings of token list. Therefore, we have 
  \[ \tk(u s_L v) = \tk(u)\tk(s_L)\tk(v). \]

  We next tokenize string \( s_4 = p(s_a u)^ks_L(v s_bs_B)^jq \), for \( k\neq j \) (both \(>0\)):
  \begin{align*}
    \tk(s_4) &= \tk ( p(s_a u)^k s_L (v s_b s_B)^j q ) \\
              &= \begin{cases} 
                  \tk ( p(s_a u)^k s_L v s_b( s_B v s_b)^{j-1} s_B q ) & \text{if } k > j \\
                  \tk ( p(s_a u)^k s_L v s_b( s_B v s_b)^{k-1} s_B (v s_bs_B)^{j-k} q ) & \text{if } k < j 
                 \end{cases} \\
              &= \begin{cases} 
                  \tk ( p(s_a u)^k s_L v s_b( s_B v s_b)^{j-1} ) \tk ( s_B q ) & \text{if } k > j \\
                  \tk ( p(s_a u)^k s_L v s_b( s_B v s_b)^{k-1}) \tk (  s_B (v s_bs_B)^{j-k} q ) & \text{if } k < j 
                 \end{cases} \\
              &= \tk(p) \ta  ( \tk(u) \ta  )^{k-1} \tk(u s_L v) \tb  ( \tk(s_B v) \tb  )^{j-1} \tk(s_B q)
  \end{align*}
  We applied Tokenization Consistency in the last step above. Apparantly,
  \(\tk(s_4)\) is invalid because of imbalanced call and return tokens. 
  There are two cases left, where either \(k\) or \(j\) equals \(0\).

  When \(k=0\) and \(j>0\), \(s_4= ps_L(vs_bs_B)^jq \). Assume \(s_4\) is valid; we can tokenize \(s_4\) as
  $$
  \begin{aligned}
    \tk(s_4) &= \tk(ps_L(vs_bs_B)^jq) \\
    &= \tk(ps_L v s_b( s_B v s_b)^{j-1} s_B q) \\
    &=\tk(p)\tk(s_L)\tk(v) \tb (\tk( s_B v) \tb )^{j-1}\tk(s_B)\tk(q). 
  \end{aligned}
  $$
  Again, \(\tk(s_4)\) is apparantly invalid. The case of \(k>0\) and \(j=0\) is similar and we omit it for brevity.

  Therefore, we have shown that the pair \((s_a u, v s_bB)\) is a nesting pattern.
\end{proof}

\begin{lemma}\label{app:lemma:finite_oracle_unmatched_tokens}
  Given a compatible tokenizer \( \tk \), there exists an upper bound \( N_\tk \),
  so that for each string \( s\in\tilde{\Sigma}^* \), if \( \tilde{s} \) is well-matched and there exists context strings \( (w,w') \) such that \( wsw'\in\lang \), then string \( s \)  contains at most \( N_\tk \) unmatched oracle call or return tokens.
\end{lemma}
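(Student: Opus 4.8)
I would prove Lemma~\ref{app:lemma:finite_oracle_unmatched_tokens} by transcribing the proof of Lemma~\ref{app:lemma:finite_oracle_unmatched_symbols} into the token setting, supplying the two ingredients that make the transcription legitimate. The first is Lemma~\ref{app:lemma:oracle_token_vpl}: the converted oracle language $\tilde{\lang}_{\tk_\oracle}$ is a VPL over $\tilde{\Sigma}_\oracle$, so its ``same context'' congruence on well-matched strings has a finite number $m$ of classes. The second is Lemma~\ref{app:lemma:cr_covered_by_nesting_pattern}: every recursion through an oracle matching rule $L\to\ta A\tb B$ occurring in a derivation of a string of $\lang$ yields a genuine nesting pattern $(x,y)$ with $s_a\in\ta$ a prefix of $x$ and $s_b\in\tb$ a substring of $y$. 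I would then show $N_\tk$ can be taken to be $m$ times the number of call/return token pairs of $\tk_\oracle$.

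Concretely: fix a compatible tokenizer $\tk$ and suppose, for contradiction, that there is a string $s$ with $\convert_\tk(s)$ well-matched and $wsw'\in\lang$ whose oracle tokenization has many unmatched oracle call tokens (the unmatched-return case is symmetric, and a string with both kinds is handled by bounding each kind separately). Assume first that the oracle VPG has a single matching rule $L\to\ta A\tb B$, so all these unmatched tokens are copies of $\ta$. Reading off a derivation of $wsw'$ that exposes $K$ of them, write
\[
  wsw' \;=\; w\,q_0\,(s_a^{(1)}u_1)\cdots(s_a^{(K)}u_K)\,s_L^{(1)}\,s_L^{(2)}\,(v_K s_b^{(K)}s_B^{(K)})\cdots(v_1 s_b^{(1)}s_B^{(1)})\,w'',
\]
with $s_a^{(i)}\in\ta$, $s_b^{(i)}\in\tb$, the various $u_i,v_i,s_B^{(i)}$ and $s_L:=s_L^{(1)}s_L^{(2)}$ the relevant sub-derivations, and $s=q_0(s_a^{(1)}u_1)\cdots(s_a^{(K)}u_K)s_L^{(1)}$. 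By Lemma~\ref{app:lemma:cr_covered_by_nesting_pattern}, each block $(x_i,y_i)=(s_a^{(i)}u_i,\,v_i s_b^{(i)}s_B^{(i)})$ is a nesting pattern of $wsw'$, and these $K$ blocks are pairwise disjoint and independently repeatable. Compatibility of $\tk$ gives, for each $i$, a pair of artificial call/return symbols unmatched in $\convert_{\tk,wsw'}(x_i)$ and $\convert_{\tk,wsw'}(y_i)$ respectively; taking $K$ large enough, more than $m$ of the blocks donate a common pair $(\arta,\artb)$, and we relabel so they all do. Then, writing $x,y$ for a generic such block, the argument of Lemma~\ref{app:lemma:finite_oracle_unmatched_symbols} applies: because $\convert_\tk(s)$ is well-matched, every $\arta$ donated by an $x$-block is matched inside $s$; defining the $\convert_{\tk_\oracle}$-well-matched strings $s_j=x^{K-j}s_L^{(1)}s_L^{(2)}y^{K-j}$ with contexts $\widehat w_j=wq_0x^{j}$ and $\widehat w_j'=y^{j}w''$, a counting argument on the numbers of unmatched $\arta$'s versus $\artb$'s inside a single copy of $x$ shows $\widehat w_i s_j\widehat w_i'\notin\lang$ for $i\neq j$. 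This exhibits more than $m$ distinct classes of the context-congruence of $\tilde{\lang}_{\tk_\oracle}$, contradicting Lemma~\ref{app:lemma:oracle_token_vpl}; the multi-rule case is identical with extra indices.

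The main obstacle is the degenerate sub-case, already the delicate part of Lemma~\ref{app:lemma:finite_oracle_unmatched_symbols}: when one copy of $x$ contains equally many unmatched $\arta$ and $\artb$, so the counting argument does not immediately force invalidity. One then runs an infinite descent: split $x$ as $w_1\artb w_2\arta w_3$ (so that in $x^{2}$ the $\arta$ of the first copy matches the $\artb$ of the second), observe that the nesting pattern obtained from $x^{k+1}$ has a left part with a $\convert_\tk$-well-matched prefix $\arta w_3 w_1\artb$, so by compatibility $w_2$ must itself contain a fresh unmatched artificial call symbol, and iterate on strictly shorter $\convert_\tk$-well-matched middles until one is empty, contradicting compatibility. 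Carrying this out with tokens rather than characters is where care is needed: the splitting must be performed on $\convert_\tk(x^{k})$, and one must invoke Tokenization Consistency, Separation, and Exclusivity to verify that removing a matched $\arta\cdots\artb$ pair leaves a substring that is again $\convert_\tk$-well-matched and strictly shorter, so that the descent is legitimate. Once that bookkeeping is settled, the remainder is a routine transcription of the character-based proof.
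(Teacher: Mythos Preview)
Your proposal is correct and follows essentially the same route as the paper: the paper's proof explicitly says it ``parallels the proof of Lemma~\ref{app:lemma:finite_oracle_unmatched_symbols},'' invokes Lemma~\ref{app:lemma:oracle_token_vpl} for the VPL property of $\tilde{\lang}_{\tk_\oracle}$, invokes Lemma~\ref{app:lemma:cr_covered_by_nesting_pattern} to produce the nesting patterns $(x_i,y_i)$, applies compatibility of $\tk$ to obtain the unmatched artificial call/return pair, and then simply states ``the rest of the analysis is similar \ldots\ and we omit it for brevity.'' You have spelled out that omitted part---including the degenerate equal-count descent and the token-level bookkeeping---more carefully than the paper does, but the skeleton and the key lemmas are identical.
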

\begin{proof}
  The proof parallels the proof of Lemma~\ref{app:lemma:finite_oracle_unmatched_symbols}; we show that this would otherwise result in an infinite number of equivalence classes for \( \sim_{\tk_\oracle} \), contradicting to that \( \tilde{L}_{\tk_\oracle} \) is a VPL, proven by Lemma~\ref{app:lemma:oracle_token_vpl}.

  If for any number \( K \), there exists a \( \tk \)-well-matched string \( s \), such that \( s \) can contain at least \( K \) unmatched call token, then, by Lemma~\ref{app:lemma:cr_covered_by_nesting_pattern}, a set of nesting patterns \( (x_i,y_i) \) appear in \( wsw' \), and therefore in \( \tilde{w}\tilde{s}\tilde{w}' \).
  Then, since \( \tk \) is compatible with \( \lang \), a call-return token pair \( \callsym{c},\retsym{d} \) appear in each \( (\tilde{x_i},\tilde{y_i}) \).
  The rest of the analysis is similar to that of Lemma~\ref{app:lemma:finite_oracle_unmatched_symbols}, and we omit it for brevity.
\end{proof}

\begin{theorem}\label{app:thm:token_vpl}
  Given oracle language \( \lang \) and oracle tokenizer \( \tk_\oracle \),
  for each tokenizer \( \tk \) that is compatible with the oracle language \( \lang \), language \( \tilde{\lang}_{\tk} \) is a VPL.
\end{theorem}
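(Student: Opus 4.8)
The plan is to follow the proof of Theorem~\ref{app:thm:compatible_tagging} closely, substituting its two main ingredients by their token-level counterparts. I will use the standard characterization (invoked just before Lemma~\ref{app:lemma:finite_oracle_unmatched_symbols}) that a language of well-matched strings over an alphabet is a VPL if and only if its induced well-matched congruence --- two well-matched strings being congruent when they admit the same set of contexts in the language --- has finitely many equivalence classes. Since \( \tk \) is compatible with \( \lang \), every \( \convert_\tk(s) \) with \( s\in\lang \) is well-matched, so \( \tilde{\lang}_{\tk} \) is a language of well-matched strings over \( \tilde{\Sigma}_{\tk} \); write \( \sim_{\tk} \) for its well-matched congruence. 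It then suffices to show that every well-matched string over \( \tilde{\Sigma}_{\tk} \) that occurs inside some sentence of \( \tilde{\lang}_{\tk} \) is \( \sim_{\tk} \)-equivalent to one of bounded length, since all strings never occurring inside a sentence collapse to a single \emph{dead} class.

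First I would invoke Lemma~\ref{app:lemma:finite_oracle_unmatched_tokens} to get the constant \( N_{\tk} \): such an occurring well-matched string is \( \convert_{\tk,s}(p) \) for a substring \( p \) of some \( s\in\lang \), and the oracle re-tokenization of \( p \) has at most \( N_{\tk} \) unmatched call or return tokens. Cutting \( p \) at those unmatched tokens gives \( \tk_\oracle(p)=\hat p_1\, t_1\, \hat p_2\, t_2\cdots \hat p_m\, t_m\, \hat p_{m+1} \) with \( m\le N_{\tk} \), each \( t_i \) an unmatched oracle call/return token and each \( \hat p_i \) well-matched under \( \tk_\oracle \). Next, Lemma~\ref{app:lemma:oracle_token_vpl} tells us \( \tilde{\lang}_{\tk_\oracle} \) is a VPL, so the oracle-side well-matched congruence has only finitely many classes; let \( l \) bound the length of a shortest representative of each class. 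I would then replace, one block at a time, the substring of \( p \) realizing \( \hat p_i \) by such a short representative --- exactly the context-preserving substitution of Theorem~\ref{app:thm:compatible_tagging}, namely for all contexts \( w_1,w_2 \) we have \( w_1(\cdots)\hat p_i(\cdots)w_2\in\lang \) iff \( w_1(\cdots)[\hat p_i]_\oracle(\cdots)w_2\in\lang \). The resulting substring \( p' \) has at most \( m \) unmatched tokens (each of bounded size, the token set being finite) separated by \( m+1 \) blocks each of length \( \le l \), hence is bounded overall; and since the substitution preserves membership in \( \lang \) under every context, \( \convert_\tk(p') \) is \( \sim_{\tk} \)-equivalent to \( \convert_{\tk,s}(p) \). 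Therefore \( \sim_{\tk} \) has finitely many classes and \( \tilde{\lang}_{\tk} \) is a VPL.

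The hard part will be the last equivalence, and in particular the interplay between the two possibly different tokenizers \( \tk \) and \( \tk_\oracle \) under the block substitution --- a complication absent from the character-based Theorem~\ref{app:thm:compatible_tagging}, where \( t \) and \( t_\oracle \) both act on single characters and substrings are unambiguous. Here \( \convert_\tk \) and \( \convert_{\tk_\oracle} \) insert artificial symbols at different positions, so after replacing the \( \tk_\oracle \)-well-matched block \( \hat p_i \) one must show that \( \convert_\tk \) re-tokenizes the new string so that the same artificial \( \tk \)-symbols, with the same nesting, still surround that block. This is where the standing assumptions enter: Tokenization Consistency makes \( \tk \)'s tokenization concatenate across the fixed block boundaries; Separation and Exclusivity forbid a new call/return token from straddling a boundary; and compatibility of \( \tk \) together with \( k \)-Repetition, via Lemma~\ref{app:lemma:cr_covered_by_nesting_pattern}, keeps the number of unmatched artificial \( \tk \)-symbols inside each block's image bounded, so a bounded-length representative can be chosen that also respects \( \tk \)-tokenization. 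I expect the bulk of the write-up to be precisely this bookkeeping, carried out as in the analogous parts of Lemma~\ref{app:lemma:finite_oracle_unmatched_symbols} and Theorem~\ref{app:thm:compatible_tagging}.
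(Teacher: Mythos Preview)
Your proposal follows essentially the same approach as the paper: invoke Lemma~\ref{app:lemma:finite_oracle_unmatched_tokens} to bound the number of unmatched oracle tokens in any \(\tk\)-well-matched substring, decompose into \(\tk_\oracle\)-well-matched blocks separated by those tokens, replace each block by a bounded-length representative of its \(\sim_{\tk_\oracle}\)-class (using Lemma~\ref{app:lemma:oracle_token_vpl} to know there are finitely many), and conclude that \(\sim_\tk\) has finite index. The paper's actual proof is considerably terser than you anticipate---it simply asserts that the substitution yields \(\tilde{q}'\sim_\tk \tilde{q}\) and does not carry out the tokenizer-interaction bookkeeping you flag as the ``hard part,'' so you should not expect that part to balloon beyond what Theorem~\ref{app:thm:compatible_tagging} already does.
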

\begin{proof}
  By Lemma~\ref{app:lemma:finite_oracle_unmatched_tokens},
  for certain \( k\leq N_\tk \),
  we can represent \( \tilde{q}_\oracle \) as \( q_1\ac_1 q_2\ac_2\dots \ac_{k-1} q_k \), where \( q_{i,i\in[1..k]} \) is well-matched under \( \tk_\oracle \). Now, we replace each  \( q_{i,i\in[1..k]} \) with its representative in the equivalence class of \( \sim_{\tk_\oracle} \), and get \( q' \) that \( \tilde{q'}\sim_{\tk} \tilde{q} \). Since \( \sim_{\tk_\oracle} \) has a finite number of equivalence classes, let the length of the longest representative be \( l \), the length of \( q' \) is bounded by \( l\times k+k+1 \).

  Therefore, the congruence relation \( \sim_\tk \) has a finite number of equivalence classes, which shows \( \tilde{\lang}_{\tk} \) is a VPL.
\end{proof}

\begin{lemma}[Finite and Sufficient Seed Strings]
  \label{app:lemma:finite_seed_strings_for_tokenization}
  There is a finite set of seed strings \( S \), with which we can find a tokenizer that is compatible with the oracle language \( \lang \) using Algorithm~\ref{alg:backtracking_token}.
\end{lemma}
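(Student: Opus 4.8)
The plan is to adapt the two-phase construction from the proof of Theorem~\ref{app:thm:construct_seed_strings} to the token setting; the extra ingredient will be that the seeds must also let \lstar recover the correct lexical rules for the oracle call/return tokens. I will fix a VPG $G$ for the token-based language $\{\tk_\oracle(s)\mid s\in\lang\}$, which exists because this language is a VPL over tokens. \emph{Phase one:} for every matching rule $L\to\ta A\tb B$ of $G$ whose nonterminal $A$ can be rewritten so as to re-expose $L$, I will instantiate one round of that recursion exactly as in Lemma~\ref{app:lemma:cr_covered_by_nesting_pattern}, filling the remaining nonterminals by shortest recursion-free derivations, and will choose the call-token string to be the distinguished string $s_{\ta}$ promised by the Token Fixed Prefix and Suffix assumption (and likewise $s_{\tb}$ for the return token). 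By Lemma~\ref{app:lemma:cr_covered_by_nesting_pattern} the resulting string $s$ has a nesting pattern $(x,y)$ with $s_{\ta}$ a prefix of $x$ and $s_{\tb}$ a substring of $y$; by Lemma~\ref{app:lemma:token_in_pattern2} the search over disjoint prefix/suffix windows inside $(x^{2},y^{2})$ performed by Algorithm~\ref{alg:backtracking_token} will in particular reach the window exposing the prefixes and suffixes of $s_{\ta}$ and $s_{\tb}$, at which point the lexical-rule learning step---driven by the membership oracle $\lambda w.\memquery_\lang(\cdot)$ restricted to those prefix/suffix combinations---recovers exactly the oracle regular languages for $\ta$ and $\tb$, which is precisely what the Token Fixed Prefix and Suffix assumption guarantees. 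Adding the few extra seeds that witness a recursive transition $A\to^{*}\cdots A$, as in the informal discussion following Theorem~\ref{thm:construct_seed_strings}, will ensure that every oracle call/return token pair is recoverable. Phase one contributes only finitely many seeds, each of length bounded by a quantity $B$ depending only on $G$.

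\emph{Phase two} will close $S_0$ off against incompatible tokenizers, mirroring the second half of the proof of Theorem~\ref{app:thm:construct_seed_strings}. The key point is that, once seed lengths are bounded, only finitely many partial tokenizers can ever be produced by Algorithm~\ref{alg:backtracking_token}: bounded-length seeds have finitely many nesting patterns; the bound $K$ at which $\candidateNPKW$ stabilises is finite, by the same reasoning as in Theorem~\ref{lemma:backtrack_result} together with Lemma~\ref{app:lemma:token_in_pattern} (invalid candidate patterns get filtered, and valid ones carry an oracle matching rule); the disjoint windows inside $(x^{2},y^{2})$ are finitely many and bounded; and \lstar is deterministic, so the set of learnable pairs $(r,r')$, hence of reachable partial tokenizers $D$, is finite. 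For each reachable $D$ that is incompatible with $\lang$---either $\tokenit(D,s)$ is not well-matched for some $s\in\lang$, or $D$ misses a nesting pattern of some string of $\lang$---I will add to $S_0$ one witnessing string of $\lang$ of minimal length. Using Lemma~\ref{app:lemma:cr_covered_by_nesting_pattern} for the pattern-missing case (small nesting patterns always exist) and the structure of $G$ and $D$ for the well-matchedness case, such witnesses can themselves be bounded in length, so the closure only ever adds strings from a fixed finite set; since each addition strictly shrinks the set of reachable tokenizers still compatible with $S_0$, the closure terminates and $S_0$ is finite.

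Finally I will check correctness of the algorithm on this $S_0$. Termination, and the fact that $\tokenSearch$ never returns \texttt{None}, will follow because once $K$ is large enough the oracle tokenizer restricted to the tokens occurring in $S_0$ is itself compatible with $S_0$ and, by phase one, lies in the image of the search, so the outer loop halts with some tokenizer $D$. Any returned $D$ is compatible with $S_0$, and by the phase-two closure every reachable $D$ incompatible with $\lang$ fails to be compatible with $S_0$; hence $D$ is compatible with $\lang$. Composing with Theorem~\ref{app:thm:token_vpl} then yields that $\tilde{\lang}_{\tk}$ is a VPL, which by Theorem~\ref{thm:learn_vpa} is learned exactly by Algorithm~\ref{alg:learn_vpa}.

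The step I expect to be the main obstacle is the finiteness/termination argument of phase two. I will need to (i) bound, uniformly in $G$ and the oracle token set, the description size of every partial tokenizer the search can emit and the length of a counterexample witnessing its incompatibility with $\lang$---this is where the Exclusivity, Separation, and Token Fixed Prefix and Suffix assumptions are essential---and (ii) make precise that the concrete prefix/suffix test set used in the lexical-rule learning step really is a sufficient set of test strings in the sense of the Token Fixed Prefix and Suffix assumption, and that $\tokenit$, relying on $k$-Repetition, behaves correctly on seeds in which a plain-token substring happens to match a call or return token. Everything else should follow the template of Theorem~\ref{app:thm:construct_seed_strings} and the already-established Lemmas~\ref{app:lemma:token_in_pattern}, \ref{app:lemma:token_in_pattern2}, \ref{app:lemma:oracle_token_vpl}, \ref{app:lemma:cr_covered_by_nesting_pattern}, and \ref{app:lemma:finite_oracle_unmatched_tokens}.
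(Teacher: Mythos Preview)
Your proposal follows the paper's two-phase template exactly: seed $S$ with nesting-pattern witnesses for every oracle call/return token pair (instantiated with the distinguished strings $s_{h}$ from the Token Fixed Prefix and Suffix assumption), then add counterexamples ruling out the finitely many incompatible partial tokenizers the search can reach, and verify that the oracle tokenizer itself lies on some branch of $\tokenSearch$ (using Lemmas~\ref{app:lemma:token_in_pattern2} and~\ref{app:lemma:cr_covered_by_nesting_pattern} together with $k$-Repetition, Separation, and Unique Pairing for well-matchedness of $\tokenit$). Your iterative closure in phase two and your explicit bounds on witness lengths and on the number of reachable $D$'s are more careful than the paper's one-shot extension of $S_1$ to $S_2$---the paper simply asserts that Algorithm~\ref{alg:backtracking_token} can produce only finitely many tokenizers from a finite seed set---but the decomposition, the key lemmas invoked, and the overall route are the same.
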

\begin{proof}
  The construction of \(S\) involves two phases: we first identify strings that reveal the oracle call and return tokens, then augment this set to exclude taggings incompatible with \(\lang\).

  Starting with an empty set \(S_0\), for each oracle call-return token pair \((\ta,\tb)\), we select a seed string \(s\) that contains a nesting pattern \((x,y)\) where \(\ta\) and \(\tb\) are respectively unmatched in \(x\) and \(y\), then we include \(s\) in \(S_0\). We denote the new set as \(S_1\).
  Note that, similar to Lemma~\ref{app:thm:construct_seed_strings}, if there is no such nesting pattern for \((\ta,\tb)\), then it is easy to show that  \((\ta,\tb)\) are ``redundant'', in that they can be treated as plain tokens, which does not change the language with tagging removed.

  Subsequently, we extend \(S_1\) with strings \(s\) from \(\lang\) that is incompatible with certain tokenizer that can be found by Algorithm~\ref{alg:backtracking_token} based on \(S_1\). We denote the new set as \(S_2\). \(S_2\) is still finite, since Algorithm~\ref{alg:backtracking_token} can only find a finite number of tokenizers from \(S_1\).

  For each string \(s\) in \(S_2\), we modify \(s\) by replacing call and return tokens \(\ta\) and \(\tb\) with strings \(s_a\) and \(s_b\) from Token Fixed Prefix and Suffix, respectively. The new set is our targeted \(S\).

  Given \(S\), we now show that the oracle tokenizer \(\tk\) is a possible return of Algorithm~\ref{alg:backtracking_token}. Intuitively, the oracle token pair can be incrementally added to the hypothesis partial tokenizer \(D\), and the tokenization \(\tokenit(D,s)\) for any valid string \(s\) maintains well-matched.
  
  By Lemma~\ref{app:lemma:token_in_pattern2}, each oracle token pair is contained in \((x^2,y^2)\) for certain nesting pattern \((x,y)\). Therefore, from the first nesting pattern \((x,y)\), 
  Algorithm~\ref{alg:backtracking_token} could find \(((q,g),(q',g'))\) where \(((q,g),(q',g'))\) belongs to an oracle token pair. Then, because of the Token Fixed Prefix and Suffix assumption, the lexical rules for the two paired tokens are learned accurately.
  Denote this tokenizer that contains only one token-pair as \(D_1\).
  Consider Algorithm~\ref{alg:tokenize}.
  At line 4, we construct a new match \( m \). 
  Now, we show that \(m\) must be the match given by the oracle tokenizer.
  Firstly, \(m\) must correspond to an oracle match, otherwise it will be filtered out by \(k\)-Repetition. Then, by Unique Pairing and Separation, \(m\) must correspond to the oracle call/return token. Therefore, \(\tokenit(D_1,s)\) contains only well-matched oracle tokens, thus is well-matched.

  A similar analysis can be done to show that as long as \(D\) contains only oracle token pairs, \(\tokenit(D,s)\) is well-matched for any valid string \(s\). We thus have proved that Algorithm~\ref{alg:backtracking_token} can at least find the oracle tokenizer given \(S_0\) (with redundant oracle tokens removed), by iteratively selecting the oracle token pair for each nesting pattern.

  In conclusion, there exists a finite set of seed strings, where Algorithm~\ref{alg:backtracking_token} can find a compatible tokenizer.
\end{proof}

\newpage

\bibliographystyle{ACM-Reference-Format}

\begin{thebibliography}{37}


\ifx \showCODEN    \undefined \def \showCODEN     #1{\unskip}     \fi
\ifx \showDOI      \undefined \def \showDOI       #1{#1}\fi
\ifx \showISBNx    \undefined \def \showISBNx     #1{\unskip}     \fi
\ifx \showISBNxiii \undefined \def \showISBNxiii  #1{\unskip}     \fi
\ifx \showISSN     \undefined \def \showISSN      #1{\unskip}     \fi
\ifx \showLCCN     \undefined \def \showLCCN      #1{\unskip}     \fi
\ifx \shownote     \undefined \def \shownote      #1{#1}          \fi
\ifx \showarticletitle \undefined \def \showarticletitle #1{#1}   \fi
\ifx \showURL      \undefined \def \showURL       {\relax}        \fi
\providecommand\bibfield[2]{#2}
\providecommand\bibinfo[2]{#2}
\providecommand\natexlab[1]{#1}
\providecommand\showeprint[2][]{arXiv:#2}

\bibitem[Aichernig et~al\mbox{.}(2024)]%
        {10.1145/3605360}
\bibfield{author}{\bibinfo{person}{Bernhard~K. Aichernig}, \bibinfo{person}{Martin Tappler}, {and} \bibinfo{person}{Felix Wallner}.} \bibinfo{year}{2024}\natexlab{}.
\newblock \showarticletitle{Benchmarking Combinations of Learning and Testing Algorithms for Automata Learning}.
\newblock \bibinfo{journal}{\emph{Form. Asp. Comput.}} \bibinfo{volume}{36}, \bibinfo{number}{1}, Article \bibinfo{articleno}{3} (\bibinfo{date}{mar} \bibinfo{year}{2024}), \bibinfo{numpages}{37}~pages.
\newblock
\showISSN{0934-5043}
\urldef\tempurl%
\url{https://doi.org/10.1145/3605360}
\showDOI{\tempurl}


\bibitem[Alur(2007)]%
        {alur2007marrying}
\bibfield{author}{\bibinfo{person}{Rajeev Alur}.} \bibinfo{year}{2007}\natexlab{}.
\newblock \showarticletitle{Marrying words and trees}. In \bibinfo{booktitle}{\emph{Proceedings of the Twenty-Sixth ACM SIGMOD-SIGACT-SIGART Symposium on Principles of Database Systems}} (<conf-loc>, <city>Beijing</city>, <country>China</country>, </conf-loc>) \emph{(\bibinfo{series}{PODS '07})}. \bibinfo{publisher}{Association for Computing Machinery}, \bibinfo{address}{New York, NY, USA}, \bibinfo{pages}{233–242}.
\newblock
\showISBNx{9781595936851}
\urldef\tempurl%
\url{https://doi.org/10.1145/1265530.1265564}
\showDOI{\tempurl}


\bibitem[Alur et~al\mbox{.}(2005)]%
        {Alur2005Congruences}
\bibfield{author}{\bibinfo{person}{Rajeev Alur}, \bibinfo{person}{Viraj Kumar}, \bibinfo{person}{P. Madhusudan}, {and} \bibinfo{person}{Mahesh Viswanathan}.} \bibinfo{year}{2005}\natexlab{}.
\newblock \showarticletitle{Congruences for visibly pushdown languages}. In \bibinfo{booktitle}{\emph{Proceedings of the 32nd International Conference on Automata, Languages and Programming}} (Lisbon, Portugal) \emph{(\bibinfo{series}{ICALP'05})}. \bibinfo{publisher}{Springer-Verlag}, \bibinfo{address}{Berlin, Heidelberg}, \bibinfo{pages}{1102–1114}.
\newblock
\showISBNx{3540275800}
\urldef\tempurl%
\url{https://doi.org/10.1007/11523468_89}
\showDOI{\tempurl}


\bibitem[Alur and Madhusudan(2004)]%
        {10.1145/1007352.1007390}
\bibfield{author}{\bibinfo{person}{Rajeev Alur} {and} \bibinfo{person}{P. Madhusudan}.} \bibinfo{year}{2004}\natexlab{}.
\newblock \showarticletitle{Visibly pushdown languages}. In \bibinfo{booktitle}{\emph{Proceedings of the Thirty-Sixth Annual ACM Symposium on Theory of Computing}} (Chicago, IL, USA) \emph{(\bibinfo{series}{STOC '04})}. \bibinfo{publisher}{Association for Computing Machinery}, \bibinfo{address}{New York, NY, USA}, \bibinfo{pages}{202–211}.
\newblock
\showISBNx{1581138520}
\urldef\tempurl%
\url{https://doi.org/10.1145/1007352.1007390}
\showDOI{\tempurl}


\bibitem[Alur and Madhusudan(2009)]%
        {10.1145/1516512.1516518}
\bibfield{author}{\bibinfo{person}{Rajeev Alur} {and} \bibinfo{person}{P. Madhusudan}.} \bibinfo{year}{2009}\natexlab{}.
\newblock \showarticletitle{Adding nesting structure to words}.
\newblock \bibinfo{journal}{\emph{J. ACM}} \bibinfo{volume}{56}, \bibinfo{number}{3}, Article \bibinfo{articleno}{16} (\bibinfo{date}{may} \bibinfo{year}{2009}), \bibinfo{numpages}{43}~pages.
\newblock
\showISSN{0004-5411}
\urldef\tempurl%
\url{https://doi.org/10.1145/1516512.1516518}
\showDOI{\tempurl}


\bibitem[Angluin(1987)]%
        {10.1016/0890-5401(87)90052-6}
\bibfield{author}{\bibinfo{person}{Dana Angluin}.} \bibinfo{year}{1987}\natexlab{}.
\newblock \showarticletitle{Learning regular sets from queries and counterexamples}.
\newblock \bibinfo{journal}{\emph{Inf. Comput.}} \bibinfo{volume}{75}, \bibinfo{number}{2} (\bibinfo{date}{nov} \bibinfo{year}{1987}), \bibinfo{pages}{87–106}.
\newblock
\showISSN{0890-5401}
\urldef\tempurl%
\url{https://doi.org/10.1016/0890-5401(87)90052-6}
\showDOI{\tempurl}


\bibitem[Arefin et~al\mbox{.}(2024)]%
        {Arefin24}
\bibfield{author}{\bibinfo{person}{M. Arefin}, \bibinfo{person}{S. Shetiya}, \bibinfo{person}{Z. Wang}, {and} \bibinfo{person}{C. Csallner}.} \bibinfo{year}{2024}\natexlab{}.
\newblock \showarticletitle{Fast Deterministic Black-box Context-free Grammar Inference}. In \bibinfo{booktitle}{\emph{2024 IEEE/ACM 46th International Conference on Software Engineering (ICSE)}}. \bibinfo{publisher}{IEEE Computer Society}, \bibinfo{address}{Los Alamitos, CA, USA}, \bibinfo{pages}{901--901}.
\newblock
\showISSN{1558-1225}
\urldef\tempurl%
\url{https://doi.ieeecomputersociety.org/}
\showURL{%
\tempurl}


\bibitem[Barbot et~al\mbox{.}(2021)]%
        {pmlr-v153-barbot21a}
\bibfield{author}{\bibinfo{person}{Beno\^it Barbot}, \bibinfo{person}{Benedikt Bollig}, \bibinfo{person}{Alain Finkel}, \bibinfo{person}{Serge Haddad}, \bibinfo{person}{Igor Khmelnitsky}, \bibinfo{person}{Martin Leucker}, \bibinfo{person}{Daniel Neider}, \bibinfo{person}{Rajarshi Roy}, {and} \bibinfo{person}{Lina Ye}.} \bibinfo{year}{2021}\natexlab{}.
\newblock \showarticletitle{Extracting Context-Free Grammars from Recurrent Neural Networks using Tree-Automata Learning and A* Search}. In \bibinfo{booktitle}{\emph{Proceedings of the Fifteenth International Conference on Grammatical Inference}} \emph{(\bibinfo{series}{Proceedings of Machine Learning Research}, Vol.~\bibinfo{volume}{153})}, \bibfield{editor}{\bibinfo{person}{Jane Chandlee}, \bibinfo{person}{Rémi Eyraud}, \bibinfo{person}{Jeff Heinz}, \bibinfo{person}{Adam Jardine}, {and} \bibinfo{person}{Menno van Zaanen}} (Eds.). \bibinfo{publisher}{PMLR}, \bibinfo{pages}{113--129}.
\newblock
\urldef\tempurl%
\url{https://proceedings.mlr.press/v153/barbot21a.html}
\showURL{%
\tempurl}


\bibitem[Bastani et~al\mbox{.}(2017)]%
        {10.1145/3140587.3062349}
\bibfield{author}{\bibinfo{person}{Osbert Bastani}, \bibinfo{person}{Rahul Sharma}, \bibinfo{person}{Alex Aiken}, {and} \bibinfo{person}{Percy Liang}.} \bibinfo{year}{2017}\natexlab{}.
\newblock \showarticletitle{Synthesizing program input grammars}.
\newblock \bibinfo{journal}{\emph{SIGPLAN Not.}} \bibinfo{volume}{52}, \bibinfo{number}{6} (\bibinfo{date}{jun} \bibinfo{year}{2017}), \bibinfo{pages}{95–110}.
\newblock
\showISSN{0362-1340}
\urldef\tempurl%
\url{https://doi.org/10.1145/3140587.3062349}
\showDOI{\tempurl}


\bibitem[Bendrissou et~al\mbox{.}(2022)]%
        {10.1145/3519939.3523716}
\bibfield{author}{\bibinfo{person}{Bachir Bendrissou}, \bibinfo{person}{Rahul Gopinath}, {and} \bibinfo{person}{Andreas Zeller}.} \bibinfo{year}{2022}\natexlab{}.
\newblock \showarticletitle{“Synthesizing input grammars”: a replication study}. In \bibinfo{booktitle}{\emph{Proceedings of the 43rd ACM SIGPLAN International Conference on Programming Language Design and Implementation}} (San Diego, CA, USA) \emph{(\bibinfo{series}{PLDI 2022})}. \bibinfo{publisher}{Association for Computing Machinery}, \bibinfo{address}{New York, NY, USA}, \bibinfo{pages}{260–268}.
\newblock
\showISBNx{9781450392655}
\urldef\tempurl%
\url{https://doi.org/10.1145/3519939.3523716}
\showDOI{\tempurl}


\bibitem[Chaudhuri and Alur(2007)]%
        {10.5555/1770532.1770559}
\bibfield{author}{\bibinfo{person}{Swarat Chaudhuri} {and} \bibinfo{person}{Rajeev Alur}.} \bibinfo{year}{2007}\natexlab{}.
\newblock \showarticletitle{Instrumenting C programs with nested word monitors}. In \bibinfo{booktitle}{\emph{Proceedings of the 14th International SPIN Conference on Model Checking Software}} (Berlin, Germany). \bibinfo{publisher}{Springer-Verlag}, \bibinfo{address}{Berlin, Heidelberg}, \bibinfo{pages}{279–283}.
\newblock
\showISBNx{9783540733690}


\bibitem[Chow(1978)]%
        {10.1109/TSE.1978.231496}
\bibfield{author}{\bibinfo{person}{T.~S. Chow}.} \bibinfo{year}{1978}\natexlab{}.
\newblock \showarticletitle{Testing Software Design Modeled by Finite-State Machines}.
\newblock \bibinfo{journal}{\emph{IEEE Trans. Softw. Eng.}} \bibinfo{volume}{4}, \bibinfo{number}{3} (\bibinfo{date}{may} \bibinfo{year}{1978}), \bibinfo{pages}{178–187}.
\newblock
\showISSN{0098-5589}
\urldef\tempurl%
\url{https://doi.org/10.1109/TSE.1978.231496}
\showDOI{\tempurl}


\bibitem[Drewes and H\"{o}gberg(2007)]%
        {Drewes2007-tcs}
\bibfield{author}{\bibinfo{person}{Frank Drewes} {and} \bibinfo{person}{Johanna H\"{o}gberg}.} \bibinfo{year}{2007}\natexlab{}.
\newblock \showarticletitle{Query Learning of Regular Tree Languages: How to Avoid Dead States}.
\newblock \bibinfo{journal}{\emph{Theor. Comp. Sys.}} \bibinfo{volume}{40}, \bibinfo{number}{2} (\bibinfo{date}{feb} \bibinfo{year}{2007}), \bibinfo{pages}{163–185}.
\newblock
\showISSN{1432-4350}
\urldef\tempurl%
\url{https://doi.org/10.1007/s00224-005-1233-3}
\showDOI{\tempurl}


\bibitem[Drewes et~al\mbox{.}(2011)]%
        {drewes_mat_2011}
\bibfield{author}{\bibinfo{person}{Frank Drewes}, \bibinfo{person}{Johanna Högberg}, {and} \bibinfo{person}{Andreas Maletti}.} \bibinfo{year}{2011}\natexlab{}.
\newblock \showarticletitle{{MAT} learners for tree series: an abstract data type and two realizations}.
\newblock \bibinfo{journal}{\emph{Acta Informatica}} \bibinfo{volume}{48}, \bibinfo{number}{3} (\bibinfo{date}{May} \bibinfo{year}{2011}), \bibinfo{pages}{165--189}.
\newblock
\showISSN{1432-0525}
\urldef\tempurl%
\url{https://doi.org/10.1007/s00236-011-0135-x}
\showDOI{\tempurl}


\bibitem[Fujiwara et~al\mbox{.}(1991)]%
        {87284}
\bibfield{author}{\bibinfo{person}{S. Fujiwara}, \bibinfo{person}{G. v. Bochmann}, \bibinfo{person}{F. Khendek}, \bibinfo{person}{M. Amalou}, {and} \bibinfo{person}{A. Ghedamsi}.} \bibinfo{year}{1991}\natexlab{}.
\newblock \showarticletitle{Test selection based on finite state models}.
\newblock \bibinfo{journal}{\emph{IEEE Transactions on Software Engineering}} \bibinfo{volume}{17}, \bibinfo{number}{6} (\bibinfo{year}{1991}), \bibinfo{pages}{591--603}.
\newblock
\urldef\tempurl%
\url{https://doi.org/10.1109/32.87284}
\showDOI{\tempurl}


\bibitem[Gauwin et~al\mbox{.}(2008)]%
        {GAUWIN200813}
\bibfield{author}{\bibinfo{person}{Olivier Gauwin}, \bibinfo{person}{Joachim Niehren}, {and} \bibinfo{person}{Yves Roos}.} \bibinfo{year}{2008}\natexlab{}.
\newblock \showarticletitle{Streaming tree automata}.
\newblock \bibinfo{journal}{\emph{Inform. Process. Lett.}} \bibinfo{volume}{109}, \bibinfo{number}{1} (\bibinfo{year}{2008}), \bibinfo{pages}{13--17}.
\newblock
\showISSN{0020-0190}
\urldef\tempurl%
\url{https://doi.org/10.1016/j.ipl.2008.08.002}
\showDOI{\tempurl}


\bibitem[Harris et~al\mbox{.}(2012)]%
        {10.1007/978-3-642-31424-7_41}
\bibfield{author}{\bibinfo{person}{William~R. Harris}, \bibinfo{person}{Somesh Jha}, {and} \bibinfo{person}{Thomas Reps}.} \bibinfo{year}{2012}\natexlab{}.
\newblock \showarticletitle{Secure programming via visibly pushdown safety games}. In \bibinfo{booktitle}{\emph{Proceedings of the 24th International Conference on Computer Aided Verification}} (Berkeley, CA) \emph{(\bibinfo{series}{CAV'12})}. \bibinfo{publisher}{Springer-Verlag}, \bibinfo{address}{Berlin, Heidelberg}, \bibinfo{pages}{581–598}.
\newblock
\showISBNx{9783642314230}
\urldef\tempurl%
\url{https://doi.org/10.1007/978-3-642-31424-7_41}
\showDOI{\tempurl}


\bibitem[Heizmann et~al\mbox{.}(2010)]%
        {heizmann2010nested}
\bibfield{author}{\bibinfo{person}{Matthias Heizmann}, \bibinfo{person}{Jochen Hoenicke}, {and} \bibinfo{person}{Andreas Podelski}.} \bibinfo{year}{2010}\natexlab{}.
\newblock \showarticletitle{Nested interpolants}.
\newblock \bibinfo{journal}{\emph{SIGPLAN Not.}} \bibinfo{volume}{45}, \bibinfo{number}{1} (\bibinfo{date}{jan} \bibinfo{year}{2010}), \bibinfo{pages}{471–482}.
\newblock
\showISSN{0362-1340}
\urldef\tempurl%
\url{https://doi.org/10.1145/1707801.1706353}
\showDOI{\tempurl}


\bibitem[Howar(2012)]%
        {howar2012active}
\bibfield{author}{\bibinfo{person}{Falk Howar}.} \bibinfo{year}{2012}\natexlab{}.
\newblock \emph{\bibinfo{title}{Active learning of interface programs}}.
\newblock \bibinfo{thesistype}{Ph.\,D. Dissertation}.
\newblock
\urldef\tempurl%
\url{https://doi.org/10.17877/DE290R-4817}
\showDOI{\tempurl}


\bibitem[Irfan et~al\mbox{.}(2010)]%
        {irfan2010angluin}
\bibfield{author}{\bibinfo{person}{Muhammad~Naeem Irfan}, \bibinfo{person}{Catherine Oriat}, {and} \bibinfo{person}{Roland Groz}.} \bibinfo{year}{2010}\natexlab{}.
\newblock \showarticletitle{Angluin style finite state machine inference with non-optimal counterexamples}. In \bibinfo{booktitle}{\emph{Proceedings of the First International Workshop on Model Inference In Testing}} (Trento, Italy) \emph{(\bibinfo{series}{MIIT '10})}. \bibinfo{publisher}{Association for Computing Machinery}, \bibinfo{address}{New York, NY, USA}, \bibinfo{pages}{11–19}.
\newblock
\showISBNx{9781450301473}
\urldef\tempurl%
\url{https://doi.org/10.1145/1868044.1868046}
\showDOI{\tempurl}


\bibitem[Isberner(2015)]%
        {Isberner2015FoundationsOA}
\bibfield{author}{\bibinfo{person}{Malte Isberner}.} \bibinfo{year}{2015}\natexlab{}.
\newblock \showarticletitle{Foundations of active automata learning: an algorithmic perspective}.
\newblock
\urldef\tempurl%
\url{https://api.semanticscholar.org/CorpusID:45690562}
\showURL{%
\tempurl}


\bibitem[Jia et~al\mbox{.}(2021)]%
        {JiaKT21deriv}
\bibfield{author}{\bibinfo{person}{Xiaodong Jia}, \bibinfo{person}{Ashish Kumar}, {and} \bibinfo{person}{Gang Tan}.} \bibinfo{year}{2021}\natexlab{}.
\newblock \showarticletitle{A derivative-based parser generator for visibly Pushdown grammars}.
\newblock \bibinfo{journal}{\emph{Proc. ACM Program. Lang.}} \bibinfo{volume}{5}, \bibinfo{number}{OOPSLA}, Article \bibinfo{articleno}{151} (\bibinfo{date}{oct} \bibinfo{year}{2021}), \bibinfo{numpages}{24}~pages.
\newblock
\urldef\tempurl%
\url{https://doi.org/10.1145/3485528}
\showDOI{\tempurl}


\bibitem[Jia et~al\mbox{.}(2023)]%
        {JiaKT23TOPLAS}
\bibfield{author}{\bibinfo{person}{Xiaodong Jia}, \bibinfo{person}{Ashish Kumar}, {and} \bibinfo{person}{Gang Tan}.} \bibinfo{year}{2023}\natexlab{}.
\newblock \showarticletitle{A Derivative-based Parser Generator for Visibly Pushdown Grammars}.
\newblock \bibinfo{journal}{\emph{ACM Trans. Program. Lang. Syst.}} \bibinfo{volume}{45}, \bibinfo{number}{2}, Article \bibinfo{articleno}{9} (\bibinfo{date}{may} \bibinfo{year}{2023}), \bibinfo{numpages}{68}~pages.
\newblock
\showISSN{0164-0925}
\urldef\tempurl%
\url{https://doi.org/10.1145/3591472}
\showDOI{\tempurl}


\bibitem[Kulkarni(2023)]%
        {arvadaArtifact}
\bibfield{author}{\bibinfo{person}{Neil Kulkarni}.} \bibinfo{year}{2023}\natexlab{}.
\newblock \bibinfo{title}{Arvada}.
\newblock \bibinfo{howpublished}{\url{https://github.com/neil-kulkarni/arvada}}.
\newblock


\bibitem[Kulkarni et~al\mbox{.}(2022)]%
        {10.1109/ASE51524.2021.9678879}
\bibfield{author}{\bibinfo{person}{Neil Kulkarni}, \bibinfo{person}{Caroline Lemieux}, {and} \bibinfo{person}{Koushik Sen}.} \bibinfo{year}{2022}\natexlab{}.
\newblock \showarticletitle{Learning highly recursive input grammars}. In \bibinfo{booktitle}{\emph{Proceedings of the 36th IEEE/ACM International Conference on Automated Software Engineering}} (Melbourne, Australia) \emph{(\bibinfo{series}{ASE '21})}. \bibinfo{publisher}{IEEE Press}, \bibinfo{pages}{456–467}.
\newblock
\showISBNx{9781665403375}
\urldef\tempurl%
\url{https://doi.org/10.1109/ASE51524.2021.9678879}
\showDOI{\tempurl}


\bibitem[Kumar et~al\mbox{.}(2006)]%
        {10.1007/11817949_14}
\bibfield{author}{\bibinfo{person}{Viraj Kumar}, \bibinfo{person}{P. Madhusudan}, {and} \bibinfo{person}{Mahesh Viswanathan}.} \bibinfo{year}{2006}\natexlab{}.
\newblock \showarticletitle{Minimization, learning, and conformance testing of boolean programs}. In \bibinfo{booktitle}{\emph{Proceedings of the 17th International Conference on Concurrency Theory}} (Bonn, Germany) \emph{(\bibinfo{series}{CONCUR'06})}. \bibinfo{publisher}{Springer-Verlag}, \bibinfo{address}{Berlin, Heidelberg}, \bibinfo{pages}{203–217}.
\newblock
\showISBNx{3540373764}
\urldef\tempurl%
\url{https://doi.org/10.1007/11817949_14}
\showDOI{\tempurl}


\bibitem[Kumar et~al\mbox{.}(2007)]%
        {kumar2007visibly}
\bibfield{author}{\bibinfo{person}{Viraj Kumar}, \bibinfo{person}{P. Madhusudan}, {and} \bibinfo{person}{Mahesh Viswanathan}.} \bibinfo{year}{2007}\natexlab{}.
\newblock \showarticletitle{Visibly pushdown automata for streaming XML}. In \bibinfo{booktitle}{\emph{Proceedings of the 16th International Conference on World Wide Web}} (Banff, Alberta, Canada) \emph{(\bibinfo{series}{WWW '07})}. \bibinfo{publisher}{Association for Computing Machinery}, \bibinfo{address}{New York, NY, USA}, \bibinfo{pages}{1053–1062}.
\newblock
\showISBNx{9781595936547}
\urldef\tempurl%
\url{https://doi.org/10.1145/1242572.1242714}
\showDOI{\tempurl}


\bibitem[Maler and Pnueli(1991)]%
        {10.5555/114836.114848}
\bibfield{author}{\bibinfo{person}{Oded Maler} {and} \bibinfo{person}{Amir Pnueli}.} \bibinfo{year}{1991}\natexlab{}.
\newblock \showarticletitle{On the learnability of infinitary regular sets}. In \bibinfo{booktitle}{\emph{Proceedings of the Fourth Annual Workshop on Computational Learning Theory}} (Santa Cruz, California, USA) \emph{(\bibinfo{series}{COLT '91})}. \bibinfo{publisher}{Morgan Kaufmann Publishers Inc.}, \bibinfo{address}{San Francisco, CA, USA}, \bibinfo{pages}{128–138}.
\newblock
\showISBNx{1558602135}


\bibitem[Michaliszyn and Otop(2022)]%
        {michaliszyn_et_al:LIPIcs.MFCS.2022.74}
\bibfield{author}{\bibinfo{person}{Jakub Michaliszyn} {and} \bibinfo{person}{Jan Otop}.} \bibinfo{year}{2022}\natexlab{}.
\newblock \showarticletitle{{Learning Deterministic Visibly Pushdown Automata Under Accessible Stack}}. In \bibinfo{booktitle}{\emph{47th International Symposium on Mathematical Foundations of Computer Science (MFCS 2022)}} \emph{(\bibinfo{series}{Leibniz International Proceedings in Informatics (LIPIcs)}, Vol.~\bibinfo{volume}{241})}, \bibfield{editor}{\bibinfo{person}{Stefan Szeider}, \bibinfo{person}{Robert Ganian}, {and} \bibinfo{person}{Alexandra Silva}} (Eds.). \bibinfo{publisher}{Schloss Dagstuhl -- Leibniz-Zentrum f{\"u}r Informatik}, \bibinfo{address}{Dagstuhl, Germany}, \bibinfo{pages}{74:1--74:16}.
\newblock
\showISBNx{978-3-95977-256-3}
\showISSN{1868-8969}
\urldef\tempurl%
\url{https://doi.org/10.4230/LIPIcs.MFCS.2022.74}
\showDOI{\tempurl}


\bibitem[Mozafari et~al\mbox{.}(2010)]%
        {10.14778/1920841.1920865}
\bibfield{author}{\bibinfo{person}{Barzan Mozafari}, \bibinfo{person}{Kai Zeng}, {and} \bibinfo{person}{Carlo Zaniolo}.} \bibinfo{year}{2010}\natexlab{}.
\newblock \showarticletitle{From regular expressions to nested words: unifying languages and query execution for relational and XML sequences}.
\newblock \bibinfo{journal}{\emph{Proc. VLDB Endow.}} \bibinfo{volume}{3}, \bibinfo{number}{1–2} (\bibinfo{date}{sep} \bibinfo{year}{2010}), \bibinfo{pages}{150–161}.
\newblock
\showISSN{2150-8097}
\urldef\tempurl%
\url{https://doi.org/10.14778/1920841.1920865}
\showDOI{\tempurl}


\bibitem[Mozafari et~al\mbox{.}(2012)]%
        {mozafari2012high}
\bibfield{author}{\bibinfo{person}{Barzan Mozafari}, \bibinfo{person}{Kai Zeng}, {and} \bibinfo{person}{Carlo Zaniolo}.} \bibinfo{year}{2012}\natexlab{}.
\newblock \showarticletitle{High-performance complex event processing over XML streams}. In \bibinfo{booktitle}{\emph{Proceedings of the 2012 ACM SIGMOD International Conference on Management of Data}} (Scottsdale, Arizona, USA) \emph{(\bibinfo{series}{SIGMOD '12})}. \bibinfo{publisher}{Association for Computing Machinery}, \bibinfo{address}{New York, NY, USA}, \bibinfo{pages}{253–264}.
\newblock
\showISBNx{9781450312479}
\urldef\tempurl%
\url{https://doi.org/10.1145/2213836.2213866}
\showDOI{\tempurl}


\bibitem[Nguyen and Sudholt(2006)]%
        {nguyen2006vpa}
\bibfield{author}{\bibinfo{person}{Dong~Ha Nguyen} {and} \bibinfo{person}{M. Sudholt}.} \bibinfo{year}{2006}\natexlab{}.
\newblock \showarticletitle{VPA-Based Aspects: Better Support for AOP over Protocols}. In \bibinfo{booktitle}{\emph{Fourth IEEE International Conference on Software Engineering and Formal Methods (SEFM'06)}}. \bibinfo{pages}{167--176}.
\newblock
\urldef\tempurl%
\url{https://doi.org/10.1109/SEFM.2006.39}
\showDOI{\tempurl}


\bibitem[Raffelt et~al\mbox{.}(2005)]%
        {10.1145/1081180.1081189}
\bibfield{author}{\bibinfo{person}{Harald Raffelt}, \bibinfo{person}{Bernhard Steffen}, {and} \bibinfo{person}{Therese Berg}.} \bibinfo{year}{2005}\natexlab{}.
\newblock \showarticletitle{LearnLib: a library for automata learning and experimentation}. In \bibinfo{booktitle}{\emph{Proceedings of the 10th International Workshop on Formal Methods for Industrial Critical Systems}} (Lisbon, Portugal) \emph{(\bibinfo{series}{FMICS '05})}. \bibinfo{publisher}{Association for Computing Machinery}, \bibinfo{address}{New York, NY, USA}, \bibinfo{pages}{62–71}.
\newblock
\showISBNx{1595931481}
\urldef\tempurl%
\url{https://doi.org/10.1145/1081180.1081189}
\showDOI{\tempurl}


\bibitem[Rivest and Schapire(1993)]%
        {rivest1989inference}
\bibfield{author}{\bibinfo{person}{R.L. Rivest} {and} \bibinfo{person}{R.E. Schapire}.} \bibinfo{year}{1993}\natexlab{}.
\newblock \showarticletitle{Inference of Finite Automata Using Homing Sequences}.
\newblock \bibinfo{journal}{\emph{Information and Computation}} \bibinfo{volume}{103}, \bibinfo{number}{2} (\bibinfo{year}{1993}), \bibinfo{pages}{299--347}.
\newblock
\showISSN{0890-5401}
\urldef\tempurl%
\url{https://doi.org/10.1006/inco.1993.1021}
\showDOI{\tempurl}


\bibitem[Thomo and Venkatesh(2011)]%
        {thomo2008rewriting}
\bibfield{author}{\bibinfo{person}{A. Thomo} {and} \bibinfo{person}{S. Venkatesh}.} \bibinfo{year}{2011}\natexlab{}.
\newblock \showarticletitle{Rewriting of visibly pushdown languages for XML data integration}.
\newblock \bibinfo{journal}{\emph{Theoretical Computer Science}} \bibinfo{volume}{412}, \bibinfo{number}{39} (\bibinfo{year}{2011}), \bibinfo{pages}{5285--5297}.
\newblock
\showISSN{0304-3975}
\urldef\tempurl%
\url{https://doi.org/10.1016/j.tcs.2011.05.047}
\showDOI{\tempurl}


\bibitem[Vasilevskii(1973)]%
        {vasilevskii_failure_1973}
\bibfield{author}{\bibinfo{person}{M.~P. Vasilevskii}.} \bibinfo{year}{1973}\natexlab{}.
\newblock \showarticletitle{Failure diagnosis of automata}.
\newblock \bibinfo{journal}{\emph{Cybernetics}} \bibinfo{volume}{9}, \bibinfo{number}{4} (\bibinfo{date}{July} \bibinfo{year}{1973}), \bibinfo{pages}{653--665}.
\newblock
\showISSN{1573-8337}
\urldef\tempurl%
\url{https://doi.org/10.1007/BF01068590}
\showDOI{\tempurl}


\bibitem[Wu et~al\mbox{.}(2019)]%
        {10.1145/3338906.3338958}
\bibfield{author}{\bibinfo{person}{Zhengkai Wu}, \bibinfo{person}{Evan Johnson}, \bibinfo{person}{Wei Yang}, \bibinfo{person}{Osbert Bastani}, \bibinfo{person}{Dawn Song}, \bibinfo{person}{Jian Peng}, {and} \bibinfo{person}{Tao Xie}.} \bibinfo{year}{2019}\natexlab{}.
\newblock \showarticletitle{REINAM: reinforcement learning for input-grammar inference}. In \bibinfo{booktitle}{\emph{Proceedings of the 2019 27th ACM Joint Meeting on European Software Engineering Conference and Symposium on the Foundations of Software Engineering}} (Tallinn, Estonia) \emph{(\bibinfo{series}{ESEC/FSE 2019})}. \bibinfo{publisher}{Association for Computing Machinery}, \bibinfo{address}{New York, NY, USA}, \bibinfo{pages}{488–498}.
\newblock
\showISBNx{9781450355728}
\urldef\tempurl%
\url{https://doi.org/10.1145/3338906.3338958}
\showDOI{\tempurl}


\end{thebibliography}

\end{document}